\titleformat*{\section}{\Large \normalfont \bfseries}
\titleformat*{\subsection}{\large \normalfont \bfseries}
\newcommand{\nc}{\newcommand}
\nc{\be}{\begin{equation}} \nc{\ee}{\end{equation}}
\nc{\ba}{\begin{array}} \nc{\ea}{\end{array}}
\nc{\bea}{\begin{eqnarray}} \nc{\eea}{\end{eqnarray}}
\nc{\ny}{\nonumber}
\nc{\ra}{\rangle}
\nc{\la}{\langle}
\nc{\lb}{\left(}
\nc{\rb}{\right)}
\nc{\pt}{\partial}
\nc{\D}{\Delta}
\nc{\kr}{\mathcal}
\nc{\lmb}{\lambda}
\nc{\sg}{\sigma}
\nc{\ep}{\epsilon}
\nc{\ora}{\overrightarrow}
\nc{\td}{\widetilde}
\nc{\tq}{\theta}
\nc{\seteq}{\mathbin{:=}}
\nc{\simto}{\xrightarrow{\,\sim\,}}
\nc{\Z}{{\mathbb Z}}
\nc{\NSR}{\textsf{NSR}}
\nc{\sfF}{\textsf{F}}
\nc{\Vir}{\textsf{Vir}}
\nc{\Asl}{\widehat{\mathfrak{sl}}}
\nc{\rmF} {\mathrm{F}}
\nc{\calM}{\mathcal{M}}
\nc{\calA}{\mathcal{A}}
\nc{\calF}{\mathcal{F}}
\nc{\calL}{\mathcal{L}}
\nc{\calU}{\mathcal{U}}
\nc{\calE}{\mathcal{E}}
\nc{\calV}{\mathcal{V}}
\nc{\calH}{\mathcal{H}}
\nc{\sfV}{\textsf{V}}
\nc{\tr}{\mathrm{Tr}}
\nc{\red}{\color{red}}
\nc{\NS}{\scriptscriptstyle{\textsf{NS}}}
\nc{\R}{\scriptscriptstyle{\textsf{R}}}
\nc{\nsr}{\scriptscriptstyle{\textsf{NSR}}}
\nc{\vac}{\varnothing}
\nc{\pure}{\mathrm{pure}}
\nc{\mf}{\mathfrak}
\nc{\q}{\color{red}}
\nc{\G}{\mathsf{G}}
\nc{\BNS}{\textsf{NS}}
\nc{\BR}{\textsf{R}}
\nc{\Q}{\mathbf{Q}}
\numberwithin{equation}{section}
\newtheorem{thm}{Theorem}[section]
\newtheorem{prop}{Proposition}[section]
\newtheorem{lemma}{Lemma}[section]
\newtheorem{conj}{Conjecture}[section]
\newtheorem{Remark}{Remark}[section]
\title{B\"acklund transformation of Painlev\'e III($D_8$) $\tau$ function}
\date{}
\author{M.~A.~Bershtein, A.~I.~Shchechkin}
\begin{document}

%\begin{titlepage}
%{\Large{\center
%  {\bf National Research University\\[0.5cm]
%  Higher School of Economics\\[2cm]
%  
%  Department of mathematics\\[2.5cm]
%  
%  Coursework}\\[1.5cm]
% 
%  Theme: Bilinear relations on q-deformed conformal blocks
%  }\\[4cm]
%
%  {\flushleft
%  \leftskip=10cm
%   Author\\
%  Anton Shchechkin\\[1cm]
%  Supervisor \\
%  Dr. Mikhail Bershtein\\[5cm]}

%{\center Moscow, 2015\\}}
%\end{titlepage}
%\newpage
\maketitle
\begin{abstract}\vspace*{2pt}
We study explicit formula (suggested by Gamayun, Iorgov, Lisovyy) for Painlev\'e III($D_8$) $\tau$~function in terms of Virasoro conformal blocks with central charge $1$. The Painlev\'e equation has two types of bilinear forms, we call them Toda-like and Okamoto-like. We obtain these equations from the representation theory using an embedding of a direct sum of two Virasoro algebra in a certain superalgebra. These two types of bilinear forms correspond to Neveu-Schwarz sector and Ramond sector of this algebra. We also obtain $\tau$ functions of algebraic solutions of Painlev\'e III($D_8$) from the special representations of the Virasoro algebra of highest weight $(n+1/4)^2$.
\end{abstract}

\tableofcontents

\newpage

\section{Introduction}
This paper is a sequel to \cite{KGP}. We continue our study of the relation between Painlev\'e equations and conformal field theory. 
In this paper we restrict ourselves to the most degenerate case of Painlev\'e III equation.
This equation has different names: it is called Painlev\'e III($D_8$) equation in the geometric approach (see e.g. \cite{OKSO}),
it is also called Painlev\'e III$_3$ equation (see  e.g. \cite{GIL1302}) and it is also equivalent to radial sine-Gordon equation (see e.g. \cite{Fokas}).

Gamayun, Iorgov, Lisovyy in the paper \cite{GIL1302} (following their previous work \cite{GIL1207}) suggested that $\tau$ function of this equation has the form 
\begin{equation}
\tau(\sigma,s|z)=\sum_{n\in\mathbb{Z}}C(\sigma+n)s^n  \mathcal{F}((\sigma+n)^2|z), \label{eq:GIL}
\end{equation}
where $s,\sigma$ are integration constants, the function $\mathcal{F}(\Delta|z)$ denotes Whittaker limit of Virasoro conformal block in module
with highest weight $\Delta$ and central charge $c=1$ and the function $C(\sigma)=1/(\G(1-2\sigma)\G(1+2\sigma))$,
where $\mathsf{G}$ is a Barnes $\mathsf{G}$-function. This formula was proven in \cite{ILT} and \cite{KGP} by different methods.

On the other hand it is known that this Painlev\'e equation has a B\"acklund transformation $\pi$ of order two.
The main topic of this paper is a relation between decomposition \eqref{eq:GIL} and this B\"acklund transformation.
Our initial motivation was a $q$-deformation of formula \eqref{eq:GIL}, corresponding results are reported in the separate paper \cite{BS:2016:1}.

Let us discuss the content of the paper. In Section \ref{sec:Painleve} we recall  necessary definitions and notations of Painlev\'e III($D_8$) equation,
its $\tau$ function and B\"acklund transformation. Relation between $\tau$ and transformed $\tau_1=\pi(\tau)$ is written in form of bilinear relations.
We have two types of bilinear relations, namely  Okamoto-like and Toda-like 
\begin{equation}
\left\{\begin{aligned} &D_{[\log z]}^2(\tau,\tau_1)-\frac12\left(z\frac{d}{dz}-\frac18\right)(\tau\tau_1)=0,
 \\
 &D_{[\log z]}^3(\tau,\tau_1)-\frac12\left(z\frac{d}{dz}-\frac18\right)D^1_{[\log z]}(\tau,\tau_1)=0,
 \end{aligned}\right.\qquad\qquad
\left\{\begin{aligned}    
D^2_{[\log z]}(\tau,\tau)=2z^{1/2}\tau_1^2, \\
D^2_{[\log z]}(\tau_1,\tau_1)=2z^{1/2}\tau^2, 
\end{aligned}\right. \label{eq:bilin}
\end{equation}
where $D_{[\log z]}^k$ are Hirota differential operators \eqref{Hiru} (see Propositions \ref{prop:Okamoto} and \ref{prop:Toda}). We obtain these equations by a standard in the Painlev\'e theory method,
the only difference is the fact that $\pi$ has order 2 instead of infinite order B\"acklund transformations used usually. We also prove converse statement i.e. to what extent equations \eqref{eq:bilin} determine  Painlev\'e III($D_8$) $\tau$ function.

In Section \ref{sec:Kiev} we recall necessary notations from the representation theory of the Virasoro algebra in order to state formula \eqref{eq:GIL}.
Then we show that 
\begin{equation}
\tau_1(\sg,s|z)\propto\tau(\sg-1/2,s|z),
\label{eq:BackTau}
\end{equation}
where $\propto$ stands for constant (with respect to $z$) proportionality. In the Section \ref{sec:FxNSR} we discuss interpretation of bilinear equations \eqref{eq:bilin} (on $\tau $ function \eqref{eq:GIL}) in the framework of  representation theory of the Virasoro algebra. As in the paper \cite{KGP} the main tool is the embedding $\Vir\oplus \Vir \subset \sfF\oplus\NSR$ of a direct sum of two Virasoro algebras
into a sum of Majorana fermion and Super Virasoro algebra. 
Actually using Neveu--Schwarz sector of the algebra $\sfF\oplus\NSR$ we already proved (up to some details discussed in Subsection \ref{ssec:Todaobt})
in \cite{KGP}  that right side of \eqref{eq:GIL} satisfies Toda-like equations. This proves formula \eqref{eq:GIL}, actually this is slight simplification of the proof in \cite{KGP} where we used
another bilinear equation of order 4. In Subsection \ref{ssec:Okamotoobt} we show that Ramond sector of the algebra  $\sfF\oplus\NSR$ gives
Okamoto-like equations for right side of \eqref{eq:GIL}. 
%, this is one of the results of the paper.

Painlev\'e III($D_8$) equation has two algebraic solutions. Corresponding $\tau$ functions have the form $\tau(z)\propto z^{1/16}e^{\mp 4\sqrt{z}}$. 
In Subsection \ref{ssec:algsol} we give interpretation of these $\tau$ functions in the framework of representation theory of Virasoro algebra.
Namely these $\tau$ functions correspond to special representations of highest weights $(n+1/4)^2,\,n\in\mathbb{Z}$, studied by Al. Zamolodchikov in \cite{Zam:1987}.
Note also that in this case $\tau$ function coincides with special case of dual partition function introduced by Nekrasov and Okounkov in \cite{NO:2003}.

In this paper all continuous variables are considered to belong to the field $\mathbb{C}$ unless otherwise stated. All representations and algebras are considered over the field $\mathbb{C}$.

\section{Painlev\'e III($D_8$) equation}
\label{sec:Painleve}
\subsection{Hamiltonian and $\tau$ form of Painlev\'e III($D_8$) equation}
\label{ssec:Ham}

We recall several facts about one of the simplest Painlev\'e equation --- Painlev\'e III($D_8$) (or Painlev\'e III$_3$) 
following \cite{GIL1302}, \cite{OKSO}. 

The Painlev\'e III($D_8$) equation on function $w(z)$ has the form
\begin{equation}\label{piii}
\frac{d^2w}{dz^2}=\frac{1}{w}\left(\frac{dw}{dz}\right)^2 -
 \frac{1}{z}\frac{dw}{dz}\,+\frac{2w^2}{z^2}-\frac{2}{z}.
 \end{equation}
 Note that in work \cite{OKSO} rescaled $w$ and $z$ are used.

 We now proceed to the Hamiltonian (or $\zeta$) form of Painlev\'e III($D_8$). 
 The Painlev\'e equations can be rewritten as non-autonomous Hamiltonian systems.
 It means that they can be obtained by eliminating an auxiliary momentum $p(z)$ from the equations
\[
\frac{dw}{dz}=\frac{\partial H}{\partial p},\qquad \frac{dp}{dz}=-\frac{\partial H}{\partial w}
\]
where Hamiltonian $H(z)$ for the Painlev\'e III($D_8$) equation has the form
\begin{align}
\zeta=zH=&p^2w^2-w-z/w  \label{ham}.
\end{align}
It is also convenient to use the function $\zeta(z)=zH(z)$, which is just Hamiltonian with respect to the time $\log z$. 
We will below denote by dot differentiation by $z$ and by prime differentiation by $\log z$.
Hamilton equations in terms of $p$, $w$ read
\begin{equation}
\dot{w}=2pw^2/z, \qquad \dot{p}=-2p^2w/z+1/z-1/w^2 \label{Hameq}.  
\end{equation}

Remark that if we know function $\zeta(z)$ on trajectories of motion then we can find $w(z)$ and $p(z)$. Differentiating \eqref{ham}
once and twice and using Hamilton equations to differentiate $p(z)$ and $w(z)$ we could express these functions by formulas
\begin{equation}
w(z)=-\frac{1}{\dot\zeta(z)}, \quad p(z)=\frac{z\ddot\zeta(z)}2. \label{conn}
\end{equation}
Substituting these expressions into \eqref{ham} we get Hamiltonian (or $\zeta$) form of Painlev\'e III($D_8$) equation
\begin{equation}
(z\ddot\zeta(z))^2=4\dot\zeta(z)	^2(\zeta(z)-z\dot\zeta(z))-4\dot\zeta(z) \label{zeta3}.
\end{equation}
In this paper we will consider solutions of this equation except constant $\zeta(z)$ and $\zeta(z)=z+1$ (these are only solution such that  $\ddot{\zeta}= 0$).

Then it could be checked directly that each solution of \eqref{zeta3} corresponds by first formula of \eqref{conn} to solution of \eqref{piii}.
Inversely solution $w(z)$ of \eqref{piii} give us $p(z)$ by first formula of \eqref{Hameq} and then $\zeta(z)$ given by \eqref{ham} satisfy \eqref{zeta3}.
So we have one-to-one correspondence between solutions $w(z)$ of \eqref{piii} and $\zeta(z)$ of \eqref{zeta3}.

\begin{Remark}\label{rem:sinh}
Painlev\'e III($D_8$) equation appears in physical framework for instance as radial sine-Gordon equation on function $v(r)$
(see e.g. \cite[Chapter 3]{Fokas})
\begin{equation}
v_{rr}+\frac{v_r}{r}=1/2\sin 2v\label{eq:rshG}
\end{equation}
One can proceed to this equation from \eqref{piii} by substituting
\begin{equation*}
w(z)/\sqrt{z}= e^{2iv}, \quad z=r^4/4096 
\end{equation*}
\end{Remark}
Let us introduce $\tau$ function by formula
\begin{equation}
\zeta(z)=z\frac{d \log\tau(z)}{dz} \quad \textrm{and inverse} \quad \tau=\exp\left(\int\zeta(z) d\log z\right) .\label{tauzeta}
\end{equation}
Note that $\tau$ function is defined up to a multiplication by a constant factor.

One can obtain equation on $\tau$ function from \eqref{zeta3}.
Differentiate \eqref{zeta3} by $z$ and divide the result by $\ddot\zeta(z)$. Substituting first formula of \eqref{tauzeta} and multiplying by $\tau^2$ 
we obtain bilinear equation on $\tau$ function. It is convenient to write this equation
by use of Hirota differential operators $D^k_{[x]}$. In our paper we use only Hirota derivatives with respect to the logarithm of a variable.
These operators on the functions $f(z),g(z)$ are defined by the formula
\begin{equation}
f(e^{\alpha}z)g(e^{-\alpha}z)=\sum\limits_{k=0}^{\infty}D^{k}_{[\log z]}(f(z),g(z))\frac{\alpha^k}{k!}.\label{Hiru}
\end{equation}
The first examples of Hirota operators are
\begin{equation*}
D^0_{[\log z]}(f(z),g(z))=f(z)g(z),\qquad D^1_{[\log z]}(f(z),g(z))=z\dot{f}(z)g(z)-f(z)z\dot{g}(z). 
\end{equation*}
Then, the $\tau$ form of the Painlev\'e III($D_8$) equation can be written as
\begin{equation}
D^{III}(\tau(z),\tau(z))=0,\quad \text{where} \quad D^{III}=
\frac12 D^{4}_{[\log z]}-z\frac{d}{dz}D^{2}_{[\log z]}+\frac12D^{2}_{[\log z]}+2zD^{0}_{[\log z]} \label{tau3}.
\end{equation}
Because we differentiate \eqref{zeta3} to obtain \eqref{tau3} we have extra solutions of \eqref{tau3}.
More precisely, \eqref{tau3} is equivalent to so-called Painlev\'e III($D_7$) equation
\begin{equation*}
(z\ddot\zeta(z))^2=4(\dot\zeta(z))^2(\zeta(z)-z\dot\zeta(z))-4\dot\zeta(z)+1/\theta_{*} 
\end{equation*}
In this work we will consider only that solutions of \eqref{tau3} which correspond to case $\theta_{*}=\infty$ i.e. Painlev\'e III($D_8$)
in form \eqref{zeta3}. 
These solutions could be distinguished by the asymptotic behavior of $\tau$ function.

The following proposition follows from results proven in \cite{Niles}, (see also book \cite{Fokas} and original papers \cite{McCoy:1977} \cite{Jimboas}, \cite{Nov1985} \footnote{we are grateful to A. Its for the explanation of this point and help with the references}.)
\begin{prop}\label{Jimboas}
There exists two-parametric family of solutions of the equation \eqref{zeta3} such that the asymptotic behavior of corresponding $w(z)$ 
and $\tau(z)$ for $z \rightarrow 0$ are given by 
\begin{align}
w(\sg,\td{s}|z)&=4\sg^2\td{s}z^{2\sg}(1+o(1)),&\label{was}\\
\tau(\sg,\td{s}|z)&\propto z^{\sg^2}\left(1+\frac{z}{2\sg^2}-\frac{\td{s}^{-1}}{(1-2\sg)^2(2\sg)^2}z^{1-2\sg}+o(|z|)\right),&  \label{tauas}
\end{align}
where $\propto$ means constant (with respect to $z$) proportionality and $\sg,\td{s}$ are integration constants which belong to the domain
$0<\operatorname{Re} \sg<1/2,\, \td{s}\neq 0$. Moreover, any solution of \eqref{zeta3} with such asymptotics
belongs to this family and for given $\sg$, $\td{s}$ it is unique. 
\end{prop}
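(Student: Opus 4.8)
The plan is to work entirely with the $\zeta$-form \eqref{zeta3} and to recover the asymptotics of $w$ and $\tau$ at the end from the relations \eqref{conn} and \eqref{tauzeta}. Since \eqref{zeta3} is a second order ODE, its solutions form a two-parameter family, and the substance of the proposition is to coordinatize by $(\sg,\td s)$ the subfamily that approaches a constant as $z\to0$. First I would seek solutions with $\zeta(z)\to\sg^2$ and substitute the formal expansion
\[
\zeta(z)=\sg^2+\sum_{\substack{m,n\ge0\\(m,n)\neq(0,0)}}c_{m,n}\,z^{\,m+n(1-2\sg)}
\]
into \eqref{zeta3}. A short computation shows that the leading balance falls at order $z^{-4\sg}$ and is satisfied identically, so neither $\sg$ nor the coefficient $c_{0,1}$ of the mode $z^{1-2\sg}$ is constrained at this order: these are the two free integration constants, with $c_{0,1}$ encoding $\td s$.

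The first order carrying information is $z^{-2\sg}$: matching the coefficients there forces $c_{1,0}=1/(2\sg^2)$, independently of $c_{0,1}$, which reproduces the coefficient of $z$ in \eqref{tauas}. All remaining $c_{m,n}$ are then determined recursively and uniquely in terms of $\sg$ and $c_{0,1}$, each step of the recursion being a (generically non-degenerate) linear equation for the new coefficient. Translating back through $w=-1/\dot\zeta$ and $\zeta=z\,d\log\tau/dz$ yields \eqref{was} and the $z^{1-2\sg}$ term of \eqref{tauas}, and fixes the normalization identifying $c_{0,1}$ with $\td s$.

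To upgrade the formal series to a genuine solution I would isolate $\ddot\zeta$ in \eqref{zeta3}, rewrite the equation as a fixed-point (integral) equation on a space of functions of the form $\sg^2+z^{1-2\sg}(\text{bounded})+\dots$ on a small sector $\{0<|z|<\delta,\ |\arg z|<\theta\}$, and show that the associated operator is a contraction there. This gives simultaneously the existence of an analytic solution with the prescribed asymptotic for every admissible $(\sg,\td s)$ and local uniqueness, since two solutions with the same $(\sg,\td s)$ have a difference controlled by the contraction estimate and hence coincide. The hypothesis $0<\operatorname{Re}\sg<1/2$ guarantees $\operatorname{Re}(2\sg)>0$ and $\operatorname{Re}(1-2\sg)>0$, so both correction modes are genuinely subleading and the sector is stable under the iteration.

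The main obstacle is the degeneracy of the linearization of \eqref{zeta3} about the constant solution $\zeta\equiv\sg^2$: the leading balance is satisfied identically and the linearized problem has a repeated indicial exponent (in the variable $w$ the perturbation obeys $(\mu-2\sg)^2=0$). The two genuine deformation directions are $\partial/\partial\td s$, giving the mode $z^{2\sg}$, and $\partial/\partial\sg$, which differentiates the exponent and produces a logarithm; prescribing the leading constant $\sg^2$ removes the logarithmic direction, and one must check that the resulting series is log-free, the only delicate point being resonant values of $\sg$ for which $1-2\sg$ is rationally related to $1$ (the generic case then extends by continuity in $\sg$). The second, more global half of the statement — that \emph{every} solution with this $z\to0$ behavior lies in the family — is obtained most cleanly not from the local analysis but from the isomonodromic description of Painlev\'e III($D_8$): its solutions are parametrized by the monodromy data of the associated linear system, and the $z\to0$ connection results of \cite{McCoy:1977,Jimboas,Nov1985,Niles} match $(\sg,\td s)$ to that data bijectively. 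I would therefore supply the explicit coefficients and local uniqueness by the direct computation above, and invoke these references for the completeness part.
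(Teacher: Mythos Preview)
The paper does not prove this proposition at all: it simply states that it ``follows from results proven in \cite{Niles}'' and points to \cite{Fokas,McCoy:1977,Jimboas,Nov1985} for further background. So there is no in-paper argument to compare your sketch against.

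Your outline is a reasonable roadmap for a direct proof, and your leading-balance check at order $z^{-4\sg}$ and the determination $c_{1,0}=1/(2\sg^2)$ agree with the expansion the authors later use in Subsection~\ref{ssec:BTParam}. But be aware that the references the paper cites (especially \cite{Niles}) obtain both existence \emph{and} completeness via the Riemann--Hilbert/isomonodromic machinery, not by a formal-series-plus-contraction argument; the connection formulae there identify $(\sg,\td s)$ with monodromy data of the associated linear system, which is exactly what you invoke for the ``every solution with this asymptotic'' half. Your local analysis therefore duplicates, in a more elementary but less complete form, what those references already deliver. The genuinely delicate points you flag --- the degenerate linearization about $\zeta\equiv\sg^2$, the absence of logarithms for generic $\sg$, and the contraction estimate on a sector --- are real obstacles that would each require nontrivial work to close; since the paper is content to cite the literature, your added effort buys self-containedness but is not needed for the paper's purposes.
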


It was also proven in \cite{Niles} that solutions which do not belong to this family can be paramerized by less number of parameters (3 real numbers). Therefore, one can think about the family from Proposition \ref{Jimboas} is a family of generic solutions. 

%This fact is stated and proved in \cite{Niles}.
%We have not found reference where such fact proposed. Presumably it is analogous to the Theorems 1.1 and 3.1 in \cite{Jimboas}.

%One of the aims of this work is investigation instead of \eqref{tau3} equivalent but more simple form of equations on $\tau$ functions.
%This will be done in next subsection.

\subsection{B\"acklund transformation, Okamoto-like and Toda-like equations}
\label{ssec:OT}
The group of B\"acklund transformations of Painlev\'e III($D_8$) equation is $\mathbb{Z}_2$ (see \cite[Sec. 2.3]{OKSO}).
This group is generated by transformation $\pi$ which acts on solutions of Painlev\'e III($D_8$) by formula 
\begin{equation}
z\mapsto z, \qquad w\mapsto w_1=z/w, \qquad p\mapsto p_1=-\frac{w(2wp-1)}{2z}. \label{Backltransf}
\end{equation}
By \eqref{ham} this transformation leads to transformation of $\zeta(z)$.
Variables after transformation we will mark by subscript ${}_1$. 
%Notation ${}_{(1)}$ means that formula is valid no matter of presence of subscript.
We have two useful formulas for the transformation of function $\zeta(z)$
\begin{equation}
\zeta_1=\zeta-pw+1/4, \qquad \zeta'\zeta_1'=z\label{profd}
\end{equation}
which follow from \eqref{ham} and \eqref{conn} respectively.

 In terms of sine-Gordon equation (see Remark \ref{rem:sinh}) B\"acklund transformation is just $v\mapsto -v$. 

\begin{prop}
\label{prop:Okamoto}
\begin{enumerate}
 \item[(i)] Consider a solution $\zeta(z)$ of \eqref{zeta3}, its B\"acklund transformation $\zeta_1(z)$ and 
 functions $\tau(z)$ and $\tau_1(z)$ corresponding to $\zeta(z)$ and $\zeta_1(z)$ by \eqref{tauzeta}.
 Then the functions $\tau(z)$ and $\tau_1(z)$ satisfy equations
 \begin{eqnarray}
  &D_{[\log z]}^2(\tau,\tau_1)-\frac12\left(z\frac{d}{dz}-\frac18\right)(\tau\tau_1)=0 \label{d8eq_1},
  \\
   &D_{[\log z]}^3(\tau,\tau_1)-\frac12\left(z\frac{d}{dz}-\frac18\right)D^1_{[\log z]}(\tau,\tau_1)=0. \label{d8eq_2}
   \end{eqnarray} 
   \item[(ii)] 
   Conversely, consider functions $\tau(z)$ and $\tau_1(z)$ satisfying \eqref{d8eq_1}, \eqref{d8eq_2} 
    and functions $\zeta(z)$ and $\zeta_1(z)$ corresponding to $\tau(z)$ and $\tau_1(z)$ by \eqref{tauzeta} and $\ddot{\zeta}(z)\neq 0$, $\ddot{\zeta_1}(z)\neq 0$.
  Then there exists $D\neq 0$ such that functions $\zeta(z/D), \zeta_1(z/D)$, satisfy equations \eqref{zeta3}
  and $\pi(\zeta(z/D))=\zeta_1(z/D)$.
\end{enumerate}
\end{prop}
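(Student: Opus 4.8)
The plan is to translate both bilinear relations into identities for the Hamiltonians $\zeta=(\log\tau)'$ and $\zeta_1=(\log\tau_1)'$ (prime denotes $z\,d/dz=d/d\log z$) and then use the two transformation formulas \eqref{profd}. Writing $\tau=e^{\phi}$, $\tau_1=e^{\phi_1}$ with $\phi'=\zeta$, $\phi_1'=\zeta_1$, a direct expansion of the generating function \eqref{Hiru} gives
\begin{align*}
\frac{D^1_{[\log z]}(\tau,\tau_1)}{\tau\tau_1}&=\zeta-\zeta_1,\\
\frac{D^2_{[\log z]}(\tau,\tau_1)}{\tau\tau_1}&=\zeta'+\zeta_1'+(\zeta-\zeta_1)^2,\\
\frac{D^3_{[\log z]}(\tau,\tau_1)}{\tau\tau_1}&=\zeta''-\zeta_1''+3(\zeta-\zeta_1)(\zeta'+\zeta_1')+(\zeta-\zeta_1)^3 .
\end{align*}
Using $z\frac{d}{dz}(\tau\tau_1)=(\zeta+\zeta_1)\tau\tau_1$, division of \eqref{d8eq_1} by $\tau\tau_1$ turns it into
\begin{equation*}
\zeta'+\zeta_1'+(\zeta-\zeta_1)^2-\tfrac12(\zeta+\zeta_1)+\tfrac1{16}=0,\tag{O1}
\end{equation*}
and division of \eqref{d8eq_2} by $\tau\tau_1$, after eliminating $\zeta+\zeta_1$ with the help of (O1), collapses it to
\begin{equation*}
(\zeta'-\zeta_1')'+2(\zeta-\zeta_1)(\zeta'+\zeta_1')-\tfrac12(\zeta'-\zeta_1')=0.\tag{O2}
\end{equation*}

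For part (i) I would record the two consequences of \eqref{profd} and \eqref{conn} that do the work. From $w=-z/\zeta'$ and $p=(\zeta''-\zeta')/(2z)$ one gets $pw=\tfrac12-\zeta''/(2\zeta')$, so the first formula of \eqref{profd} reads $\zeta_1=\zeta-\tfrac14+\zeta''/(2\zeta')$, while the second reads $\zeta_1'=z/\zeta'$. Rewriting \eqref{zeta3} in the prime notation gives Painlev\'e~III($D_8$) in the form $(\zeta''-\zeta')^2=4\zeta'^2(\zeta-\zeta')-4z\zeta'$. Substituting the two transformation relations into (O1) and clearing the factor $4\zeta'^2$ reproduces exactly this form of the equation, so (O1) holds; substituting them into (O2) makes every term cancel identically (Painlev\'e~III is not even needed there). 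This establishes \eqref{d8eq_1} and \eqref{d8eq_2}.

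Part (ii) is the converse, and is where the rescaling by $D$ enters. Starting from (O1) and (O2), I would differentiate (O1) and then add and subtract it from (O2): the sum gives $\zeta''/\zeta'=\tfrac12-2(\zeta-\zeta_1)$, i.e.\ $\zeta_1=\zeta-\tfrac14+\zeta''/(2\zeta')$, and the difference gives the symmetric relation $\zeta_1''/\zeta_1'=\tfrac12+2(\zeta-\zeta_1)$. Adding these two yields $(\log(\zeta'\zeta_1'))'=1$, hence $\zeta'\zeta_1'=Cz$ for a nonzero constant $C$. This is the only place where the normalization of \eqref{zeta3} is lost, and it is repaired by setting $D=C$ and passing to $\hat\zeta(z)=\zeta(z/D)$, $\hat\zeta_1(z)=\zeta_1(z/D)$, for which $\hat\zeta'\hat\zeta_1'=z$ because the operator $z\,d/dz$ is invariant under $z\mapsto z/D$. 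Feeding $\hat\zeta_1=\hat\zeta-\tfrac14+\hat\zeta''/(2\hat\zeta')$ together with $\hat\zeta_1'=z/\hat\zeta'$ back into (O1) reproduces \eqref{zeta3} for $\hat\zeta$ (and, by the symmetric relation, for $\hat\zeta_1$), and finally the first formula of \eqref{profd}, using $pw=\tfrac12-\hat\zeta''/(2\hat\zeta')$, identifies $\hat\zeta_1$ with $\pi(\hat\zeta)$.

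The routine but bookkeeping-heavy steps are the Hirota expansions above and the clearing of denominators; the conceptual point, and the step I expect to be the main obstacle, is the converse direction: recognizing that (O1)--(O2) pin down the transformation only up to the scalar $C$ in $\zeta'\zeta_1'=Cz$, so that Painlev\'e~III is recovered only after the rescaling $z\mapsto z/D$. I would also keep track of the nondegeneracy hypotheses $\ddot{\zeta}\neq0$, $\ddot{\zeta_1}\neq0$ (equivalently $\zeta''\neq\zeta'$, $\zeta_1''\neq\zeta_1'$) together with $\zeta'\neq0$, which are exactly what is needed to divide by $\zeta'$ and to invert \eqref{conn}.
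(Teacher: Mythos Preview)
Your proof is correct and follows essentially the same route as the paper's: both rewrite the Hirota bilinears in terms of $\zeta,\zeta_1$ (your (O1), (O2) are exactly the paper's \eqref{prezeta} and its unlabeled companion), then for part (ii) differentiate (O1), add and subtract (O2) to obtain $\zeta''/\zeta'+\zeta_1''/\zeta_1'=1$, integrate to $\zeta'\zeta_1'=Dz$, and substitute back into (O1) to recover \eqref{zeta3}. Your presentation is slightly more streamlined---you work entirely in the $\zeta$-variables and identify $\hat\zeta_1=\pi(\hat\zeta)$ directly via the first formula of \eqref{profd} rather than via the paper's constancy argument---but the method is identical.
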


Usually bilinear equations in Painlev\'e theory are obtained using infinite order B\"acklund transformation (see e.g. \cite{OKSO}, \cite{OkR}). We follow this approach using $\pi$ of order $2$ for Painlev\'e III($D_8$) that's why
we will call equations \eqref{d8eq_1}, \eqref{d8eq_2} "Okamoto-like equations".

Equation \eqref{d8eq_1} is symmetric under transposition $\tau \leftrightarrow \tau_1$ and equation \eqref{d8eq_2} is skew-symmetric under
this transposition. This is natural since $\pi^2=1$.

\begin{Remark} 
Okamoto-like equations \eqref{d8eq_1}, \eqref{d8eq_2} have symmetry in rescaling $z$, i.e. if
$\tau(z)$, $\tau_1(z)$ is a solution then $\tau(Dz)$, $\tau_1(Dz)$, $D\neq 0$ is also a solution.
In other words any solution of \eqref{d8eq_1}, \eqref{d8eq_2} can be obtained from a Painlev\'e $\tau$ function using such rescaling and a solution with asymptotic behavior \eqref{tauas} corresponds to the value $D=1$
\end{Remark}

\begin{proof} \textit{{(i)}}
 %Connection with $\tau$ functions is 
 %\begin{equation}
 %zH=z\frac{d}{dz}\log \tau, \quad
 %zH_1=z\frac{d}{dz}\log \tau_1.
 %\end{equation}
 %Let us introduce function $A$ given by first equation of \eqref{profd}
 %\begin{equation}
 %A=\zeta-\zeta_1=pw-\frac14.\label{A}
 %\end{equation}
From \eqref{conn} we have 
 \begin{equation}
 z\frac{d}{dz}\zeta=-\frac{z}{w},\qquad z\frac{d}{dz}\zeta_1=-w.\label{temp1}
 \end{equation}
 One could check that simple algebraic identities hold
 \begin{eqnarray}
 \zeta-\zeta_1&=&z\frac{d}{dz}\log \frac{\tau}{\tau_1}=\frac{D^1_{[\log z]}(\tau,\tau_1)}{\tau\tau_1}\label{eq:A},
 \\
 z\frac{d}{dz}(\zeta+\zeta_1)&=&\frac{D^2_{[\log z]}(\tau,\tau_1)}{\tau\tau_1}-\left(\frac{D^1_{[\log z]}(\tau,\tau_1)}{\tau\tau_1}\right)^2,\label{aid2}\\
 \left(z\frac{d}{dz}\right)^2 (\zeta-\zeta_1)&=&\frac{D^3_{[\log z]}(\tau,\tau_1)}{\tau\tau_1}-
 3\frac{D^2_{[\log z]}(\tau,\tau_1)}{\tau\tau_1}\frac{D^1_{[\log z]}(\tau,\tau_1)}{\tau\tau_1}+2\left(\frac{D_{[\log z]}(\tau,\tau_1)}{\tau\tau_1}\right),\label{aid3}
 \end{eqnarray}
 Using \eqref{aid2}, \eqref{profd} and \eqref{temp1} we have
 \begin{equation*}
 \frac{D^2_{[\log z]}(\tau,\tau_1)}{\tau\tau_1}-(\zeta-\zeta_1)^2=z\frac{d}{dz}(\zeta+\zeta_1)=-w-z/w=\zeta-p^2w^2=\zeta-\left(\zeta-\zeta_1+\frac14\right)^2,
 \end{equation*}
 where we used \eqref{temp1} in second equality, \eqref{ham} in third equality and first equation of \eqref{profd} in fourth.
 Using \eqref{eq:A} we finally obtain \eqref{d8eq_1}.
 %\begin{equation}
  % D_{[\log z]}^2(\tau,\tau_1)-\frac12\left(z\frac{d}{dz}-\frac{1}{8}\right)\tau\tau_1=0.\ny
   %\end{equation}
 
 Analogously we could obtain \eqref{d8eq_2}. From \eqref{aid3} we have
 \begin{equation*}
 \begin{aligned}
 \frac{D_{[\log z]}^3(\tau,\tau_1)}{\tau\tau_1}-3\frac{D_{[\log z]}^2(\tau,\tau_1)D^1_{[\log z]}(\tau,\tau_1)}{\tau^2\tau_1^2}
 +2\left(\frac{D^1_{[\log z]}(\tau, \tau_1)}{\tau\tau_1}\right)^3=\left(z\frac{d}{dz}\right)^2(\zeta-\zeta_1)=\\=\left(z\frac{d}{dz}\right)(w-z/w)
 =-z/w+2pw(w+z/w)=z\frac{d}{dz}\zeta-\left(2\zeta-2\zeta_1+\frac12\right)z\frac{d}{dz}(\zeta+\zeta_1)=\\=
 z\frac{d}{dz}\zeta-\left(\frac{D_{[\log z]}^2(\tau,\tau_1)}{\tau\tau_1}-(\zeta-\zeta_1)^2\right)\left(2(\zeta-\zeta_1)+\frac12\right),
 \end{aligned}
 \end{equation*}
 where we use at first \eqref{temp1}, then \eqref{Hameq}, then first equation of \eqref{profd} and \eqref{temp1} and then \eqref{aid2}.
 I.e. we have
 \begin{equation*}
 \frac{D_{[\log z]}^3(\tau,\tau_1)}{\tau\tau_1}-\frac{D_{[\log z]}^2(\tau,\tau_1)D^1_{[\log z]}(\tau,\tau_1)}{\tau^2\tau_1^2}
 =-\frac12\left(\frac{D_{[\log z]}^2(\tau,\tau_1)}{\tau\tau_1}-(\zeta-\zeta_1)^2\right)+z\frac{d}{dz}\zeta.
 \end{equation*}
 Then, using \eqref{d8eq_1} we obtain \eqref{d8eq_2}.
 %Finally one could see that Okamoto-like equations \eqref{d8eq_1}, \eqref{d8eq_2} have symmetry in rescaling $z$, i.e. if
 %$\tau(z)$, $\tau_1(z)$ is a solution then $\tau(Dz)$, $\tau_1(Dz)$, $D\neq 0$ is also a solution. 
 
 %Asymptotic of $\tau$ function which give us solution of \eqref{zeta3} is just \eqref{tauas}. \\
\textit{(ii)} From \eqref{d8eq_1}, \eqref{d8eq_2} we obtain respectively
 \begin{eqnarray}
 \zeta'+\zeta_1'&=&\zeta-\frac14\left(2\zeta-2\zeta_1+\frac12\right)^2\label{prezeta},\\
 \zeta''-\zeta''_1&=&\zeta'-\left(2\zeta-2\zeta_1+\frac12\right)(\zeta'+\zeta_1').
 \end{eqnarray}
 Let us differentiate first equation and then sum and subtract from it second equation. We obtain
 \begin{eqnarray}
 \zeta''&=&-\left(2\zeta-2\zeta_1-\frac12\right)\zeta',\label{aux1}\\
 \zeta_1''&=&\left(2\zeta-2\zeta_1+\frac12\right)\zeta_1' \label{aux_0}.
 \end{eqnarray}
 From these equations we obtain
 \begin{equation*}
 \zeta_1''/
 \zeta_1'=1-\zeta''/\zeta' \Leftrightarrow \zeta'\zeta_1'=Dz,
 \end{equation*}
 where $D\neq 0$ is integration constant (cf. second equation of \eqref{profd}) 
 
 Eliminating $\zeta_1, \zeta_1'$ from \eqref{prezeta} using $\zeta'\zeta_1'=Dz$ and \eqref{aux1} we obtain
 \begin{equation*}
 (\zeta''-\zeta')^2=4(\zeta'^2)(\zeta-\zeta')-4Dz\zeta' \Leftrightarrow (z\ddot\zeta(z))^2=4(\dot\zeta(z))^2(\zeta(z)-z\dot\zeta(z))-4D\dot\zeta(z)
 \end{equation*}
 and evidently we have the same equation on $\zeta_1$.
 From this we see that $\zeta(z/D)$, $\zeta_1(z/D)$ satisfy \eqref{zeta3}.

 Let us check that $\pi(\zeta(z/D))=\zeta_1(z/D)$.
 These functions satisfy second equation of \eqref{profd} from which follows that $\zeta_1(z/D)-\pi(\zeta(z/D))$ is a constant.
Since $\zeta_1(z/D)$ and $\pi(\zeta(z/D))$ satisfy \eqref{zeta3} then $\zeta_1(z/D)\neq \pi(\zeta(z/D))$  iff $\zeta_1(z/D)'=0$ which contradicts \eqref{profd}. 
% 
% Finally we should check that we have not obtained exceptional solution $\zeta(z)=z+1$.
\end{proof}

Now we will present other equations which we call "Toda-like" because they are analogous to similar equations called "Toda equations" in \cite{OKSO}.
 
\begin{prop}
\begin{enumerate}
\item[(i)] 
Let $\zeta(z)$ denotes a solution of \eqref{zeta3}, $\zeta_1(z)$ denotes its B\"acklund transformation and 
 functions $\tau(z)$ and $\tau_1(z)$ correspond to $\zeta(z)$ and $\zeta_1(z)$ by \eqref{tauzeta}.
 Then the functions $\tau(z)$ and $\tau_1(z)$ satisfy equations
 \begin{equation}
  \begin{aligned}
  D^2_{[\log z]}(\tau,\tau)&=2Cz^{1/2}\tau_1^2, \\
   D^2_{[\log z]}(\tau_1,\tau_1)&=2C^{-1}z^{1/2}\tau^2. \label{eq:Toda}
   \end{aligned}
   \end{equation}
\item[(ii)] 
Consider functions $\tau(z)$ and $\tau_1(z)$ satisfy \eqref{eq:Toda} 
and functions $\zeta(z)$ and $\zeta_1(z)$ corresponding to $\tau(z)$ and $\tau_1(z)$ by \eqref{tauzeta} and $\ddot{\zeta}(z)\neq 0$, $\ddot{\zeta_1}(z)\neq 0$.
Then there exists $K$ such that functions $\zeta(z)-K, \zeta_1(z)-K$ satisfy equations \eqref{zeta3}
  and $\pi(\zeta(z)-K)=\zeta_1(z)-K$.
\end{enumerate}
  Constant $C$ in \eqref{eq:Toda} depends on normalization of $\tau$ and $\tau_1$.
   \label{prop:Toda}
   \end{prop}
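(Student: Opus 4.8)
The plan is to reduce both directions to statements about $\zeta=(\log\tau)'$ (with prime denoting $d/d\log z$) by means of the elementary identity
\[
\frac{D^2_{[\log z]}(\tau,\tau)}{\tau^2}=2(\log\tau)''=2\zeta',
\]
which follows at once from the definition \eqref{Hiru} and the first formula of \eqref{tauzeta}, together with its analogue $D^2_{[\log z]}(\tau_1,\tau_1)=2\zeta_1'\tau_1^2$. Under this identity the first equation of \eqref{eq:Toda} is equivalent to the assertion that $R:=\zeta'\tau^2/(z^{1/2}\tau_1^2)$ is a $z$-independent constant $C$, and the second to the constancy of $R_1:=\zeta_1'\tau_1^2/(z^{1/2}\tau^2)$.

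For part (i) I would prove $R$ constant by computing its logarithmic derivative
\[
(\log R)'=\frac{\zeta''}{\zeta'}-\frac12+2\zeta-2\zeta_1,
\]
where $(\log(\tau/\tau_1))'=\zeta-\zeta_1$ by \eqref{tauzeta}. The quotient $\zeta''/\zeta'$ I would read off from the Hamiltonian dynamics: differentiating $\zeta'=-z/w$ (from \eqref{temp1}) once more and inserting $\dot w=2pw^2/z$ from \eqref{Hameq} gives $\zeta''=\zeta'+2pz$, hence $\zeta''/\zeta'=1-2pw$; combined with $pw=\zeta-\zeta_1+1/4$ from the first relation of \eqref{profd} this yields $\zeta''/\zeta'=1/2-2\zeta+2\zeta_1$, so $(\log R)'=0$ and $R\equiv C$. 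The second equation of \eqref{eq:Toda} comes from the same computation with $\tau\leftrightarrow\tau_1$, $\zeta\leftrightarrow\zeta_1$ (legitimate since $\pi^2=1$), giving $R_1\equiv C'$; and since $CC'=\zeta'\zeta_1'/z=1$ by the second relation of \eqref{profd}, we get $C'=C^{-1}$.

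For part (ii) I would run this backwards. The same Hirota identity turns \eqref{eq:Toda} into $\zeta'\tau^2=Cz^{1/2}\tau_1^2$ and $\zeta_1'\tau_1^2=C^{-1}z^{1/2}\tau^2$; multiplying gives $\zeta'\zeta_1'=z$, while taking the logarithmic derivative of each recovers precisely equations \eqref{aux1} and \eqref{aux_0}. The remaining, non-differential information is a single additive constant, which I would pin down by showing that
\[
G:=\zeta'+\zeta_1'-\zeta+\tfrac14\big(2\zeta-2\zeta_1+\tfrac12\big)^2
\]
is conserved. Indeed, using only \eqref{aux1} and \eqref{aux_0} in $G'=\zeta''+\zeta_1''-\zeta'+(2\zeta-2\zeta_1+\tfrac12)(\zeta'-\zeta_1')$, the coefficient of $\zeta_1'$ is $(2\zeta-2\zeta_1+\tfrac12)-(2\zeta-2\zeta_1+\tfrac12)=0$ and the coefficient of $\zeta'$ is $-(2\zeta-2\zeta_1-\tfrac12)-1+(2\zeta-2\zeta_1+\tfrac12)=0$, so $G'=0$. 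Setting $K:=-G$ then makes $\zeta-K$, $\zeta_1-K$ satisfy the analogue of \eqref{prezeta}, and substituting this relation together with $\zeta'\zeta_1'=z$ and \eqref{aux1} into the prime form $(\zeta''-\zeta')^2=4\zeta'^2(\zeta-K-\zeta')-4z\zeta'$ of \eqref{zeta3} verifies the Painlev\'e equation for $\zeta-K$, and symmetrically for $\zeta_1-K$. The identity $\pi(\zeta-K)=\zeta_1-K$ then follows verbatim as in the proof of Proposition \ref{prop:Okamoto}(ii): both $\zeta_1-K$ and $\pi(\zeta-K)$ satisfy $\zeta'\zeta_1'=z$ relative to $\zeta-K$, so their difference is constant, and two solutions of \eqref{zeta3} differing by a nonzero constant would force $\zeta'=0$, contradicting $\zeta'\zeta_1'=z$.

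I expect the main obstacle to lie in part (ii), in isolating the correct non-differential invariant: the two Toda equations transmit information only at the level of logarithmic derivatives (they reproduce \eqref{aux1}, \eqref{aux_0} and $\zeta'\zeta_1'=z$, but \emph{not} \eqref{prezeta} itself), so one must recognize that the leftover freedom is exactly the additive shift $K$ and verify cleanly that $G$ above is conserved; obtaining $G'=0$ directly, rather than through a longer elimination, is the step that requires care. Part (i) is then essentially a routine logarithmic-derivative computation.
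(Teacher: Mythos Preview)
Your argument is correct. Part (i) is essentially the paper's proof: both reduce to the identity $\zeta''/\zeta'=\tfrac12-2\zeta+2\zeta_1$, obtained from $pw=\zeta-\zeta_1+\tfrac14$ and the Hamilton equations, read as constancy of $\zeta' z^{-1/2}\tau^2/\tau_1^2$. Part (ii), however, takes a genuinely different route. The paper differentiates the Toda relation twice to produce the third-order ODE $-\zeta'''\zeta'+\zeta''^{\,2}=2\zeta'^{\,3}-2z\zeta'$, then sets $f$ equal to the defect in \eqref{zeta3} and shows $f=4K\dot\zeta^2$ by a Wronskian-type integration. You instead take a single logarithmic derivative of each Toda equation to recover \eqref{aux1} and \eqref{aux_0} directly, and then isolate the missing non-differential datum via the conserved quantity $G$, which reconstructs the shifted \eqref{prezeta}. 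Your path has the merit of exhibiting part (ii) as a direct reuse of the Okamoto-like machinery: once \eqref{aux1}, \eqref{aux_0}, $\zeta'\zeta_1'=z$ and the shifted \eqref{prezeta} are in hand, the verification of \eqref{zeta3} is literally that of Proposition~\ref{prop:Okamoto}(ii). One word of caution: ``symmetrically for $\zeta_1-K$'' is not quite a symmetry, since \eqref{prezeta} is asymmetric in $\zeta,\zeta_1$; the cleanest way to close this is to note that the swapped invariant $G_1:=\zeta'+\zeta_1'-\zeta_1+\tfrac14(2\zeta_1-2\zeta+\tfrac12)^2$ satisfies $G_1=G$ (the identity $(u+\tfrac12)^2-(-u+\tfrac12)^2=2u$ with $u=2\zeta-2\zeta_1$), which is exactly the paper's $K_1=K$ step.
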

%\begin{Remark}
%Changing normalization of $\tau$ or $\tau_1$ we can arbitrary change constant $C$.
%\end{Remark}
\begin{Remark}
Usually Toda-like equations are written in logarithmic form, i.e. \eqref{eq:Toda} could be rewritten as
\begin{eqnarray}
  \left(z\frac{d}{dz}\right)^2\log\tau=Cz^{1/2}\tau_1^2/\tau^2, \quad
   \left(z\frac{d}{dz}\right)^2\log\tau_1=C^{-1}z^{1/2}\tau^2/\tau_1^2 \label{eq:Todaln}
   \end{eqnarray}
\end{Remark}
\begin{Remark} \label{fixrem}
Toda-like equations \eqref{eq:Toda} have symmetry in multiplying by $z^K$, i.e. if
$\tau(z)$, $\tau_1(z)$ is a solution then $z^K\tau(z)$, $z^K\tau_1(z)$ for any $K$ is also a solution. In other words, any solution of \eqref{eq:Toda}
can be obtained from the Painlev\'e $\tau$ function by this multiplication and solution with asymptotic behavior \eqref{tauas} corresponds to the value $K=0$. 
\end{Remark}
\begin{proof}
\textit{(i)}
Using \eqref{conn}, \eqref{profd} and \eqref{eq:A} we have
 \begin{eqnarray*}
z\frac{d}{dz}\left(\log\frac{\tau}{\tau_1}\right)=\zeta-\zeta_1&=&pw-1/4=-\frac{z\ddot\zeta}{2\dot\zeta}-1/4=-\frac{\zeta''}{2\zeta'}+1/4\Leftrightarrow\\
  \zeta'z^{-1/2}&=&C\frac{\tau_1^2}{\tau^2}\Leftrightarrow (\log\tau)''=Cz^{1/2}\frac{\tau_1^2}{\tau^2}
 \end{eqnarray*} 
 from which we have Toda-like equation with constant $C$ and its symmetric one with constant $C_1$
 \begin{equation*}
 \begin{aligned}
 D^2_{[\log z]}(\tau,\tau)=2C\tau_1^2 z^{1/2}, \qquad 
  D^2_{[\log z]}(\tau_1,\tau_1)=2C_1\tau^2 z^{1/2}. 
 \end{aligned}
 \end{equation*}
 Multiplying these equations by each other we have  $\zeta'\zeta_1'=CC_1z$. But from second equation of \eqref{profd} we have $CC_1=1$.
 %Finally one could see that Toda-like equations \eqref{eq:Toda} have symmetry in multiplying on $z^K$, i.e. if
 %$\tau(z)$, $\tau_1(z)$ is a solution then $z^K\tau(z)$, $z^K\tau_1(z)$ for any $K$ is also a solution.
 \\
\textit{(ii)}  Multiplying first and second equations of \eqref{eq:Todaln} we have  $\zeta'\zeta_1'=z$. Acting by $\left(z\frac{d}{dz}\right)^2$ on the second equation of \eqref{eq:Todaln}
 we have 
 \begin{equation*}
-\frac{\zeta'''}{2\zeta'}+\frac{\zeta''^{2}}{2\zeta'^2}=\zeta'-\zeta_1'.
 \end{equation*}
 Substituting $\zeta_1'=z/\zeta'$ we have
 \begin{equation}
 -\zeta'''\zeta'+\zeta''^2=2\zeta'^3-2z\zeta'. \label{eq:zetac}
 \end{equation}
 Let us denote
 \begin{equation*}
  f(z)=z^2\ddot\zeta(z)^2-4\dot\zeta(z)^2(\zeta(z)-z\dot\zeta(z))+4\dot\zeta(z)=\frac{1}{z^2}((\zeta''-\zeta')^2-4\zeta'^2(\zeta-\zeta')+4z\zeta').
  \end{equation*}
  %We want to proof that \eqref{zeta3} holds, i.e. $f(z)=0$.
 Differentiating expression for $f(z)$ we have
 \begin{equation*}
 \frac{z^2}{2(\zeta''-\zeta')}f'=\zeta'''-2\zeta''+\zeta'+6\zeta'^2-4\zeta\zeta'+2z.
 \end{equation*}
 Then equation \eqref{eq:zetac} can be rewritten as
 \begin{equation*}
 z^2f=\frac{z^2\zeta'}{2(\zeta''-\zeta')}f' \Leftrightarrow 2\ddot\zeta f=\dot\zeta \dot f \Leftrightarrow f=4K\dot\zeta^2,
 \end{equation*}
 where $K$ is some integration constant.
 Evidently we have the same equation on $\zeta_1$ with integration constant $K_1$.
 Then functions $\zeta(z)-K$ and $\zeta_1-K_1$ are solutions of \eqref{zeta3}.
 
 Similar to the proof of Proposition \ref{prop:Okamoto} we get $\pi(\zeta-K)=\zeta_1-K_1$, because
 $\zeta'\zeta_1'=z$. Similar to the proof of Proposition \ref{prop:Okamoto} we see that $\zeta(z)\neq z+1+K$. In remains to prove that $K_1=K$, this follows from the fact that both $\pi(\tau)$ and $\tau_1$ satisfy first equation of \eqref{eq:Toda}. 
 \end{proof}

 % We understood above that between solutions of \eqref{zeta3} and \eqref{piii} exists one-to-one correspondence.

In the Section \ref{sec:FxNSR} we will see appearance of Toda-like and Okamoto-like equations in the framework
of representation theory of Virasoro algebra.
 
\section{Painlev\'e $\tau$ function and Virasoro algebra}
\label{sec:Kiev}
\subsection{Virasoro conformal blocks}
\label{ssec:Kiev}
The Virasoro algebra (which we denote by $\Vir$) is generated by  $L_n$, $n\in\mathbb{Z}$ with relations
\begin{equation}
[L_n,L_m]=(n-m)L_{n+m}+\frac{n^3-n}{12}c\delta_{n+m,0}. 
\end{equation}
Here $c$ is an additional central generator, which acts on representations below as multiplication by a complex number. Therefore we consider $c$ as a complex number, which we call central charge.

Denote the Verma module of $\Vir$ by $\pi^{\D}_{\Vir}$. This module is generated by a highest weight vector $|\D\ra$
\begin{equation}
L_0|\D\ra=\D|\D\ra, \quad L_n|\D\ra=0, \quad n>0,
\end{equation}
where $\D \in \mathbb{C}$ is called the weight of $|\D\ra$.
Here and below in analogous situations the representation space is freely spanned by vectors obtained by the action of the operators
$L_{-n}, \,n>0$ on $|\D\ra$, i.e. vectors of the form $L_{-n_1}L_{-n_2}\cdot L_{-n_k}|\Delta\rangle$ for $n_1\geq n_2 \geq \ldots n_k$ form a basis in the Verma module $\pi^{\D}_{\Vir}$. Actually the Verma module is determined by the pair of complex numbers $\Delta$ and $c$, but we just write $\pi^{\D}_{\Vir}$ if a fixed value of $c$ has been chosen.

We use everywhere in this work not hermitian, but a complex symmetric scalar product.
We define scalar product on $\pi^{\D}_{\Vir}$ by conjugation $L_n^+=L_{-n}$ and normalization $\la\D|\D\ra=1$.
This scalar product is called Shapovalov form.
Note that in this paper we will normalize all highest weight vectors of $\Vir$ Verma modules on $1$ unless otherwise stated.

We will say that $\Delta$ is generic if $\Delta \neq \Delta_{m,n}=((b^{-1}+b)^2-(mb^{-1}+nb)^2)/4$, where parameter $b$ is defined by the equation $c=1+6(b^{-1}+b)^2$ and $m,n \in \mathbb{Z}_{> 0}$. For such generic $\Delta$ Shapovalov form on Verma module $\pi^{\D}_{\Vir}$ is nondegenerate \cite[Theorem 4.2]{IK}. Therefore for generic $\Delta$ Verma module $\pi^{\D}_{\Vir}$ is irreducible.

%It is useful to parametrize basis vectors on Verma module by Young diagrams such that basis vectors equal $L_{-\lmb}|\varnothing\ra$, where $\lmb$ is Young 
%diagram. By $L_{-\lmb}$ we denote product $L_{-\lmb_1}L_{-\lmb_2}\ldots,\, \lmb_1\leq \lmb_2 \leq\ldots$. 

Let us define so-called irregular limit of conformal block.
At first we define Whittaker vector $|W(z)\ra$ by formula
\begin{equation} \label{eq:WhitVir}
|W(z)\ra=z^{\D}\sum\limits_{N=0}^{\infty}z^{N}|N\ra,\qquad |N\ra\in\pi^{\D}_{\Vir} , \qquad L_0|N\ra=(\D+N)|N\ra,
\end{equation}
where
\begin{equation}
L_1|N\ra=|N-1\ra, \quad N>0, \quad 
L_k|N\ra=0, k>1 ,
\end{equation}
or equivalently,
\begin{equation}
L_1|W(z)\ra=z|W(z)\ra,\quad L_k|W(z)\ra=0, k>1.
\end{equation}
Using these conditions and the normalization $|0\ra=|\D\ra$ one can compute vectors all $|N\rangle$  inductively. Moreover, it is easy to see that if Shapovalov form is nondegenerate then such system of vectors exists and unique. Therefore for generic values of $\Delta$ the Whittaker vector is well defined.  

Note that it is enough to impose $L_1$ and $L_2$ relations since the action of the other $L_k$, $k>2$ follows from the Virasoro commutation relations.
Note also that this definition of $|W(z)\ra$ slightly differs from the one used in \cite{KGP}, our $z$ is $z^{1/2}$ in loc. cit.

The irregular (or Whittaker, or Gaiotto) limit of conformal block is defined by
\begin{equation}\label{eq:Whitcom}
\kr{F}_c(\D|z)=\la W(1)|W(z)\ra=z^{\D}\sum\limits_{N=0}^{\infty}z^{N}\la N|N\ra.
\end{equation}

We are ready to formulate fact, conjectured in \cite{GIL1302}  (which is based on a result of work \cite{GIL1207}) and proved in different ways in \cite{KGP} and \cite{ILT}.
\begin{thm}
\label{thm:1}
The expansion of the Painlev\'e III($D_8$) $\tau$ function near $z=0$ can be written as 
\begin{equation}\label{Kiev}
\tau(\sg,s|z)\propto\sum_{n\in\mathbb{Z}}C(\sg+n)s^n \kr{F}((\sg+n)^2|z), \quad \operatorname{Re}\in\mathbb{R}\setminus\left\{\frac12\mathbb{Z}\right\}, \quad s\in\mathbb{C}\setminus\{0\},
\end{equation}
where
$\kr{F}(\sg^2|z)=\kr{F}_{c=1}(\sg^2|z)$. The coefficients $C(\sg)$ are defined by
$ C(\sg)=1/\Bigl(\G(1-2\sg)\G(1+2\sg)\Bigr),$
where $\G(\cdot)$ is the Barnes $\G$-function. The parameters $s$ and $\sg$ in \eqref{Kiev} are integration constants of  equation~\eqref{zeta3}.
Notation $\propto$ means constant proportionality.
\end{thm}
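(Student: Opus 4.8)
The plan is to take the right-hand side of \eqref{Kiev} as the definition of a function $\mathcal{T}(\sg,s|z)$, together with the partner which by \eqref{eq:BackTau} is proportional to $\mathcal{T}(\sg-1/2,s|z)$, call it $\mathcal{T}_1(\sg,s|z)$, and to show that the pair $(\mathcal{T},\mathcal{T}_1)$ satisfies the Toda-like system \eqref{eq:Toda}. Once this is done, the converse part (ii) of Proposition \ref{prop:Toda} guarantees that the function $\zeta$ attached to $\mathcal{T}$ by \eqref{tauzeta} is, after a shift $\zeta\mapsto\zeta-K$, a genuine solution of \eqref{zeta3} with $\pi(\zeta-K)=\zeta_1-K$; hence $\mathcal{T}$ is a Painlev\'e III($D_8$) $\tau$ function up to the $z^{K}$ ambiguity of Remark \ref{fixrem}. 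It then remains only to match asymptotics in order to fix the parameters and the overall constant.

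The analytic heart is the verification of \eqref{eq:Toda} directly on the conformal-block series \eqref{eq:Whitcom}. Here I would use the embedding $\Vir\oplus\Vir\subset\sfF\oplus\NSR$: at $c=1$ (so that $b+b^{-1}=0$) a Whittaker-type vector for the combined Majorana-fermion plus super-Virasoro algebra decomposes, under restriction to the two commuting Virasoro subalgebras, into a sum of tensor products of single-Virasoro Whittaker vectors \eqref{eq:WhitVir}, weighted precisely by the coefficients $C(\sg+n)s^{n}$. Computing the relevant matrix element in the large algebra in two ways --- once as the bilinear combination $D^2_{[\log z]}(\mathcal{T},\mathcal{T})$ and once by inserting the Neveu--Schwarz generator whose square produces the translation $L_{-1}$ --- should reproduce exactly the right-hand side $2Cz^{1/2}\mathcal{T}_1^2$, the half-integer power $z^{1/2}$ being the signature of the NS sector. (The Ramond sector of the same algebra would instead yield the Okamoto-like equations \eqref{d8eq_1}, \eqref{d8eq_2}, as announced.) The factorization of the combined Whittaker vector with the Barnes-$\G$ weights is the content that makes the infinite sum over $n$ close into a bilinear relation.

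With \eqref{eq:Toda} established and Proposition \ref{prop:Toda}(ii) applied, the final step is to pin down which Painlev\'e solution $\mathcal{T}$ is and to compute the constant in $\propto$. Using $\mathcal{F}(\D|z)=z^{\D}\bigl(1+\tfrac{z}{2\D}+O(z^2)\bigr)$, the $n=0$ term of $\mathcal{T}$ gives the leading behaviour $C(\sg)z^{\sg^{2}}\bigl(1+\tfrac{z}{2\sg^{2}}+\cdots\bigr)$, while the $n=-1$ term contributes at order $z^{\sg^{2}+(1-2\sg)}$; comparing the coefficient $C(\sg-1)/C(\sg)\,s^{-1}$ with the corresponding term of \eqref{tauas} identifies $\td{s}$ as an explicit function of $s$ and $\sg$ (computed from the Barnes functional equation $\G(x+1)=\Gamma(x)\G(x)$) and sets $K=0$. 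By the uniqueness asserted in Proposition \ref{Jimboas}, $\mathcal{T}$ then coincides, up to a constant, with the Painlev\'e $\tau$ function of parameters $(\sg,\td{s})$, which is \eqref{Kiev}.

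I expect the main obstacle to be the second paragraph: realizing the precise Whittaker vector of $\sfF\oplus\NSR$ whose restriction factorizes with exactly the weights $C(\sg+n)s^{n}$, and checking that the insertion of the NS generator reproduces the normalization constant $C$ and the power $z^{1/2}$. Subtleties of convergence of the sum over $n\in\Z$ and of the interchange of this sum with the Hirota derivatives (handled as formal $z$-series and then justified in the domain $\operatorname{Re}\sg\in\mathbb{R}\setminus\tfrac12\Z$, $s\neq0$) will also need care, but the algebraic identity is the crux.
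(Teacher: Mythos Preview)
Your plan is exactly the paper's: derive the Toda-like relation \eqref{difToda} from the $\BNS$ sector of $\sfF\oplus\NSR$ via the operator $H$ of \eqref{eq:Hdefin} computed in two ways, invoke Proposition \ref{prop:Toda}(ii), and pin down the solution through the asymptotics of Proposition \ref{Jimboas}. The one point to sharpen in your second paragraph is that the Whittaker decomposition \eqref{Whitdecomp} carries coefficients $l_n(P,b)$ given explicitly by \eqref{l_n_irr}, not $C(\sg+n)s^n$; the parameter $s$ never appears on the representation-theory side but emerges only after substituting \eqref{Kiev} into the bilinear relation \eqref{Todablock} and collecting powers of $s$, with the match to the Barnes-$\G$ weights being the identity $C(\sg+n)C(\sg-n)/C(\sg)^2=(-1)^{2n}4^{-\D^{\NS}}l_n^2(\sg)$.
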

\begin{Remark}
Actually parameter $\sigma$ in formula \eqref{Kiev} can be $\sigma \in \mathbb{C}\setminus\left\{\frac12\mathbb{Z}\right\}$. We use smaller region in order to compare these $\tau$ functions with the ones from the family given in Proposition \ref{Jimboas}. 

In fact, formula \eqref{Kiev} gives us full power series whose first terms  one sees in the asymptotic behavior \eqref{tauas} (see the next Subsection for details).
Parameters $\sg$ in these formulas are the same and connection between $s$ and $\td{s}$ is given in the next Subsection.
\end{Remark}

\subsection{B\"acklund transformation in terms of $\tau$ functions}
\label{ssec:BTParam}
B\"acklund transformation acts on $\tau$ function due to the formulas \eqref{ham} and \eqref{tauzeta}.
We want to understand its relation with \eqref{Kiev}. We will see that action of B\"acklund transformation is just action on parameters $\sg,\td{s}$.
 
The $\tau$ function \eqref{Kiev} has obvious symmetries
\begin{equation}
\tau(\sg,s|z)=s\tau(\sg+1,s|z), \quad \tau(\sg,s|z)=\tau(-\sg,s^{-1}|z). \label{obvsymm}
\end{equation}
Therefore we can move $\operatorname{Re} \sg$ into interval $(0,\frac12)$.

 Let us mention the fact, that coefficients $C(\sg+n)$ in 
\eqref{Kiev} could be made rational functions. Indeed
\begin{equation}
\frac{C(\sg+n)}{C(\sg)}=\left(\frac{\Gamma(-2\sg)}{\Gamma(2\sg)}\right)^{2n}
  \left\{\begin{aligned}                  
	  \frac{(-1)^{\lfloor n\rfloor}}{(2\sg)^{2n}\prod\limits_{i=1}^{2n-1}(2\sg+i)^{2(2n-i)}}, \quad n\geq 0
  \\  
	  \frac{(-1)^{\lfloor -n\rfloor}}{(2\sg)^{-2n}\prod\limits_{i=1}^{-2n-1}(-2\sg+i)^{2(-2n-i)}}, \quad n<0
  \end{aligned}\right.
  =\left(\frac{\Gamma(-2\sg)}{\Gamma(2\sg)}\right)^{2n}\td{C}(\sg,n),\label{rationalC}
\end{equation}
  where $2n\in\mathbb{Z}$.
  So we can introduce $\td{s}(s,\sg)=s\left(\frac{\Gamma(-2\sg)}{\Gamma(2\sg)}\right)^{2}$ and consider
  \[\tau(\sg,\td{s}|z)\propto\sum_{n\in\mathbb{Z}}\td{C}(\sg,n)\td{s}^n \kr{F}((\sg+n)^2|z)\propto\tau(\sg,s|z).\]

 \begin{prop}
 B\"acklund transformation of $\tau$ function is given by formula
 $\tau_1(\sg,s|z)=\pi(\tau(\sg,s|z))\propto\tau(1/2-\sg,s^{-1}|z)$.\label{taubackl}
 \end{prop}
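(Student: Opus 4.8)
The plan is to reduce the statement to a computation of how the B\"acklund transformation $\pi$ acts on the two integration constants $(\sg,\td{s})$ parametrizing the generic family of Proposition \ref{Jimboas}, and then to re-express the answer in terms of $(\sg,s)$. First I would observe that $\pi$ sends a solution $w(\sg,\td{s}|z)$ of \eqref{piii} to another solution $w_1=z/w$ by \eqref{Backltransf}; since the condition $0<\operatorname{Re}\sg<1/2$ forces $0<\operatorname{Re}(1/2-\sg)<1/2$, the transformed solution again lies in the generic family, so by the uniqueness part of Proposition \ref{Jimboas} it is completely determined by its leading small-$z$ asymptotics. Hence it suffices to read off the new parameters from the leading term of $w_1$.

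Next I would compute this leading term directly. From \eqref{was} we have $w(\sg,\td{s}|z)=4\sg^2\td{s}\,z^{2\sg}(1+o(1))$, whence $w_1=z/w=\tfrac{1}{4\sg^2\td{s}}z^{1-2\sg}(1+o(1))$. Comparing with the canonical form $w(\sg_1,\td{s}_1|z)=4\sg_1^2\td{s}_1\,z^{2\sg_1}(1+o(1))$ and matching the exponent and the coefficient gives $\sg_1=1/2-\sg$ and $\td{s}_1=\bigl(16\sg^2(1/2-\sg)^2\td{s}\bigr)^{-1}$. This already establishes the claimed action $\sg\mapsto 1/2-\sg$.

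The remaining step, which is the main point, is to translate $\td{s}_1$ back into the variable $s$ through the relation $\td{s}(s,\sg)=s(\Gamma(-2\sg)/\Gamma(2\sg))^2$ recalled in Subsection \ref{ssec:BTParam}, and to check that the outcome is exactly $s^{-1}$. Writing $\td{s}_1=s_1(\Gamma(-2\sg_1)/\Gamma(2\sg_1))^2$ with $\sg_1=1/2-\sg$, so that $\Gamma(-2\sg_1)=\Gamma(2\sg-1)$ and $\Gamma(2\sg_1)=\Gamma(1-2\sg)$, and using $\Gamma(2\sg)=(2\sg-1)\Gamma(2\sg-1)$ together with $\Gamma(1-2\sg)=-2\sg\,\Gamma(-2\sg)$, one finds $\Gamma(2\sg-1)/\Gamma(1-2\sg)=\bigl(2\sg(1-2\sg)\bigr)^{-1}\Gamma(2\sg)/\Gamma(-2\sg)$. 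Squaring and using $(1-2\sg)^2=4(1/2-\sg)^2$ shows that this Gamma-factor contributes precisely the prefactor $\bigl(16\sg^2(1/2-\sg)^2\bigr)^{-1}$ occurring in $\td{s}_1$; substituting into the two expressions for $\td{s}_1$ and cancelling forces $s_1=s^{-1}$. I expect this Gamma-function bookkeeping to be the only delicate part; everything else is structural.

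Finally I would conclude as follows. Since $w_1=z/w$ equals $w(1/2-\sg,\td{s}_1|z)$, and since any solution $w$ determines $p$ and hence $\zeta$ through \eqref{Hameq} and \eqref{ham}, and thereby $\tau$ up to a $z$-independent multiplicative constant, the function $\tau_1=\pi(\tau(\sg,s|z))$ coincides up to such a constant with the $\tau$ function of the solution with parameters $(1/2-\sg,\td{s}_1)$. Because $\td{s}_1=\td{s}(s^{-1},1/2-\sg)$ by the Gamma identity above, this is exactly $\tau(1/2-\sg,s^{-1}|z)$, which proves the proposition. Equivalently, applying the obvious symmetry $\tau(\sg,s|z)=\tau(-\sg,s^{-1}|z)$ from \eqref{obvsymm} turns this into the form $\tau_1(\sg,s|z)\propto\tau(\sg-1/2,s|z)$ announced in \eqref{eq:BackTau}.
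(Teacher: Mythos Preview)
Your proof is correct and follows essentially the same route as the paper: compute the leading asymptotics of $w$ from Proposition~\ref{Jimboas}, apply $w\mapsto z/w$, read off $(\sg_1,\td{s}_1)=(1/2-\sg,\,(4\sg^2(1-2\sg)^2\td{s})^{-1})$, and convert back to $s$ via the relation $\td{s}=s(\Gamma(-2\sg)/\Gamma(2\sg))^2$. The paper additionally writes out the matching of the $\tau$-asymptotics from \eqref{Kiev} with \eqref{tauas} as a consistency check, and conversely leaves the Gamma-function step as a bare implication ``$\Rightarrow s_1=s^{-1}$'', which you have spelled out in full; but the underlying argument is the same.
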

 Now we give straightforward proof based on Proposition \ref{Jimboas} This proof do not use formula \eqref{Kiev} and actually can be given in Section \ref{sec:Painleve}.  Below in Subsection \ref{ssec:Todaobt} we will give another proof using representation theory.
 \begin{proof}
 Let us calculate the asymptotic behavior of \eqref{Kiev} in case $0<\operatorname{Re} \sg<\frac12, s\in\mathbb{C}\setminus \{0\}$
 \begin{equation*}
  \tau(\sg,s|z)\propto z^{\sg^2}\left(1+\frac{z}{2\sg^2}\right)+\td{C}(\sg,-1)\td{s}^{-1}z^{(\sg-1)^2}+o(|z|^{\sg^2+1}).
%   z^{\sg^2}\left(1+\frac{z}{2\sg^2}-\frac{\td{s}^{-1}z^{1-2\sg}}{(2\sg(1-2\sg))^2}+o(|z|)\right), \label{tauas2}
  \end{equation*}
  This is just the asymptotic behavior \eqref{tauas} with coinciding parameters $\sg$ and $\td{s}$.
  %Due to the Proposition \ref{Jimboas} in this domain of $\sg,s$ there is no identical $\tau$ functions from the family $\tau(\sg,s|z)$. 
  %and these $\tau$ functions 
  %are $\tau$ functions of general position.
 Let us then calculate the asymptotic behavior for $w(z)$ (cf. \eqref{was})
 \begin{eqnarray*}
 \zeta(z)=z\frac{d}{dz}\left(\sg^2 \log z+\frac{z}{2\sg^2}-\frac{\td{s}^{-1}z^{1-2\sg}}{(2\sg(1-2\sg))^2}
 -\frac12\frac{\td{s}^{-2}z^{2-4\sg}}{(2\sg(1-2\sg))^4}+o(|z|)\right)=\ny\\=
 \sg^2+\frac{z}{2\sg^2}-\frac{\td{s}^{-1}z^{1-2\sg}}{(2\sg)^2(1-2\sg)}-\frac{\td{s}^{-2}z^{2-4\sg}}{(2\sg)^4(1-2\sg)^3}+o(|z|);\\
 w(z)=-\frac{1}{\dot\zeta(z)}=-\frac{1}{\frac{1}{2\sg^2}-\frac{\td{s}^{-1}z^{-2\sg}}{(2\sg)^2}-\frac{\td{s}^{-2}z^{1-4\sg}}{(2\sg)^4(1-2\sg)^2}+o(1)}=4\sg^2\td{s}z^{2\sg}(1+o(1)).
 \end{eqnarray*}
 After the B\"acklund transformation $w\mapsto z/w$ the asymptotic will be
$\frac{z^{1-2\sg}}{4\sg^2\td{s}}(1+o(1))$.
%$4\sg_1^{2}\td{s}_1z^{2\sg_1}(1+o(1))$.
%From this asymptotics due to the Proposition \ref{Jimboas} 
Therefore the function $\pi(\tau(\sg,s|z))$ belongs to the same two-parametric family due to the Proposition \ref{Jimboas} and the corresponding
parameters $\sg_1, \td{s}_1$ are given by
\begin{equation*}
\sg_1=1/2-\sg,\quad
\td{s}_1=\frac{1}{(2\sg)^2(1-2\sg)^2\td{s}}\Rightarrow s_1=s^{-1}. 
\end{equation*}
So finally
\begin{equation}
\tau(\sg,s|z)\mapsto\tau_1(\sg,s|z)\propto\tau(1/2-\sg,s^{-1}|z)=\tau(\sg-1/2,s|z) \label{halfshift}. 
\end{equation}
\end{proof}
\begin{Remark}
The power series decomposition of $\tau_1$ can be written as
\begin{equation}\label{Kiev1}
\tau_1(\sg,s|z)\propto\sum_{n\in\mathbb{Z}+1/2}C(\sg+n)s^n \kr{F}((\sg+n)^2|z), \quad \operatorname{Re}\sg\in\mathbb{R}\setminus\left\{\frac12\mathbb{Z}\right\}, \quad s\in\mathbb{C}\setminus\{0\},
\end{equation}
which differs from \eqref{Kiev} only in the region of summation.
\end{Remark}
Using Proposition \ref{taubackl} and Propositions  \ref{prop:Okamoto}, \ref{prop:Toda} we can see that $\tau(\sg,s|z)$ given by the right side of \eqref{Kiev}
satisfies equations on function $\tau(\sg|z)$ which follow from Toda-like and Okamoto-like equations as equations in $\sg$ and $z$. 
These equations will be differential on $z$ and difference on $\sg$. For instance
Okamoto-like equation turns to
\begin{eqnarray}
  &D_{[\log z]}^2(\tau(\sg|z),\tau(\sg-1/2|z))-\frac12\left(z\frac{d}{dz}-\frac18\right)(\tau(\sg|z)\tau(\sg-1/2|z))=0 \label{difd8eq_1},
  \\
   &D_{[\log z]}^3(\tau(\sg|z),\tau(\sg-1/2|z))-\frac12\left(z\frac{d}{dz}-\frac18\right)D^1_{[\log z]}(\tau(\sg|z),\tau(\sg-1/2|z))=0 \label{difd8eq_2}.
   \end{eqnarray}
Analogous form of the Toda-like equation is given in next Remark. 
\begin{Remark}
\label{Crem}
We could directly determine constant $C$ in Toda-like equations \eqref{eq:Toda} for
$\tau=\tau(\sg,s|z)$, $\tau_1=\tau(\sg-1/2,s|z)$ where normalization of $\tau(\sg,s|z)$ is given by equality in \eqref{Kiev}.
Indeed the first equation of \eqref{eq:Toda} could be rewritten as
\begin{equation*}
\zeta'=Cz^{1/2}\frac{\tau(1/2-\sg,s^{-1}|z)^2}{\tau(\sg,s|z)^2}
\end{equation*}
with the asymptotic behavior of l.h.s. and r.h.s.
\begin{equation*}
\zeta'=-\frac{z^{1-2\sg}}{4\sg^2 \td{s}}(1+o(1)),\qquad
r.h.s.=Cz^{1/2}\left(\frac{C(1/2-\sg)}{C(\sg)}\right)^2(1+o(1)) z^{1/2-2\sg} \Rightarrow 
C=-s^{-1}.
\end{equation*}
So the $\tau$ function given by decomposition \eqref{Kiev} satisfies differential-difference equation
\begin{equation}
1/2D^2_{[\log z]}(\tau(\sg|z), \tau(\sg|z))=-s^{-1}z^{1/2}\tau(\sg-1/2)^2,\label{difToda2}
\end{equation}
or using first relation from \eqref{obvsymm}
\begin{equation}
1/2D^2_{[\log z]}(\tau(\sg|z), \tau(\sg|z))=-z^{1/2} \tau(\sg+1/2|z)\tau(\sg-1/2|z).\label{difToda} 
\end{equation}

Analogously for normalization of $\tau_1$ given  by equality in	 \eqref{Kiev1} we have $C=-1$.
\end{Remark}
We will use the differential-difference equations  \eqref{difd8eq_1}, \eqref{difd8eq_2}, \eqref{difToda} in Section \ref{sec:FxNSR}.

\subsection{Algebraic solution of Painlev\'e III($D_8$) equation}
\label{ssec:algsol}
There exist only two rational solutions of Painlev\'e III($D_8$) equation: $w(z)=\pm \sqrt{z}$ (\cite{Gromak}).
These solutions are only invariant solutions under the B\"acklund transformation $\pi(\pm \sqrt{z})=\pm \sqrt{z}$.
Using \eqref{conn} we obtain
\begin{equation*}
\dot{\zeta}(z)=\mp\frac1{\sqrt{z}} \Rightarrow p(z)=\pm \frac1{4\sqrt{z}}.
\end{equation*}
Therefore, using \eqref{ham} and \eqref{tauzeta} we find $\zeta$ and $\tau$ functions 
\begin{equation}
\zeta(z)=1/16 \mp 2 \sqrt{z},\qquad  \tau(z)\propto z^{1/16}e^{\mp 4\sqrt{z}}.\label{taualg}
\end{equation}
This formula is also more or less known, for example for the upper sign it follows (as well as next formula \eqref{eq:tau=exp}) from \cite[eq. (3.52)]{BGT} after substitution $\sigma=1/4$ and probably from other sources.
%This result up to $z^{1/16}$ (see Remark \ref{rem:Todaamb}) could be obtained from Toda-like equations \eqref{eq:Toda}. Really
%we have that $\tau(z)\propto\tau_1(z)$ in B\"acklund invariant point, so \eqref{eq:Toda} turns to
%\begin{equation}
%\left(\frac{\tau'(z)}{\tau(z)}\right)'=\mp z^{1/2}.
%\end{equation}
%Finally we have that $\tau(z)=z^{1/16-K/4}e^{\mp 4\sqrt{z}}$, where $K$ is the same as in proof of Proposition \ref{prop:Toda}.

On the other hand from Proposition \ref{taubackl} follows that there are two B\"acklund invariant solutions $\tau(1/4,\pm 1|z)$ given by the right side of \eqref{Kiev}. Comparing first terms of the power series expansion
of \eqref{Kiev} and the expression \eqref{taualg} for $\tau$ function we obtain
\begin{equation}\label{eq:tau=exp}
\tau(1/4,\pm 1|z)=C(1/4) z^{1/16}e^{\mp4\sqrt{z}}. 
\end{equation}
So, using \eqref{rationalC}, we obtain following relation on conformal blocks
\begin{equation}
\sum_{n\in\mathbb{Z}}(\mp 1)^n B_n\mathcal{F}((1/4+n)^2|z)=z^{1/16}e^{\mp4\sqrt{z}},
\label{blockquarter}
\end{equation}
where coefficients $B_n$ are equal to
\begin{equation}
	B_n=\frac{2^{4n^2+2n}}{\prod\nolimits_{i=0}^{2n-1}\left(2i+1\right)^{2(2n-i)}}, \, n\geq 0, 
	\qquad
	B_n=\frac{2^{4n^2+2n}}{\prod\nolimits_{i=0}^{-2n-2}\left(2i+1\right)^{2(-2n-i-1)}}, \, n<0\label{Bn}.
\end{equation}

In the remaining part of this Subsection we will prove relation \eqref{blockquarter} using representation theory.

Introduce Heisenberg algebra with generators $a_r,\, r\in\mathbb{Z}+1/2$ and relations $[a_r,a_s]=r\delta_{r+s,0}$.
Consider Fock module $\mathrm{F}$ generated by highest weight vector $|\varnothing\ra$ which satisfies $a_r|\varnothing\ra=0,\, r>0$.
Then one can introduce action of the algebra $\Vir$ with $c=1$ by formula
\begin{equation}
L_n=\frac12\sum\limits_{r\in\mathbb{Z}+1/2} :a_{n-r}a_r:+\frac{1}{16}\delta_{n,0}, \label{La}
\end{equation}
where $:\ldots:$ is standard Heisenberg normal ordering. Our first goal is to describe $\mathrm{F}$ as a $\Vir$ module. This module cannot be irreducible since Heisenberg algebra has half-integer indices but $\Vir$ has only integer indices.

%Then one can see that vector $|\varnothing\ra$ is highest weight vector with respect to $\Vir$ algebra with highest weight $1/16$.
%So vector $|\varnothing\ra$ generate $\Vir$ Verma module by acting of $L_{-n}, n>0$. No relations between vectors of this module hold because due to
%Feigin-Fuks theorem (see \cite{IK}) module $\pi^{1/16}_{\Vir}$ is irreducible. Below we see that there is another highest weight vectors in 
%$\mathrm{F}$ with respect to $\Vir$ which also generate irreducible Verma modules. Vectors from $\Vir$ Verma modules with different highest weight
%will be orthogonal as usual.

On $\mathrm{F}$ we have scalar product defined by $a_r^+=a_{-r},\, \la\varnothing|\varnothing\ra=1$. This product is nondegenerate.
Due to \eqref{La} we have $L_n^+=L_{-n}$. Therefore this scalar product coincides with Shapovalov form on $\Vir$ submodules on $\mathrm{F}$.

Since $[L_0,a_r]=-ra_r$ and $L_0|\varnothing\ra=\frac{1}{16}|\varnothing\ra$ character of $\mathrm{F}$ equals
\begin{equation*}
\mathrm{ch}(\mathrm{F})=\mathrm{Tr}(z^{L_0})=\frac{z^{1/16}}{\prod\nolimits_{r=0}^{\infty}(1-z^{r+1/2})}. 
\end{equation*}

This formula can be rewritten using Gauss relation 
\begin{equation}
\sum\limits_{k=0}^{\infty}z^{k(k+1)/2}=\prod\limits_{k=1}^{\infty}\frac{1-z^{2k}}{1-z^{2k-1}} \label{Gauss}
\end{equation}
which follows from Jacobi triple product identity
\begin{equation}
\prod_{k=1}^{\infty}(1-z^{2k})(1+z^{2k-1}y^2)(1+z^{2k-1}y^{-2})=\sum_{k=-\infty}^{\infty}z^{k^2}y^{2k} \label{JTP}
\end{equation}
after the substitution $y\mapsto z$ and then $z\mapsto \sqrt{z}$.
%\begin{equation}
%\sum_{k=0}^{\infty} z^{k(k+1)/2}=\prod_{k=1}^{\infty}\frac{(1-z^{2k})^2}{(1-z^k)}=\prod_{k=1}^{\infty}\frac{1-z^{2k}}{1-z^{2k-1}}
%\end{equation}

Now using \eqref{Gauss} we get
\begin{equation}
\mathrm{ch}(\mathrm{F})=\sum_{n\in\mathbb{Z}}\frac{z^{(n+1/4)^2}}{\prod\nolimits_{k=1}^{\infty}(1-z^k)}.\label{Fchardecomp}
\end{equation}
In the right side of \eqref{Fchardecomp} we have sum of characters of Verma modules $\pi_{\Vir}^{(n+1/4)^2}$. Moreover we have
\begin{prop} \label{prop:Fockdecomp}
Fock module $\mathrm{F}$ is isomorphic to the direct sum of $\Vir$ $c=1$ Verma modules with highest weight $(n+1/4)^2,\, n\in\mathbb{Z}$
\begin{equation}
\mathrm{F}\cong\bigoplus_{n\in\mathbb{Z}}\pi_{\Vir}^{(n+1/4)^2} \label{Fockdecomp}
\end{equation}
\end{prop}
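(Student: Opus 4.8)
The plan is to upgrade the character identity \eqref{Fchardecomp} to an isomorphism of $\Vir$-modules by first proving that $\mathrm{F}$ is a \emph{semisimple} $\Vir$-module and then reading off the multiplicities from \eqref{Fchardecomp}. The whole argument rests on one structural observation: every weight occurring in $\mathrm{F}$ is generic in the sense of Subsection \ref{ssec:Kiev}, so the attached Verma module is irreducible. To set this up I would first pin down the $L_0$-spectrum. Since $[L_0,a_r]=-ra_r$ and $L_0|\vac\ra=\tfrac1{16}|\vac\ra$, the monomials $a_{-r_1}\cdots a_{-r_k}|\vac\ra$ (with $r_i$ positive half-integers) are $L_0$-eigenvectors; hence $L_0$ acts \emph{diagonalizably} and its spectrum is $\tfrac1{16}+\tfrac12\Z_{\geq0}=\{(1+8j)/16:j\in\Z_{\geq0}\}$, i.e. all numerators are odd. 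On the other hand, solving $c=1+6(b^{-1}+b)^2$ at $c=1$ gives $b=i$, whence the degenerate weights are $\Delta_{m,n}=(m-n)^2/4$, with numerators $4(m-n)^2$ divisible by $4$. An odd number is never divisible by $4$, so no weight of $\mathrm{F}$ is degenerate, and by the genericity criterion of Subsection \ref{ssec:Kiev} (nondegeneracy of the Shapovalov form, \cite[Theorem 4.2]{IK}) every Verma module $\pi^{\Delta}_{\Vir}$ with $\Delta$ in the spectrum, in particular each $\pi^{(n+1/4)^2}_{\Vir}$, is irreducible.

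The remaining and main point is semisimplicity of $\mathrm{F}$. Here I would use the contravariant form: the Heisenberg form with $a_r^+=a_{-r}$ and $\la\vac|\vac\ra=1$ is nondegenerate and, by \eqref{La}, satisfies $L_n^+=L_{-n}$, so it is a nondegenerate contravariant form on $\mathrm{F}$; since distinct $L_0$-eigenspaces are orthogonal, it restricts to a nondegenerate form on each finite-dimensional graded piece $\mathrm{F}_\Delta$. The claim I would prove is that a $\Vir$-module with diagonalizable $L_0$, finite-dimensional graded pieces bounded below, a nondegenerate contravariant form, and all weights generic is semisimple. The key sublemma is that such a module admits no non-split extension of two irreducible Verma modules $\pi^{\Delta'}_{\Vir}$, $\pi^{\Delta}_{\Vir}$ of distinct weights: in an indecomposable length-two $E$ the only proper nonzero submodule would have to equal its own orthogonal complement, forcing $\dim E_\gamma=2\dim(\pi^{\Delta'}_{\Vir})_\gamma$ for every grade $\gamma$, which fails at the bottom grade of one of the two Vermas (where the other contributes dimension $0$). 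Feeding this into a grade-by-grade induction, using the orthogonal splitting $\mathrm{F}_\Delta=\mathrm{Sing}_\Delta\oplus D_\Delta$ with $\mathrm{Sing}_\Delta=D_\Delta^{\perp}$ (singular vectors against descendants $D_\Delta=\sum_{m>0}L_{-m}\mathrm{F}_{\Delta-m}$), shows that $\mathrm{F}$ equals the direct sum of the irreducible Verma modules generated by its singular vectors.

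Once semisimplicity is in hand, $\mathrm{F}\cong\bigoplus_\Delta(\pi^{\Delta}_{\Vir})^{\oplus m_\Delta}$, and comparing with \eqref{Fchardecomp} — using that characters of Verma modules with distinct highest weights are linearly independent — forces $m_\Delta=1$ for $\Delta=(n+1/4)^2$, $n\in\Z$, and $m_\Delta=0$ for the remaining spectrum weights (such as $17/16$). This is exactly \eqref{Fockdecomp}.

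I expect the semisimplicity step to be the genuine obstacle. Nondegeneracy of the global form alone does not forbid a descendant from being singular: a singular descendant $w$ automatically satisfies $\la w,w\ra=0$, so the contravariant form on the irreducible Verma it generates vanishes identically, and one must exclude this. It is precisely the genericity of all weights together with diagonalizability of $L_0$ — which rules out logarithmic $L_0$-blocks and distinct-weight gluings via the dimension count above — that closes the gap. A more computational alternative would be to exhibit the highest weight vectors of weight $(n+1/4)^2$ by explicit bosonic vertex-operator formulas and then invoke irreducibility and the character count directly, but I expect the representation-theoretic route to be cleaner.
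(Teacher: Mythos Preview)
Your proposal is correct, and the overall architecture — nondegenerate contravariant form plus genericity of every $L_0$-eigenvalue forces semisimplicity, after which \eqref{Fchardecomp} pins down the isotypic multiplicities — is sound. The orthogonal splitting $\mathrm{F}_\Delta=\mathrm{Sing}_\Delta\oplus D_\Delta$ (with $\mathrm{Sing}_\Delta=D_\Delta^\perp$) is the right engine for the grade-by-grade induction, and your observation that every eigenvalue $(1+8j)/16$ has odd numerator while each $\Delta_{m,n}=(m-n)^2/4$ at $c=1$ has numerator divisible by $4$ is exactly the genericity check needed.

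The paper's route is different and somewhat more concrete. Rather than proving an abstract semisimplicity criterion, it \emph{constructs} the highest weight vectors $|(n+1/4)^2\rangle$ one at a time by an induction on $|n|$: at each step the next vector is taken to be the lowest-weight vector of the orthogonal complement of the Verma modules already built, and the character identity \eqref{Fchardecomp} guarantees that this complement starts at the correct grade. Irreducibility of each $\pi^{(n+1/4)^2}_{\Vir}$ (genericity) then makes each cyclic submodule a Verma module, orthogonality gives linear independence, and the character identity closes the count. So both arguments rest on the same three ingredients (contravariant form, genericity, character identity), but the paper spends them on an explicit inductive construction while you spend them on a general semisimplicity lemma. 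Your approach yields a reusable statement; the paper's avoids the need to analyze non-split extensions at all. Incidentally, the ``computational alternative'' you mention at the end — exhibiting the highest weight vectors by explicit fermionic/vertex formulas — is carried out in the paper immediately afterward (Proposition~\ref{prop:fermhww}), so that route is also available and in fact needed for the subsequent calculation of the coefficients $\mathfrak{l}_n^\epsilon$.
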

%This decomposition is well known and standard (see \cite{Zambook}, \cite{Zam:1987}).

\begin{proof} It is sufficient to prove the existence of vectors $|(n+1/4)^2\rangle$ such that 
\begin{equation} \label{eq:n+1/4}
L_k |(n+1/4)^2\rangle=0,\text{ for } k>0,\qquad L_0 |(n+1/4)^2\rangle=(n+1/4)^2|(n+1/4)^2\rangle
\end{equation}
and submodules generated by $|(n+1/4)^2\rangle$ are orthogonal with respect to Shapovalov form on $\mathrm{F}.$ Indeed the Verma modules on Virasoro algebra with $c=1$ and $\Delta=(n+1/4)^2$ are irreducible (since correponding $\Delta$ are generic), therefore the vectors $|(n+1/4)^2\rangle$ generate Verma modules $\pi_{\Vir}^{(n+1/4)^2}$. These modules are linear independent due to orthogonality and their direct sum is isomorphic to $\mathrm{F}$ due to character identity~\eqref{Fockdecomp}. 
	
Existence of vectors $|(n+1/4)^2\rangle$ follows by induction. As base we choose $|(0+1/4)^2\rangle=|\varnothing\rangle$. Assume the existence of vectors $|(n+1/4)^2\rangle$, $-m<n<m$, then we can define $|(-m+1/4)^2\rangle$ as the highest weight vector of orthogonal complement $\left(\oplus_{n=-m+1}^{m-1}\pi^{(n+1/4)^2}_{\Vir}\right)^\perp$. The conditions \eqref{eq:n+1/4} will be satisfied due to orthogonality and character identity \eqref{Fchardecomp}. Then we define $|(m+1/4)^2\rangle$ as the highest weight vector of orthogonal complement $\left(\oplus_{n=-m}^{m-1}\pi^{(n+1/4)^2}_{\Vir}\right)^\perp$ and this finishes the step of induction. \end{proof}

The decomposition \eqref{Fockdecomp} was stated in \cite{Zam:1987},\cite{Zambook} (and probably in other sources). We will deduce formula \eqref{blockquarter} from this decomposition. We will need explicit formulas for vectors $|(n+1/4)^2\rangle$, these formulas are given in e.g. \cite{Carlsson:2015}. First we should recall so-called boson-fermion correspondence (see e.g.~\cite{KacRaina}). 

Let us extend the Heisenberg algebra by generators $a_n$, $n\in\mathbb{Z}$, $[a_n,a_m]=n\delta_{n+m,0}$ and $[a_n,a_r]=0$, $r \in \mathbb{Z}+1/2$.  Consider also corresponding extension $\overline{\mathrm{F}}(k)$ of Fock module $\mathrm{F}$ with
$a_0$ acting on this module as number $k/\sqrt{2}$, $k \in \mathbb{Z}$. We will denote corresponding vacua vectors by $|\varnothing,k\rangle$. Consider then the sum $\oplus_{k\in\mathbb{Z}}\overline{\mathrm{F}}(k)$ and denote by $S$ the operator $S\colon \overline{\mathrm{F}}(k) \rightarrow \overline{\mathrm{F}}(k+1)$
determined by formulas
\[ S |\varnothing,k\rangle=|\varnothing,k+1\rangle,\qquad [S,a_n]=0,\text{ for } n\neq 0.
\]
Introduce operators $\td{\psi}_r,\td{\psi}^*_r$, $r \in \frac12\mathbb{Z}+\frac14$ by formulas
\begin{equation}
\begin{aligned}
\td{\psi}(z)&=
\!\!\!\sum_{r\in\frac12\mathbb{Z}+\frac14}\td{\psi}_r z^{-r-1/4}=
Sz^{a_0/\sqrt{2}}
\exp\left(\sum\limits_{2j\in \mathbb{Z}_{> 0}}\frac{z^j}{j\sqrt{2}}a_{-j}\right)\exp\left(\sum\limits_{2j\in \mathbb{Z}_{>0}} \frac{z^{-j}}{-j\sqrt{2}}a_{j}\right),\\
\td{\psi}^*(z)&=\!\!\!\sum_{r\in\frac12\mathbb{Z}+\frac14}
\td{\psi}^*_r z^{-r-1/4}=
S^{-1}z^{-a_0/\sqrt{2}}\exp\left(\sum\limits\limits_{2j\in \mathbb{Z}_{>0}} \frac{z^{j}}{-j\sqrt{2}}a_{-j}\right)\exp\left(\sum\limits\limits_{2j\in \mathbb{Z}_{>0}} \frac{z^{-j}}{j\sqrt{2}}a_{j}\right).\label{bfc}
\end{aligned}
\end{equation}

%Consider infinite-dimensional linear space $V$ spanned by basis vectors $v_{m/2}, m\in\mathbb{Z}+\frac12$
%Next let us define space $\Lambda^{\frac{\infty}{2}}V$ spanned by infinite wedge product $v_{i_1}\wedge v_{i_2} \ldots$ as basis vectors such that
%$i_k<i_{k+1}$ and $i_{k+1}=i_{k}+1/2$ for $k\gg 0$. Introduce free-fermion generators $\td{\psi}_r$ and $\td{\psi}^*_r$ with indexes $2r\in\mathbb{Z}+
%\frac12$ such that $\{\td{\psi}_r,\td{\psi}_s\}=\delta_{r+s,0}$, $\{\td{\psi}^*_r,\td{\psi}^*_s\}=\delta_{r+s,0}$, $\{\td{\psi}^*_r,\td{\psi}_s\}=0$ and vacuum vector
%$\widetilde{|\varnothing,0\ra}=v_{1/4}\wedge v_{3/4}\wedge v_{5/4}\wedge\ldots$. Space $\Lambda^{\frac{\infty}{2}}V$ could be
%considered as module generated by acting of these fermions on $\widetilde{|\varnothing\ra}$ with conditions $\td{\psi}_r \widetilde{|\varnothing\ra}=0,\,
%\td{\psi}^*_r \widetilde{|\varnothing\ra}=0,\, n>0$. Precise action of $\td{\psi}_r$, $\td{\psi}_r^*$ on above wedge realization of 
%$\Lambda^{\frac{\infty}{2}}V$ is given by
%\begin{align*}
%\td{\psi}_r (v_{i_1}\wedge v_{i_2} \ldots)=v_{r}\wedge v_{i_1} \wedge v_{i_2}, \ldots\\
%\td{\psi}^*_r (v_{-r} \wedge v_{i_1} \wedge v_{i_2} \ldots)=v_{i_1} \wedge v_{i_2} \ldots
%\end{align*}
%In second formula we have in mind that action is zero when thero is no $v_{-r}$ in wedge product.
%It useful to introduce also vectors $\widetilde{|\varnothing,k\ra}=v_{k/2+1/4}\wedge v_{k/2+3/4}\wedge v_{k/2+5/4}\wedge\ldots$
The boson-fermion correspondence can be stated as follows
\begin{prop}
Operators $\td{\psi}_r, \td{\psi}_r^*$ satisfy Clifford algebra relations:
\[\{\td{\psi}_r,\td{\psi}_s\}=0,\quad \{\td{\psi}_r,\td{\psi}^*_s\}=\delta_{r+s,0},\quad \{\td{\psi}^*_r,\td{\psi}_s^*\}=0. \]
The space $\oplus_{k\in\mathbb{Z}}\overline{\mathrm{F}}(k)$ is a Fock representation of this Clifford algebra generated by vector $|\varnothing,0\rangle$ such that 
\[\td{\psi}_r|\varnothing,0\rangle=\td{\psi}_r^*|\varnothing,0\rangle=0, \quad \text{ for }r>0.\]
The Heisenberg algebra generators $a_j$, $j\in \frac12\mathbb{Z}$ in terms of $\td{\psi}_r, \td{\psi}_r^*$ are given by the formula
\begin{equation}\label{eq:a:psi}
a_{j}=\frac{1}{\sqrt{2}}\sum_{r\in\frac12\mathbb{Z}+\frac14}:\!\td{\psi}_{r}\td{\psi}^*_{j-r}\!:,
\end{equation}
where usual fermion normal ordering used.
\end{prop}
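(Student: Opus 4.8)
The plan is to treat $\td\psi(z)$ and $\td\psi^*(z)$ as formal vertex operators and to extract all three anticommutation relations from their operator products, which reduce to elementary Heisenberg rearrangements. Write each field in the factored form $\td\psi(z)=S\,z^{a_0/\sqrt2}\,\Gamma^-(z)\,\Gamma^+(z)$, where $\Gamma^-(z)=\exp\!\big(\sum_{2j\in\Z_{>0}}\frac{z^j}{j\sqrt2}a_{-j}\big)$ collects the creation modes and $\Gamma^+(z)=\exp\!\big(-\sum_{2j\in\Z_{>0}}\frac{z^{-j}}{j\sqrt2}a_j\big)$ collects the annihilation modes (and similarly for $\td\psi^*$ with the opposite signs, $S^{-1}$ and $z^{-a_0/\sqrt2}$). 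The only nontrivial ingredient is the rearrangement identity $e^Ae^B=e^Be^A e^{[A,B]}$, valid because $[a_j,a_{-j'}]=j\,\delta_{j,j'}$ is central; the contraction of an annihilation exponential past a creation exponential then produces a scalar series.

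First I would record the two basic contractions. Moving $\Gamma^+(z)$ past the creation exponential for $\td\psi\,\td\psi^*$-type products gives, after summing $\sum_{m\ge1}\frac1m (w/z)^{m/2}=-\log\big(1-(w/z)^{1/2}\big)$, a factor $\big(1-(w/z)^{1/2}\big)^{\mp1}$; the like-type products ($\td\psi\,\td\psi$ and $\td\psi^*\,\td\psi^*$) produce the same series with the opposite sign in the exponent, hence a regularizing zero rather than a pole. I would then commute the zero modes using $z^{a_0/\sqrt2}S=S\,z^{a_0/\sqrt2}z^{1/2}$ and the analogous rule for $S^{-1}$; these contribute the branch factors $z^{\pm1/4}$, $w^{\pm1/4}$ that combine with the contraction to give the full two-point factors. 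Assembling the pieces, I expect $\td\psi(z)\td\psi(w)$ and $\td\psi^*(z)\td\psi^*(w)$ to come out manifestly antisymmetric under $z\leftrightarrow w$ (the normal-ordered prefactor is symmetric while the scalar factor is antisymmetric), which yields $\{\td\psi_r,\td\psi_s\}=\{\td\psi_r^*,\td\psi_s^*\}=0$ after taking coefficients; and I expect
\[
\td\psi(z)\td\psi^*(w)=\,:\!\td\psi(z)\td\psi^*(w)\!:\;\frac{(z/w)^{1/4}}{1-(w/z)^{1/2}},
\]
with the reversed product $\td\psi^*(w)\td\psi(z)$ given by the same expression expanded in the opposite domain.

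The heart of the argument is then the mixed relation. The two orderings of $\td\psi(z)\td\psi^*(w)$ correspond to expanding the rational contraction factor in $|z|>|w|$ and in $|w|>|z|$; their difference is a formal delta distribution supported on $z=w$. I would extract $\{\td\psi_r,\td\psi^*_s\}$ by the appropriate residue/coefficient pairing and check that the branch factors $z^{\pm1/4}$, $w^{\pm1/4}$ cancel precisely so that the residue collapses to $\delta_{r+s,0}$. This cancellation of the twist is the step I expect to be the main obstacle: because $r\in\frac12\Z+\frac14$, the fields carry fractional powers of $z$, and one must keep careful track of the half-integer and quarter-integer bookkeeping to confirm that no spurious shift survives in the Kronecker delta.

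Finally, for the representation statement I would note that $\Gamma^+(z)|\varnothing,0\ra=|\varnothing,0\ra$ and that the zero-mode and creation pieces produce only the allowed powers, so that the modes $\td\psi_r,\td\psi^*_r$ with $r>0$ annihilate $|\varnothing,0\ra$; the Clifford relations just proven then make $\oplus_k\overline{\mathrm F}(k)$ a Fock module cyclically generated by $|\varnothing,0\ra$, and equality with the whole space follows by comparing the two sides graded dimension by graded dimension, using the same character bookkeeping as in \eqref{Fchardecomp}. The bosonization formula \eqref{eq:a:psi} I would obtain in the inverse direction: forming the regularized current $\tfrac1{\sqrt2}:\!\td\psi(z)\td\psi^*(z)\!:$ from the normal-ordered product above and reading off its Laurent modes reproduces $a_j=\tfrac1{\sqrt2}\sum_r:\!\td\psi_r\td\psi^*_{j-r}\!:$, which is a direct consequence of the same OPE computation.
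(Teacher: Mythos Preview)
The paper does not prove this proposition at all: it is stated as the boson--fermion correspondence, with a reference to \cite{KacRaina}, and the only comment is the Remark that the unusual index set $r\in\frac12\mathbb{Z}+\frac14$ (with bosons in $\frac12\mathbb{Z}$) differs from the standard half-integer/integer convention only by a harmless relabeling. So there is no ``paper's own proof'' to compare against; the result is quoted as standard.

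Your vertex-operator argument is exactly the standard textbook proof (essentially the one in Kac--Raina), transcribed to the rescaled indices, and it is correct in outline. The BCH rearrangement of $\Gamma^+(z)$ past the creation exponentials gives $\sum_{m\ge1}\frac1m(w/z)^{m/2}$ with the sign you indicate, producing a simple zero for like-type products (hence antisymmetry and the vanishing anticommutators) and a simple pole for the mixed product; the difference of the two expansion domains then yields the formal delta and $\{\td\psi_r,\td\psi^*_s\}=\delta_{r+s,0}$. The ``quarter-integer twist'' you flag as the main obstacle is not a genuine one: substituting $z\mapsto z^2$ converts the setup into the usual NS boson--fermion correspondence with integer boson modes and half-integer fermion modes, so the delta collapses cleanly with no spurious shift. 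One small correction: the character identity you want for the Fock-module statement is the Jacobi triple product (matching fermionic and bosonic characters of $\oplus_k\overline{\mathrm F}(k)$), not \eqref{Fchardecomp}, which concerns the Virasoro decomposition of the smaller space $\mathrm F$.
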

\begin{Remark}
Our numeration of indices differs from the standard one where fermions have half-integer indices and bosons have integer indices.
This difference is nonessential for boson-fermion correspondence
\end{Remark}
Now we can write explicit formula for vectors $|(n+1/4)^2\rangle$ from Proposition \ref{prop:Fockdecomp}. 
\begin{prop}\label{prop:fermhww}
Highest weight vectors of $\Vir$ in $\mathrm{F}$  are given by
\begin{equation}\label{fermhww}
\left\{\begin{aligned}
|(n+1/4)^2\ra&=\prod_{k=1}^{n}\td{\psi}_{-k+1/4}\td{\psi}^*_{-k+1/4}|\varnothing,0\ra,\;\text{ for }  n\geq 0,\\ |(n+1/4)^2\ra&=\prod_{k=1}^{-n}\td{\psi}_{-k+3/4}\td{\psi}^*_{-k+3/4}|\varnothing,0\ra,\; \text{ for } n<0,                                        
\end{aligned}\right.
\end{equation} 

\end{prop}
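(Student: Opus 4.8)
The plan is to verify directly that each vector on the right-hand side of \eqref{fermhww}, call it $v_n$, is a nonzero highest weight vector of $\Vir$ lying in $\mathrm{F}$ with $L_0$-eigenvalue $(n+1/4)^2$. Proposition \ref{prop:Fockdecomp} together with the irreducibility of the modules $\pi_{\Vir}^{(n+1/4)^2}$ then forces $v_n$ to be proportional to $|(n+1/4)^2\ra$, and the stated formula simply fixes a normalization. The whole argument rests on a small set of commutation relations between the operators entering \eqref{La} and the fermions \eqref{bfc}, together with the Clifford relations and the vacuum conditions $L_m|\vac,0\ra=0$ ($m>0$), $a_n|\vac,0\ra=0$ ($n\in\mathbb{Z}_{>0}$) and $\td{\psi}_r|\vac,0\ra=\td{\psi}^*_r|\vac,0\ra=0$ ($r>0$).

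First I would establish the key commutators. Since $L_m$ in \eqref{La} is built only from half-integer Heisenberg modes, one checks that $[L_m,a_n]=0$ for every $n\in\mathbb{Z}$ (the relevant Kronecker deltas never fire), and likewise that $L_m$ commutes with $S$ and with $z^{a_0/\sqrt{2}}$. Hence in computing $[L_m,\td{\psi}(z)]$ only the half-integer oscillator part of \eqref{bfc} contributes, and a (twisted) vertex-operator computation shows that $\td{\psi}(z)$ and $\td{\psi}^*(z)$ are Virasoro primaries of weight $1/4$, i.e.
\begin{equation*}
[L_m,\td{\psi}_r]=\left(-\tfrac{3}{4}m-r\right)\td{\psi}_{m+r},\qquad [L_m,\td{\psi}^*_r]=\left(-\tfrac{3}{4}m-r\right)\td{\psi}^*_{m+r}.
\end{equation*}
I would also record the current action of the integer modes, which follows from formula \eqref{eq:a:psi}: $[a_n,\td{\psi}_r]=\tfrac{1}{\sqrt2}\td{\psi}_{n+r}$ and $[a_n,\td{\psi}^*_r]=-\tfrac{1}{\sqrt2}\td{\psi}^*_{n+r}$ for $n\in\mathbb{Z}$.

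With these in hand the verification is mechanical. The relation $L_0 v_n=(n+1/4)^2 v_n$ follows from $[L_0,\td{\psi}_r]=-r\td{\psi}_r$ by adding the energies of the $2|n|$ fermionic creation operators to the vacuum value $1/16$; the resulting sum collapses to $(n+1/4)^2$ for both families in \eqref{fermhww}. To see $v_n\in\mathrm{F}$ I would show $a_n v_n=0$ for all integer $n>0$: commuting $a_n$ to the right, each term produces a fermion mode with index shifted by the integer $n$, which is either a positive (annihilation) mode reaching $|\vac,0\ra$, or coincides with an already present creation mode and dies by the Pauli relation $\td{\psi}_r^2=\td{\psi}_r^{*2}=0$; the surviving anticommutators vanish because $\delta_{n+r+s,0}$ cannot hold for integer $n$ and the relevant quarter-integer indices $r,s$. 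Since $v_n$ also has integer charge $0$, its integer-oscillator component is the vacuum, and therefore $v_n$ lies in $\mathrm{F}$. Finally, for $L_m v_n=0$ with $m>0$ it suffices to treat $m=1,2$, since these generate $\{L_m\}_{m\ge 1}$: commuting $L_m$ rightward and using $L_m|\vac,0\ra=0$, the primary commutator replaces one factor $\td{\psi}_{-k+1/4}$ (resp. $\td{\psi}_{-k+3/4}$ for the $n<0$ family) by $\td{\psi}_{-k+1/4+m}$, and every resulting term again vanishes either by hitting the vacuum with a positive mode or by the Pauli collision; for $m=1$ the leading coefficient vanishes outright.

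The main obstacle is the single structural input that $\td{\psi},\td{\psi}^*$ are weight-$1/4$ primaries for the half-integer Virasoro \eqref{La}. This is where the twisted character of the bosonization --- the quarter-integer moding, the $1/16$ shift in $L_0$, and the need to discard the integer-mode and zero-mode contributions --- must be handled with care. Once it is granted, all four highest weight checks reduce to bookkeeping with the Clifford relations and the vacuum annihilation conditions.
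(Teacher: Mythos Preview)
Your overall strategy is right, and the parts about $a_n v_n=0$ for $n\in\mathbb{Z}_{>0}$ and $v_n\in\mathrm{F}$ are exactly what the paper does. But the central lemma you rely on is false: the fermions $\td{\psi}(z),\td{\psi}^*(z)$ are \emph{not} primaries of weight $1/4$ for the half-integer Virasoro \eqref{La}, and the commutator $[L_m,\td{\psi}_r]=(-\tfrac34 m-r)\td{\psi}_{m+r}$ does not hold as an operator identity. The point is that although $L_m$ commutes with the integer-mode factor $A(z)$ of $\td{\psi}(z)=A(z)B(z)$, this gives $[L_m,\td{\psi}(z)]=A(z)[L_m,B(z)]$, which is \emph{not} $z^{m+1}\partial_z(AB)+\tfrac14(m{+}1)z^m AB$ because $\partial_z A\neq 0$. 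Concretely, $\td{\psi}_{-5/4}|\vac,0\ra=(\tfrac{1}{\sqrt2}a_{-1}+a_{-1/2}^2)|\vac,1\ra$, and a direct computation with \eqref{La} gives $L_1\td{\psi}_{-5/4}|\vac,0\ra=\tfrac14|\vac,1\ra$, whereas your formula would predict $(-\tfrac34+\tfrac54)\td{\psi}_{-1/4}|\vac,0\ra=\tfrac12|\vac,1\ra$. This also breaks your $L_0$ check, since $[L_0,\td{\psi}_r]=-r\td{\psi}_r$ is already false for $r=-5/4$.

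The paper's fix is precisely the ``trick'' you were hoping to avoid: introduce the full Virasoro $L_n^{\text{full}}=\tfrac12\sum_{j\in\frac12\mathbb{Z}}:a_{n-j}a_j:+\tfrac{1}{16}\delta_{n,0}$, for which the fermions \emph{are} genuine primaries, $[L_n^{\text{full}},\td{\psi}_r]=(-\tfrac{n}2-r)\td{\psi}_{n+r}$. The highest weight conditions are then immediate for $L_k^{\text{full}}$, and since you have already shown $v_n\in\mathrm{F}$ (annihilated by all positive integer $a_j$ and by $a_0$), one has $L_k^{\text{full}}v_n=L_k v_n$ for $k\ge 0$, which transfers the conclusion to the original $L_k$. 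So the order of the argument matters: first establish $v_n\in\mathrm{F}$, then use $L^{\text{full}}$. With that change your proposal becomes the paper's proof.
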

\begin{proof}
First we prove that vectors defined in \eqref{fermhww} belong to $\mathrm{F}\subset \overline{\mathrm{F}}(0)$, i.e. depend only on $a_r$, $r \in \frac12\mathbb{Z}$. Indeed $a_j|(n+1/4)^2\ra=0$, $j \in \mathbb{Z}_{> 0}$ due to commutation relations (which follow from \eqref{bfc})
\begin{equation}\label{eq:[a,psi]}
[a_j,\td{\psi}_r]=\frac1{\sqrt{2}}\tilde{\psi}_{j+r},\qquad [a_j,\td{\psi}^*_r]=\frac{-1}{\sqrt{2}}\tilde{\psi}^*_{j+r}, \quad \text{ where } j\in \frac12\mathbb{Z},r\in\frac12\mathbb{Z}+\frac14.
\end{equation}
Also $a_0|(n+1/4)^2\ra=0$. Then we prove that these vectors satisfy \eqref{eq:n+1/4}. This can be done by direct calculation using formulas \eqref{La} and \eqref{eq:[a,psi]}. We will use another approach, introduce full Virasoro algebra generators by formula
\begin{equation}
L_n^{\text{full}}=\frac12\sum\limits_{j\in\frac12\mathbb{Z}} :a_{n-j}a_j:+\frac{1}{16}\delta_{n,0}. 
\end{equation}
Then we have
\begin{equation}
[L_n^{\text{full}},\td{\psi}_r]=(-\frac{n}2-r)\td{\psi}_{n+r},\quad [L_n^\text{full},\td{\psi}^*_r]=(-\frac{n}2-r)\td{\psi}^*_{n+r},
\end{equation}
Hence we get relation \eqref{eq:n+1/4} for $L_k^{\text{full}}$. But vectors $|(n+1/4)^2\ra$ do not depend on $a_j$, $j \in \mathbb{Z}$ so $L_k^{\text{full}}|(n+1/4)^2\ra=L_k|(n+1/4)^2\ra$, for $k\geq 0$.
\end{proof}

Natural scalar product on the space $\oplus_{k\in\mathbb{Z}}\overline{\mathrm{F}}(k)$ is defined by conjugation $\td{\psi}^+_r=\td{\psi}^*_{-r}$ and unit norm of $|\varnothing,0\ra$. Then it follows from \eqref{fermhww} that vectors $|(n+1/4)^2\ra$ have unit norm. Due to formulae \eqref{eq:a:psi} this product is consistent with Shapovalov form on the $\overline{\mathrm{F}}(k)$. Therefore this product coincides with Shapovalov form on Virasoro submodules from \eqref{Fockdecomp}. 

Now we return to the proof of \eqref{blockquarter}. Introduce vectors $|w(z)\ra^{\pm}=z^{1/16}e^{\pm 2\sqrt{2z}a_{-1/2}}|\varnothing\ra$. One can take any $\epsilon,\epsilon' \in \{-,+\}$ and calculate scalar product
\begin{equation*}
{}^{\epsilon}\la w(1)|w(z)\ra^{\epsilon'}=z^{1/16}\sum_{i=0}^{\infty}\la\varnothing|a_{1/2}^i a_{-1/2}^i|\varnothing\ra\frac{(8\epsilon\epsilon'\sqrt{z})^i}{i!^2}=z^{1/16}e^{4\epsilon\epsilon'\sqrt{z}}.
\end{equation*}
The result coincides with right side of \eqref{blockquarter}.

On the other hand due to \eqref{Fockdecomp} 
 the vectors $|w(z)\ra^{\epsilon}$ could be decomposed into orthogonal summands belonging to  $\pi^{(n+1/4)^2}_{\Vir},\, n\in\mathbb{Z}$.  Moreover
\begin{prop}
 	Vectors $|w(z)\ra^{\epsilon},\, \epsilon=\pm$ decompose into sum of the $\Vir$ Whittaker vectors, namely
 	\begin{equation}
 	|w(z)\ra^{\epsilon}=\sum_{n\in\mathbb{Z}}\mathfrak{l}^{\epsilon}_n |W_n(z)\ra,\quad |W_n(z)\ra\in \pi_{\Vir}^{(n+1/4)^2} \label{quartwirdec}
 	\end{equation}
 	with certain coefficients $\mathfrak{l}^{\pm}_n$ which do not depend on $z$.
 \end{prop}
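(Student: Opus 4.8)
The plan is to show that each vector $|w(z)\ra^{\epsilon}=z^{1/16}e^{c\,a_{-1/2}}|\varnothing\ra$, with $c=\pm2\sqrt{2z}$, is already a $\Vir$ Whittaker vector in the graded completion of $\mathrm{F}$, and then to split it along the decomposition \eqref{Fockdecomp}. The computational engine is the commutation relation
\begin{equation*}
[L_n,a_m]=-m\,a_{n+m},\qquad m\in\tfrac12+\Z ,
\end{equation*}
which follows directly from \eqref{La}; in particular $[L_k,a_{-1/2}]=\tfrac12 a_{k-1/2}$ for all $k\ge1$. Conjugating $L_k$ through the exponential gives $e^{-c a_{-1/2}}L_k e^{c a_{-1/2}}=L_k+\tfrac{c}{2}a_{k-1/2}+\tfrac{c^2}{8}\delta_{k,1}$, the only surviving higher bracket being $[a_{-1/2},a_{1/2}]=-\tfrac12$. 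Using $L_k|\varnothing\ra=0$ and $a_{k-1/2}|\varnothing\ra=0$ for $k\ge1$, together with $c^2/8=z$, I would obtain
\begin{equation*}
L_1|w(z)\ra^{\epsilon}=z\,|w(z)\ra^{\epsilon},\qquad L_k|w(z)\ra^{\epsilon}=0\quad(k\ge2),
\end{equation*}
which are exactly the defining Whittaker conditions \eqref{eq:WhitVir}, now on all of $\mathrm{F}$.

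Next I would record the grading. Each monomial $a_{-1/2}^{m}|\varnothing\ra$ is an $L_0$-eigenvector of weight $1/16+m/2$ and enters with the factor $z^{1/16+m/2}$, so $|w(z)\ra^{\epsilon}=z^{L_0}|w(1)\ra^{\epsilon}$. Passing to \eqref{Fockdecomp}, which is a decomposition of $\mathrm{F}$ as a direct sum of $\Vir$-modules, the projection $P_n$ onto $\pi_{\Vir}^{(n+1/4)^2}$ is a $\Vir$-module map and hence commutes with every $L_k$ and with $L_0$. Because the $L_0$-grading of $\mathrm{F}$ is locally finite (for each weight only finitely many summands $(n+1/4)^2$ contribute), the projections $P_n|w(z)\ra^{\epsilon}$ are well defined and sum back to $|w(z)\ra^{\epsilon}$. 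They inherit $L_1(P_n|w(z)\ra^{\epsilon})=z\,P_n|w(z)\ra^{\epsilon}$, $L_k(P_n|w(z)\ra^{\epsilon})=0$ for $k\ge2$, and the homogeneity $P_n|w(z)\ra^{\epsilon}=z^{L_0}P_n|w(1)\ra^{\epsilon}$, which forces the shape $P_n|w(z)\ra^{\epsilon}=z^{(n+1/4)^2}\sum_{N\ge0}z^{N}|N\ra_n$ with $z$-independent $|N\ra_n\in\pi_{\Vir}^{(n+1/4)^2}$ — precisely \eqref{eq:WhitVir} with $\D=(n+1/4)^2$.

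Finally I would identify each component with a multiple of the Whittaker vector $|W_n(z)\ra$. Since $(n+1/4)^2$ is generic (as used already in Proposition \ref{prop:Fockdecomp}), the Shapovalov form on $\pi_{\Vir}^{(n+1/4)^2}$ is nondegenerate and the Whittaker vector with prescribed leading term is unique. Writing the $N=0$ term as $|0\ra_n=\mathfrak{l}^{\epsilon}_n|(n+1/4)^2\ra$ — legitimate because the minimal weight space is one-dimensional — uniqueness yields $P_n|w(z)\ra^{\epsilon}=\mathfrak{l}^{\epsilon}_n|W_n(z)\ra$, and summing over $n$ recovers \eqref{quartwirdec}. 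The coefficient $\mathfrak{l}^{\epsilon}_n$ is read off from the $z$-independent vector $P_n|w(1)\ra^{\epsilon}$, so it does not depend on $z$.

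I expect the genuine content to sit in the first step (the normal-ordered Virasoro action turning $|w(z)\ra^{\epsilon}$ into a Whittaker vector), while the only delicate bookkeeping is in the last step: one must ensure that the single scalar $\mathfrak{l}^{\epsilon}_n$ which detaches the $z$-dependence from the module element is the same one furnished by the normalized highest weight vector. This is exactly where the one-dimensionality of the bottom weight space and the uniqueness of Whittaker vectors for generic $\D$ are indispensable.
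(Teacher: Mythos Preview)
Your proof is correct and follows the same approach as the paper. The paper's own argument is extremely terse: it simply states that it suffices to verify $L_1|w(z)\ra^{\pm}=z|w(z)\ra^{\pm}$ and $L_2|w(z)\ra^{\pm}=0$ ``by standard calculation'' and leaves the projection/uniqueness argument entirely implicit, whereas you carry out the conjugation computation explicitly and spell out why the componentwise projections onto each $\pi_{\Vir}^{(n+1/4)^2}$ are scalar multiples of the canonical Whittaker vectors.
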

 \begin{proof}
 	It is enough to check that 
 	\begin{equation*}
 	L_1|w(z)\ra^{\pm}=z|w(z)\ra^{\pm}, \quad L_2|w(z)\ra^{\pm}=0. 
 	\end{equation*}
 	This is done by standard calculation.
 	%For this calculation we use well-known formula
 	%\begin{equation}
 	%e^A B e^{-A}=e^{\mathrm{Ad} A}B 
 	%\end{equation}
 	%Calculation give us
 	%\begin{eqnarray}
% 	L_1|w(z)\ra^{\pm}=L_1 e^{2\sqrt{z}a_{-1/2}}|\varnothing\ra=4ze^{2\sqrt{z}a_{-1/2}}[a_{-1/2},[a_{-1/2},L_1]]|\varnothing\ra=z|w(z)\ra^{\pm}\\
% 	L_2|w(z)\ra^{\pm}=0, \quad \textrm{because} \quad [a_{-1/2},[a_{-1/2},L_2]]=0.
% 	\end{eqnarray}
 \end{proof}
It is clear from the definition of $|w(z)\ra^{\pm}$ that $
\mathfrak{l}_n^{-}=(-1)^n\mathfrak{l_n}^{+}.$ It follows from the definition that $\mathfrak{l}_n^{\epsilon}$ are given by formula 
\begin{equation}
\mathfrak{l}_n^{\epsilon}= \la (n+1/4)^2 |w(1)\ra^{\epsilon},\label{dirformula}
\end{equation}
and we calculate them using formulas \eqref{fermhww}. 

\begin{prop}\label{prop:l_nq}
	Coefficients $l_n^{+}$, $n\in\mathbb{Z}$ are given by formula $l_n^+=\sqrt{B_n}$.
%	\begin{equation}\label{eq:ln:algebraic}
%	l_n^+=\frac{2^{k(k+1)/2}}
%	{\prod\limits_{i=0}^{k-1}(2i+1)^{k-i}}  
%	\end{equation}
\end{prop}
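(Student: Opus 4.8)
The plan is to compute the overlap \eqref{dirformula} directly from the fermionic formula \eqref{fermhww}. Since $|w(1)\ra^{+}=e^{2\sqrt2 a_{-1/2}}|\vac\ra$ involves only the mode $a_{-1/2}$, and the Shapovalov form is diagonal on Heisenberg monomials, the number $\mathfrak{l}_n^{+}=\la(n+1/4)^2|w(1)\ra^{+}$ picks out a single coefficient. Writing $N=2n^2+n$, which is fixed by the grading $(n+1/4)^2=\tfrac1{16}+\tfrac N2$, one has $\mathfrak{l}_n^{+}=2^{N/2}\kappa_n$, where $\kappa_n$ is the coefficient of $a_{-1/2}^{N}|\vac\ra$ in $|(n+1/4)^2\ra$; here the factor $2^{N/2}=(2\sqrt2)^N/2^N$ combines the exponential series with $\la\vac|a_{1/2}^N a_{-1/2}^N|\vac\ra=N!/2^N$. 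Thus everything reduces to extracting one coefficient of the vector \eqref{fermhww}.

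To extract it I would use that $a_{-1/2}$ is quadratic in the fermions by \eqref{eq:a:psi}, so $e^{2\sqrt2 a_{-1/2}}$ acts on modes as a coherent shift; from \eqref{eq:[a,psi]},
\[
e^{-2\sqrt2 a_{-1/2}}\td{\psi}_r\,e^{2\sqrt2 a_{-1/2}}=\sum_{m\ge0}\frac{(-2)^m}{m!}\td{\psi}_{r-m/2},\qquad e^{-2\sqrt2 a_{-1/2}}\td{\psi}^*_r\,e^{2\sqrt2 a_{-1/2}}=\sum_{m\ge0}\frac{2^m}{m!}\td{\psi}^*_{r-m/2}.
\]
Conjugating \eqref{fermhww} by $\td{\psi}^+_r=\td{\psi}^*_{-r}$ to form the bra and commuting the exponential onto the invariant vacuum, $\mathfrak{l}_n^{+}$ becomes a free-fermion vacuum expectation of $2n$ shifted modes. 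Wick's theorem collapses it to the $n\times n$ determinant $\det(C_{kl})$ of the two-point contractions, and a short computation using the partial alternating binomial identity $\sum_{m=0}^{j}(-1)^m\binom Mm=(-1)^j\binom{M-1}{j}$ yields
\[
C_{kl}=-\frac{2^{\,2k+2l-1}}{(2k+2l-1)\,(2k-1)!\,(2l-1)!},\qquad 1\le k,l\le n.
\]

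This is a Cauchy kernel $1/(x_k+y_l)$ with $x_k=y_k=2k-\tfrac12$, scaled by the row factor $2^{2k-1/2}/(2k-1)!$ and the analogous column factor. Pulling these out and applying the Cauchy determinant formula, the Vandermonde-type numerator, the denominator $\prod_{k,l}(2k+2l-1)$ and the prefactors reassemble into $\pm\,2^{2n^2+n}\big/\prod_{i=0}^{2n-1}(2i+1)^{\,2n-i}$; since $2n^2+n=N$ and the denominator is exactly the product in \eqref{Bn}, this equals $\pm\sqrt{B_n}$, as claimed. The same product is the product of hook lengths of the staircase $(2n,2n-1,\dots,1)$, in which $2i+1$ occurs with multiplicity $2n-i$; equivalently $|(n+1/4)^2\ra$ is, via the boson--fermion correspondence, the Schur function of that staircase (a self-conjugate partition, hence a polynomial in the odd power sums, i.e. in the half-integer modes only), and $\kappa_n$ is its $p_1^N$-coefficient. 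This is the combinatorial content of the determinant. The case $n<0$ is identical after replacing $-k+1/4$ by $-k+3/4$ in \eqref{fermhww}, and reproduces the second line of \eqref{Bn}.

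The main obstacle is precisely this last evaluation: recognizing the contraction matrix as a scaled Cauchy matrix and pushing the determinant-to-product bookkeeping through so that it reproduces \eqref{Bn} exactly. A minor, convention-dependent point is the overall sign: with the normalization \eqref{fermhww} the computation in fact gives $\mathfrak{l}_n^{+}=\pm\sqrt{B_n}$ (a sign such as $(-1)^n$ in the conventions above, arising from the common factor $-1$ in each $C_{kl}$ together with positivity of the Cauchy part), which may be absorbed into the sign of the highest weight vector and is in any case immaterial for the target identity \eqref{blockquarter}, where only $(\mathfrak{l}_n^{+})^2=B_n$ enters.
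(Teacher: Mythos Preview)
Your argument is correct (with the harmless sign caveat you already flag), but the route is genuinely different from the paper's. The paper's proof is a one--liner once the right identification is made: it observes that the fermionic monomials in \eqref{fermhww} correspond, under the boson--fermion dictionary, to the Schur functions of the staircase partitions $(k,k-1,\dots,1)$ with $k=2n$ (resp.\ $k=-2n-1$), and then invokes the standard symmetric--function identity $\langle p_1^{|\lambda|},S_\lambda\rangle=|\lambda|!/h(\lambda)$. Since the hook lengths of a staircase are precisely the odd numbers $2i+1$ with multiplicity $k-i$, this immediately gives $l_n^{+}=2^{N}/\prod_{i=0}^{k-1}(2i+1)^{k-i}=\sqrt{B_n}$.

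Your approach is the fermionic computation ``by hand'': conjugate the product \eqref{fermhww} through the coherent state, apply Wick's theorem, and recognize the resulting $n\times n$ contraction matrix as a scaled Cauchy matrix. This is sound --- your formula for $C_{kl}$ is correct, and the Cauchy determinant does reassemble into the right product (I checked $n=1,2$); the hook--length product is exactly the Cauchy denominator times factorial prefactors, which is why the two computations match. What the paper's approach buys is brevity and transparency: the whole combinatorics is packaged into a known identity from Macdonald. What yours buys is self--containment --- you never need to quote the Schur/hook formula --- at the price of the determinant bookkeeping you yourself identify as the main obstacle. Your closing remark that the determinant ``is'' the hook--length product of the staircase is really the bridge between the two arguments. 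The sign $(-1)^n$ you see is genuine with the ordering conventions in \eqref{fermhww} and is indeed invisible in \eqref{blockquarter}.
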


\begin{proof}
Recall that under boson-fermion correspondence decomposable fermionic vectors correspond to Schur polynomials (\cite[Lec. 6.]{KacRaina}). In particular, vectors in right side of \eqref{fermhww} corresponds to polynomials with staircase diagram after substitution $p_j \mapsto \sqrt{2}a_{-j/2}$, where $p_j$ are power sum polynomials
\begin{equation}
|(n+1/4)^2\ra= S_{(k,k-1,\ldots 1)}\left(\sqrt{2}a_{-1/2},\sqrt{2}a_{-1},\sqrt{2}a_{-3/2},\ldots\right)|\varnothing\ra, \quad k=\left\{\begin{aligned}
2n,\, n>0\\   -2n-1,\, n<0,
\end{aligned}\right.\label{boshww}
\end{equation}
Schur polynomials satisfy 
$\langle p_1^{|\lambda|},S_\lambda\rangle=|\lambda|!/h(\lambda) $, where $h(\lambda)=\prod_{s \in \lambda}h(s)$ product of hook length (see e.g. \cite[Sec. 1.4, Ex. 3]{MacdonaldBook}). Therefore 
\begin{equation*}
l_n^+=\la\varnothing| S_{(k,k-1,\ldots)}(\{\sqrt{2} a_{-r}\})(\frac{(2\sqrt{2})^N}{N!}a_{-1/2})^{N}|\varnothing\ra=2^{N}/
{\prod\nolimits_{i=0}^{k-1}(2i+1)^{k-i}}=\sqrt{B_n},
\end{equation*}
where $N=k(k+1)/2$.
\end{proof}
Now we calculate ${}^{\epsilon}\la w(1)|w(z)\ra^{\epsilon'}$
using decomposition \eqref{quartwirdec} and Proposition \ref{prop:l_nq} and get 
\begin{equation*}
{}^{\epsilon}\la w(1)|w(z)\ra^{\epsilon'}=\sum_{n\in\mathbb{Z}} (\epsilon\epsilon')^n B_n \mathcal{F}((n+1/4)^2|z).
\end{equation*}
%Coefficients $\mathfrak{l}_n^{\epsilon}$ satisfy evident relation $\mathfrak{l}_n^{-}=(-1)^n\mathfrak{l_n}^{+}$.
%So up to coefficients $\mathfrak{l}_n^{+}$ we have \eqref{blockquarter}
which coincides with the left side of \eqref{blockquarter}.

\begin{Remark}
	Actually the space $\mathrm{F}$ has a natural structure of the basic module over $\widehat{\mathfrak{sl}}(2)$. In particular one can introduce operator $h_0$ such that $h_0v=2nv$ if $v \in \pi_{\Vir}^{(n+1/4)^2}$. Then due to \eqref{Kiev} and calculations above one can write Painlev\'e III($D_8$) $\tau$ function for $\sigma=1/4$  as 
\begin{equation}
\tau(1/4,s|z)\propto \sum_{n\in\mathbb{Z}}B_n(-s)^n \kr{F}((n+1/4)^2|z)={}^-\langle w(1)|s^{h_0/2}|w(z) \rangle^+.
\end{equation}
Taking into account AGT correspondence \cite{AGT} the right side of this formula coincides with the dual partition function introduced in \cite{NO:2003} (see eq. (5.25) in loc. cit.).
\end{Remark}

\section{Bilinear relations from the algebra $\sfF\oplus\NSR$}
\label{sec:FxNSR}
\subsection{Algebra $\sfF\oplus\NSR$ and its conformal blocks}
\label{ssec:FxNSR}
The $\sfF\oplus \NSR$ algebra is a direct sum of the free-fermion algebra $\mathsf{F}$ with generators $f_r$ ($r\in\mathbb{Z}+\delta$) and $\NSR$ (Neveu-Schwarz-Ramond or Super Virasoro) algebra with generators $L_n, G_r$ ($n\in\mathbb{Z}, r\in\mathbb{Z}+\delta$, $\delta=0,1/2$). These generators satisfy commutation relations
\begin{equation}
\begin{aligned}
&\{f_r,f_s\}=\delta_{r+s,0}, \quad \{f_r,G_s\}=0 \\
&[L_n,L_m]=(n-m)L_{n+m}+ \frac{(n^3-n)}8c_{\nsr}\delta_{n+m,0}\\
&\{G_r,G_s\}=2L_{r+s}+\frac{1}{2}c_{\nsr}\left(r^2-\frac{1}{4}\right)\delta_{r+s,0}\\~
&[L_n,G_r]=\left(\frac{1}{2}n-r\right)G_{n+r}.
\end{aligned}\label{def:FxNSR}
\end{equation}
It is convenient to express the central charge by
\[
c_{\nsr}=1+2Q^2,\quad Q=b+b^{-1}.
\]
Case of $\delta=1/2$ i.e. half-integer indices $r$ of $G_r$ and $f_r$ is called $\BNS$ sector of defined above algebras.
Case of $\delta=0$ i.e. integer indices is called $\BR$ sector. Remark that $c_{\nsr}$ differs from the Virasoro central charge $c=\frac32c_{\nsr}$.

Below we will use parametrization of the highest weight $\Delta$ by
\begin{equation}\label{Delta}
\D^{\delta}=\frac{1-2\delta}{16}+\frac12\left(\frac{Q^2}{4}-P^2\right), 
\end{equation}
where $\D^{\delta}\equiv \D^{\NS}$ if $\delta=1/2$, and $\D^{\delta}\equiv \D^{\R}$ if $\delta=0$.

In $\BNS$ case we denote by $\pi^{\D^{\NS}}_{\sfF\oplus \NSR}$ a Verma module of the $\sfF\oplus \NSR$ algebra. This module is isomorphic to a tensor product of Verma modules $\pi^{\NS}_{\sfF}$ and $\pi^{\D^{\NS}}_\NSR$ which are generated by the highest weight vectors 
$|1\ra$ and $|\D^{\NS}\ra$ defined by
\[
f_r|1\ra=0, \quad r>0,
\]
and
\[
L_0|\D^{\NS}\ra=\D^{\NS}|\D^{\NS}\ra; \qquad L_n|\D^{\NS}\ra=0, \quad G_r|\D^{\NS}\ra=0, \quad n,r>0.
\]
%The representation space is spanned by vectors obtained by the action of generators with negative indices on the highest weight vector. 
We denote the highest weight vector  $|1\ra\otimes |\D^{\NS}\ra$ as $\overline{|\D^{\NS}\ra}$.

In the $\BR$ sector we have an analogous construction $\pi^{\D^{\R}}_{\sfF\oplus\NSR}=\pi_{\sfF}^{\R}\otimes\pi^{\D^{\R}}_{\sfF\oplus\NSR}$. 
Here $\pi_{\sfF}^{\R}$ is a Verma module with two highest weight vectors $|1^{\pm}\ra$, defined by
\begin{equation}
f_r|1^{\pm}\ra=0, \quad r>0,\qquad
f_0|1^{\pm}\ra=\frac{1}{\sqrt2}|1^{\mp}\ra.\label{hwwF}
\end{equation}
By $\pi^{\D^{\R}}_{\NSR}$ we denote Verma module with highest weight vectors $|\D^{\R},\pm\ra$ defined by
\begin{equation}
\begin{aligned}
G_r|\D^{\R},\pm\ra=0, \quad r>0, \qquad
L_n|\D^{\R},\pm\ra=0, \quad n>0,\\
G_0|\D^{\R},\pm\ra=-\frac{iP}{\sqrt2}|\D^{\R},\mp\ra, \qquad
L_0|\D^{\R},\pm\ra=\D^{\R}|\D^{\R},\pm\ra.\label{hwwR}
\end{aligned}
\end{equation}
Actually the formula for action $L_0$ follows from relation $G_0^2=L_0-c_{\nsr}/16$ and parametrization \eqref{Delta}.

Let us denote highest weight vectors of $\pi^{\D^{\R}}_{\sfF\oplus\NSR}$ as
\begin{equation}
|1^{\mu}\ra\otimes|\D^{\R},\nu\ra=\overline{|\mu;\D^{\R},\nu\ra},\quad \mu,\nu=\pm. 
\end{equation}
We define action of $G_0$ on the highest weight vector $\overline{|\mu;\D^{\R},\nu\ra}$ by
\begin{equation*}
G_0\overline{|+;\D^{\R},\nu\ra}=|1^+\ra\otimes G_0|\D^{\R},\nu\ra, \quad, \nu=\pm. \label{anticomm}
\end{equation*}
Then due to anticommutativity of $f_0$ and $G_0$
\begin{equation}
G_0\overline{|\mu;\D^{\R},\nu\ra}=\mu |1^\mu\ra\otimes G_0|\D^{\R},\nu\ra, \quad \mu, \nu=\pm \label{eq:hww_conj} . 
\end{equation}

Now let us define irregular limit of conformal block for $\NSR$ algebra.
At first we define $\NSR$ Whittaker vectors $|W_{\delta}(z)\ra$ for both sectors
(see \cite{Ito} for $\BR$ sector, \cite{Belavin:2007zz} for $\BNS$ sector)
\begin{eqnarray}
|W_{\NS}(z)\ra=z^{\D^{\NS}}\sum\limits_{2N=0}^{\infty}z^{N}|N\ra,\qquad |N\ra^{\NS}\in\pi^{\D^{\NS}}_{\NSR} ,\qquad L_0|N\ra^{\NS}=(\D^{\NS}+N)|N\ra^{\NS},\\
|W_{\R,\pm}(z)\ra=z^{\D^{\R}}\sum\limits_{2N=0}^{\infty}z^{N}|N\ra^{\R,\pm},\qquad |N\ra^{\R,\pm}\in\pi^{\D^{\R}}_{\NSR} ,\qquad L_0|N\ra^{\R,\pm}=(\D^{\R}+N)|N\ra^{\R,\pm},
\end{eqnarray}
where vectors $|N\ra^{\NS}$ and $|N\ra^{\R,\pm}$ satisfy
\begin{equation}
G_{1/2}|N\ra^{\NS}=|N-1/2\ra^{\NS}, \quad N>0, \quad 
G_{3/2}|N\ra^{\NS}=0,
\end{equation} and
\begin{equation}
L_{1}|N\ra^{\R,\pm}=1/2|N-1\ra^{\R,\pm}, \quad N>0, \quad 
G_{1}|N\ra^{\R,\pm}=0.\label{WhitaxR}
\end{equation} 
Equivalently, in terms of Whittaker vectors last equations could be written as
\begin{equation}
G_{1/2}|W_{\NS}(z)\ra=z^{1/2}|W_{\NS}(z)\ra, \quad G_{r}|W_{\NS}(z)\ra=0, r\geq3/2
\end{equation}  for $\BNS$ sector and
\begin{equation}
L_{1}|W_{\R,\pm}(z)\ra=\frac12z|W_{\R,\pm}(z)\ra, \quad G_r|W_{\R,\pm}(z)\ra=0, r>0 \label{WhitaxRW}
\end{equation} for $\BR$ sector.
These conditions coupled to normalization of $|0\ra^{\NS}=|\D^{\NS}\ra,\quad |0\ra^{\R,\pm}=|\D^{\R},\pm\ra$, 
determine Whittaker vectors completely for generic values of $\Delta^\delta$ as in Virasoro case. To be more precise we say that $P$ is generic if $P\not \in \{\frac12(mb+nb^{-1})|m,n \in \mathbb{Z}\}$.
For generic $P$ Shapovalov form (see below) is nondegenerate and the Verma module is irreducible \cite{IKS}. And for corresponding values of $\Delta$ the Whitteker vector is well defined.  

Formulas for action of the operators $L_k, G_r$ for $k>1, r>1/2$ on $|W_{\BNS}(z)\ra, |W_{\R,\pm}(z)\ra$ follow from the $\NSR$ commutation relations.

We define complex Shapovalov form on $\pi^{\D^{\NS}}_{\sfF\oplus\NSR}$ and $\pi^{\D^{\R}}_{\sfF\oplus\NSR}$
in which conjugation of operators is
\begin{equation}
L_{n}^+=L_{-n},\quad
G_{r}^+=G_{-r},\quad
f_{r}^+=-f_{-r} \quad\label{eq:conj:operators}
\end{equation}
In order to define scalar product completely we should fix Shapovalov form on highest weight vectors.
In $\BNS$ sector we have only one highest weight vector $\overline{|\D^{\NS}\ra}$ which we normalize so that
$\overline{\la\D^{\NS}}|\overline{\D^{\NS}\ra}=1$
In $\BR$ sector we have additional scalar products between different highest weight vectors. 
Due to skew-symmetry of $f_0$ operator we have that $\la 1^+|1^-\ra=0$ and $\la 1^+| 1^+\ra=-\la 1^-| 1^-\ra$. 
We normalize $\la 1^+| 1^+\ra=-\la 1^-| 1^-\ra=1$. Also we state $\la \D^{\R},+|\D^{\R},-\ra=0$ and
$\la \D^{\R},\pm|\D^{\R},\pm\ra=1$.

The irregular (or Whittaker, or Gaiotto) limit of $\NSR$ conformal block is defined by
\begin{equation}\label{eq:NSRWhitcom}
\kr{F}_{c_{\nsr}}(\D^{\NS}|z)=\la W_{\NS}(1)|W_{\NS}(z)\ra,\quad \kr{F}_{c_{\nsr}}(\D^{\R}|z)=\la W_{\R,\pm}(1)|W_{\R,\pm}(z)\ra,
\end{equation}
where the $\BR$ conformal block does not depend on subscript of the $\BR$ Whittaker vector. That is because properties \eqref{hwwR}
have symmetry in interchange of vectors $|\D^{\R},\pm\ra$.

\begin{Remark}
Scalar product $\la \D^{\R},+|\D^{\R},-\ra$ is not zero in general, but one can choose different pair of highest weight vectors
using the formulas
\begin{equation*}
|\D^{\R},+\ra'=|\D^{\R},+\ra-\alpha |\D^{\R},-\ra, \qquad
|\D^{\R},-\ra'=|\D^{\R},-\ra-\alpha |\D^{\R},+\ra.      
\end{equation*}
where non-primed vectors are normalized on $1$.
Indeed $|\D^{\R},\pm\ra'$ satisfies \eqref{hwwR} just as well as $|\D^{\R},\pm\ra$.
Moreover, if $\alpha$ is a solution of the equation
\begin{equation*}
0=(1+\alpha^2)\la \D^{\R},+|\D^{\R},-\ra-2\alpha
\end{equation*}
then we get ${}'\la\D^{\R},-|\D^{\R},+\ra'=0$.
\end{Remark}

\subsection{Verma module decomposition}
\label{ssec:decomp}
Let us recall the free-field realization of the $\NSR$ algebra. Consider the algebra generated by $c_n, n\in\mathbb{Z}$ and $\psi_r, r\in\mathbb{Z}+\delta$ (free boson and free fermion) with relations
\[
[c_n,c_m]=n\delta_{n+m,0}, \quad
[c_n,\psi_r]=0, \quad
\{\psi_r,\psi_s\}=\delta_{r+s,0}.
\]
Consider two sets of such generators which we will distinguish by superscript $\mp$ and
add to them zero mode $c_0^{\mp}$ as $\mp i\hat{P}$. We will omit the superscript when it is not confusing.
Then a Fock representation of this algebra is generated by a vacuum vector $|P\ra$
for $\BNS$ sector and vacuum vectors $|P\ra^{\pm}$ for $\BR$ sector. 
Vector $|P\ra$ satisfy $\psi_r|P\ra=c_n|P\ra=0$, $\hat{P}|P\ra=P|P\ra$ for $r,n >0$ and analogously for vectors $|P\ra^{\pm}$ .
For $\BR$ case we should also add $\psi_0^{\mp}|P\ra^{\mu}=\pm (1/\sqrt{2})|P\ra^{-\mu}$, $\mu =\pm$.
On this Fock module we can define an action of the $\NSR$ algebra by the formulae
\begin{equation}
\begin{aligned}
&L_n=\frac12\sum_{k\neq 0,n}c_k c_{n-k}+\frac12\sum_{r} r\psi_{n-r}\psi_r+\frac{i}{2}\left(Q n\mp 2\hat{P}\right)c_n,\quad n\neq 0,\\
&L_0=\sum_{k>0} c_{-k} c_k+\sum_{r>0}r \psi_{-r}\psi_r+\frac{1-2\delta}{16}+\frac{1}{2} \left(\frac{Q^2}{4}-\hat{P}^2 \right),\label{eq.FFR} \\
&G_r=\sum_{n\neq 0}c_n \psi_{r-n}+i (Q r\mp \hat{P})\psi_r.
\end{aligned}
\end{equation}
Operators $c_n^{\mp}, \psi_r^{\mp}$ with two different signs give us two different free-field realizations of $\NSR$ algebra.

Recall that we say that $P$ is generic if $P\not \in \{\frac12(mb+nb^{-1})|m,n \in \mathbb{Z}\}$.
For generic $P$ the $\NSR$ module defined by \eqref{eq.FFR} is irreducible (see e.g. Case V in \cite{IKS}) and is isomorphic to the Verma module $\pi_{\NSR}^{\Delta^{\NS}}$
in $\BNS$ sector and to $\pi^{\D^{\R}}_{\NSR}$ in $\BR$ sector. Weights $\D^{\NS}$ and $\D^{\R}$ are defined by \eqref{Delta}.
Operators $c_n^{\mp}, \psi_r^{\mp}$ acting on
$\pi^{\D^{\delta}}_{\NSR}$ are conjugated by the so-called
super Liouville reflection operator.

%The sign $\mp$ in \eqref{eq.FFR} refers to the existence of two free-field representations of the same Verma module. 
%We denote the corresponding generators by $c_n^-, \psi_r^-$ and $c_n^+, \psi_r^+$. 

As the main tool we will use action of the algebra $\Vir\oplus \Vir$ on the representations $\pi^{\D^{\delta}}_{\sfF\oplus\NSR}$, $\delta=0,1/2$
(following \cite{CSS}, \cite{Lashkevich}).
The generators of the algebra $\Vir\oplus \Vir$ are defined by formulas
\begin{equation}
\begin{aligned}
&L_{n}^{\scriptscriptstyle{(1)}}=\frac{b^{-1}}{b^{-1}-b}L_{n}-\frac{b^{-1}+2b}{b^{-1}-b}\left(\frac12\sum\limits_{r\in\mathbb{Z}+\delta}r:f_{n-r}f_{r}:+\frac{1-2\delta}{16}\delta_{n,0}\right)+\frac{1}{b^{-1}-b}\sum\limits_{r\in\mathbb{Z}+\delta}f_{n-r}G_r,\\
&L_{n}^{\scriptscriptstyle{(2)}}=\frac{b}{b-b^{-1}}L_{n}-\frac{b+2b^{-1}}{b-b^{-1}}\left( \frac12\sum\limits_{r\in\mathbb{Z}+\delta}r:f_{n-r}f_{r}:+\frac{1-2\delta}{16}\delta_{n,0}\right)+\frac{1}{b-b^{-1}}\sum\limits_{r\in\mathbb{Z}+\delta}f_{n-r}G_r \label{Vir12}.
\end{aligned}
\end{equation}
The expression in parentheses from second summand is just expression for $L_n^f$ which define representation of $\Vir$ $c=1/2$ algebra on $\pi_{\sfF}$
in appropriate sector.
Note that the expressions for $L_{n}^{(\eta)}$, $\eta=1,2$ contain infinite sums and belong to certain completion of the universal enveloping algebra
of $\sfF\oplus \NSR$. %The operators $L_{n}^{(\eta)}$ act on any highest weight representation of $\sfF\oplus \NSR$. 
Conjugation of $L_{n}^{(\eta)}$ is standard:
$L_{n}^{(\eta)+}=L_{-n}^{(\eta)}$, due to \eqref{eq:conj:operators}.

It is convenient to express the central charge and the highest weights of the Virasoro algebra by
\begin{align}
\Delta(P,b)=\frac{Q^2}{4}-P^2, \;\; c(b)=1+6Q^2, \quad \text{where} \quad Q=b+b^{-1}. \label{eq:cDelta:Vir}
\end{align} 
Then the central charges of these $\Vir^{\scriptscriptstyle{(1)}}$ and $\Vir^{\scriptscriptstyle{(2)}}$ subalgebras are equal to 
\begin{align}
c^{(\eta)}=c(b^{(\eta)}),\; \eta=1,2,\quad \text{where}\quad  (b^{\scriptscriptstyle{(1)}})^2=\frac{2 b^2}{1-b^2}, \quad (b^{\scriptscriptstyle{(2)}})^{-2}=\frac{2 b^{-2}}{1-b^{-2}}. \label{cc}
\end{align}
Note that the symmetry $b\leftrightarrow b^{-1}$ permutes $\Vir^{\scriptscriptstyle{(1)}}$ and $\Vir^{\scriptscriptstyle{(2)}}$. Here and below $b^2\neq 0,1$.

Now consider the space $\pi^{\D^{\NS}}_{\sfF\oplus \NSR}$ as a representation of $\Vir \oplus \Vir$. Clearly, the vector $\overline{|\Delta^{\NS}\ra}=|1\ra\otimes |\Delta^{\NS}\ra$ is a highest weight vector with respect to $\Vir \oplus \Vir$.
This vector generates a Verma module $\pi^{\D^{\scriptscriptstyle{(1)}},\D^{\scriptscriptstyle{(2)}}}_{\Vir\oplus \Vir}$. 
The highest weight $(\D^{\scriptscriptstyle{(1)}},\D^{\scriptscriptstyle{(2)}})$ can be found from \eqref{Vir12}, namely
\begin{align}\label{eq:DeltaVirNSR}
\D^{\scriptscriptstyle{(1)}}=\frac{b^{-1}}{b^{-1}-b}\D^{\NS}, \quad  \D^{\scriptscriptstyle{(2)}}=\frac{b}{b-b^{-1}}\D^{\NS}
\end{align}
The whole space $\pi^{\D^{\NS}}_{\sfF\oplus \NSR}$ is larger than $\pi^{\D^{\scriptscriptstyle{(1)}},\D^{\scriptscriptstyle{(2)}}}_{\Vir\oplus \Vir}$. The following decomposition was proven in \cite{BBFLT}. 

\begin{thm} \label{pr:FNSR}
 For generic $P$ the space $\pi^{\D^{\NS}}_{\sfF\oplus \NSR}$ is isomorphic to the sum of $\Vir\oplus \Vir$ modules
\begin{equation}
\begin{aligned}
\pi^{\D^{\NS}}_{\sfF\oplus \NSR}\cong\bigoplus\limits_{2n\in\mathbb{Z}}\pi^n_{\Vir\oplus \Vir} \label{FxNSR}.
\end{aligned}
\end{equation}
The highest weight $(\D^{\scriptscriptstyle{(1)}}_{n},\D^{\scriptscriptstyle{(2)}}_{n})$ of the Verma module $\pi^n_{\Vir\oplus \Vir}$ is defined by $\Delta^{(\eta)}_{n}=\Delta(P^{(\eta)}_n,b^{(\eta)}),$ $\eta=1,2$, 
where 
\begin{align}
P^{\scriptscriptstyle{(1)}}_{n}=P^{\scriptscriptstyle{(1)}}+nb^{\scriptscriptstyle{(1)}}, \; P^{\scriptscriptstyle{(2)}}_{n}=P^{\scriptscriptstyle{(2)}}+n\left(b^{\scriptscriptstyle{(2)}}\right)^{-1}, \qquad P^{\scriptscriptstyle{(1)}}=\frac{P}{\sqrt{2-2 b^2}}, \; P^{\scriptscriptstyle{(2)}}=\frac{P}{\sqrt{2-2 b^{-2}}}\label{momentum}.
\end{align}
\end{thm}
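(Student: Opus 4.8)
The plan is to prove the decomposition \eqref{FxNSR} by the same three-step scheme used for Proposition \ref{prop:Fockdecomp}: a character identity, an inductive construction of $\Vir\oplus\Vir$ highest weight vectors that are pairwise orthogonal, and irreducibility of the Verma modules they generate. The underlying structural fact that makes this work is that for generic $P$ the Shapovalov form on $\pi^{\D^{\NS}}_{\sfF\oplus\NSR}$ is nondegenerate: it is the product of the free-fermion form on $\pi^{\NS}_{\sfF}$ and the $\NSR$ form on $\pi^{\D^{\NS}}_{\NSR}$, the latter nondegenerate by \cite{IKS}. Moreover, by \eqref{eq:conj:operators} the generators satisfy $L_n^{(\eta)+}=L_{-n}^{(\eta)}$, so the orthogonal-complement construction is compatible with the $\Vir\oplus\Vir$ action, exactly as in the Fock-module case.

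First I would establish the character identity. Grading by the total energy, one checks directly from \eqref{Vir12} that $L_0^{(1)}+L_0^{(2)}=L_0+L_0^{f}$, so the relevant $L_0$-grading is the honest one on $\pi^{\NS}_{\sfF}\otimes\pi^{\D^{\NS}}_{\NSR}$. The character of the left side is then $q^{\D^{\NS}}(\prod_{k\geq1}(1+q^{k-1/2}))^{2}/\prod_{k\geq1}(1-q^{k})$, the square reflecting the two half-integer-moded towers (the fermion $f$ and the supercurrent $G$). On the right side, from \eqref{momentum} and \eqref{cc} one computes $(b^{(1)})^{2}+(b^{(2)})^{-2}=-2$ and $P^{(1)}b^{(1)}+P^{(2)}(b^{(2)})^{-1}=0$, whence the linear-in-$n$ term cancels and $\D^{(1)}_{n}+\D^{(2)}_{n}=\D^{\NS}+2n^{2}$. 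Cancelling the common factor $q^{\D^{\NS}}/\prod_{k\geq1}(1-q^{k})^{2}$ reduces the required identity to
\[
\prod_{k\geq1}(1-q^{k})(1+q^{k-1/2})^{2}=\sum_{2n\in\mathbb{Z}}q^{2n^{2}},
\]
which, after the substitution $q=p^{2}$, is precisely the $y=1$ specialization of the Jacobi triple product \eqref{JTP} already recorded in the paper.

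Next I would produce the highest weight vectors by induction, ordering the summands by the total energy $\D^{\NS}+2n^{2}$ of their would-be highest weight vectors and peeling off orthogonal complements as in the proof of Proposition \ref{prop:Fockdecomp}, with $\overline{|\D^{\NS}\ra}$ serving as the $n=0$ vector. The one-variable character fixes, at each energy level, the number of new $\Vir\oplus\Vir$ highest weight vectors (two for $n\neq0$, corresponding to $\pm n$). The only genuinely new feature relative to the Fock case is the energy degeneracy between $n$ and $-n$; I would resolve it with the self-adjoint operator $L_0^{(1)}-L_0^{(2)}$, whose value on a highest weight vector is $\D^{(1)}_{n}-\D^{(2)}_{n}$. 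Since the linear coefficient of this quantity in $n$ is $2Pb/(1-b^{2})\neq0$ for generic $P$, these eigenvalues are distinct for distinct $n$, so the two new vectors are orthogonal eigenvectors carrying the correct \emph{separate} weights $\D^{(\eta)}_{\pm n}$ from \eqref{momentum}. Finally, for generic $P$ the shifted momenta $P^{(\eta)}_{n}$ are again generic, so each $\Vir^{(\eta)}$ Verma module of weight $\Delta(P^{(\eta)}_{n},b^{(\eta)})$ is irreducible; hence every highest weight vector generates a full $\pi^{n}_{\Vir\oplus\Vir}$, orthogonality makes the sum direct, and the character identity forces it to exhaust $\pi^{\D^{\NS}}_{\sfF\oplus\NSR}$.

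The main obstacle is the highest weight vector step for half-integer $n$ together with the bookkeeping that at each energy level exactly the expected pair of weights arises rather than some accidental coincidence; handling this via orthogonal complements alone requires care, since $L_0^{(1)},L_0^{(2)}$ are not diagonal on the PBW basis. I expect the cleanest route is the free-field realization \eqref{eq.FFR}: bosonizing the two Majorana fermions $f$ and $\psi$ into a second free boson realizes $\Vir\oplus\Vir$ as two decoupled Feigin--Fuchs Virasoro algebras, so that \eqref{FxNSR} becomes a lattice decomposition of a two-boson Fock space and explicit vertex-operator formulas for the vectors $|n\ra$ make their weights manifest, paralleling Proposition \ref{prop:fermhww}. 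The remaining point to verify is simply that genericity of $P$ simultaneously secures nondegeneracy of the Shapovalov form and genericity of all the shifted momenta $P^{(\eta)}_{n}$.
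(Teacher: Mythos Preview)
Your character identity and the overall architecture (character + highest weight vectors + irreducibility for generic $P$) are correct and match the route taken in \cite{BBFLT} and in the Ramond analogue, Theorem~\ref{pr:FNSRR}. But the orthogonal-complement induction you propose as the primary mechanism has a genuine gap: the one-variable character in $q^{L_0^{(1)}+L_0^{(2)}}$ only pins down the \emph{sum} $\Delta_n^{(1)}+\Delta_n^{(2)}$. When you peel off at total level $\Delta^{\NS}+2n^2$ and obtain a two-dimensional space of $\Vir\oplus\Vir$ highest weight vectors, diagonalizing the self-adjoint operator $L_0^{(1)}-L_0^{(2)}$ there yields \emph{some} pair of eigenvalues, but nothing in the argument forces them to be $\Delta_{\pm n}^{(1)}-\Delta_{\pm n}^{(2)}$ rather than any other pair compatible with the known sums. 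Your clause ``whose value on a highest weight vector is $\Delta_n^{(1)}-\Delta_n^{(2)}$'' already assumes the conclusion. The same issue appears at the very first nontrivial step $n=\pm 1/2$.

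The fix is precisely your own fallback, and it is what the paper (following \cite{BBFLT}) does: instead of abstract orthogonal complements, one writes the highest weight vectors explicitly via the free-field realization. Formula \eqref{hwwdec} gives $|P,n\rangle\propto\prod_r\chi_r^{\mp}\,\overline{|\Delta^{\NS}\rangle}$ with $\chi_r^\mp=f_r-i\psi_r^\mp$, and the commutation relations \eqref{chicomm} between $L_n^{(\eta)}$ and $\chi_r$ let one check directly both the annihilation by positive modes and the \emph{individual} $L_0^{(\eta)}$-eigenvalues $(\Delta_n^{(1)},\Delta_n^{(2)})$. Once those are established, your Jacobi-triple-product character identity and the irreducibility for generic $P$ finish the proof exactly as you outline.
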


\begin{figure}[h]
\begin{center}
\begin{tikzpicture}[x=4em,y=4em]
\draw[dashed] (-3,0.5) -- (3,0.5);
\draw (-2,0.45) -- (-2,0.5) node[anchor=south] {$-1$};
\draw (-1,0.45) -- (-1,0.5) node[anchor=south] {$-0.5$};
\draw (0,0.45) -- (-0,0.5) node[anchor=south] {$0$};
\draw (1,0.45) -- (1,0.5) node[anchor=south] {$0.5$};
\draw (2,0.45) -- (2,0.55) node[anchor=south] {$1$};
\draw (3.2,0.5) node[anchor=west] {$n$};
\draw[dashed] (-3.3,-3.2) -- (-3.3,0.2);
\draw (-3.3,-3.3) node[anchor=north] {$L_0+L_0^f$};
\draw (-3.25,0) -- (-3.35,0) node[anchor=east] {$\D^{\NS}$};
\draw (-3.25,-0.5) -- (-3.35,-0.5) node[anchor=east] {$\D^{\NS}+0.5$};
\draw (-3.25,-1) -- (-3.35,-1) node[anchor=east] {$\D^{\NS}+1$};
\draw (-3.25,-1.5) -- (-3.35,-1.5) node[anchor=east] {$\D^{\NS}+1.5$};
\draw (-3.25,-2) -- (-3.35,-2) node[anchor=east] {$\D^{\NS}+2$};
\draw (-3.25,-2.5) -- (-3.35,-2.5) node[anchor=east] {$\D^{\NS}+2.5$};
\draw (1.5,-3.5) -- (2,-2) -- (2.5,-3.5);
\draw (1.4,-2.1) -- (1,-0.5) -- (0.6,-2.1);
\draw (-0.7,-2.7) -- (0,0) -- (0.7,-2.7);
\draw (-1.4,-2.1) -- (-1,-0.5) -- (-0.6,-2.1);
\draw (-1.5,-3.5) -- (-2,-2) -- (-2.5,-3.5);
\end{tikzpicture}
\caption{\leftskip=1in \rightskip=1in Decomposition of $\pi^{\D^{\NS}}_{\sfF\oplus \NSR}$ into direct sum of representations of the algebra $\mathsf{Vir}\oplus\mathsf{Vir}$. Each interior angle corresponds to the Verma module $\pi^n_{\Vir\oplus \Vir}$.}
	\label{NS-decompos-pic}
\end{center}
\end{figure}

Highest weight vectors of $\pi^{n}_{\Vir\oplus\Vir}$ are given by the formula
\begin{equation}
|P,n\ra\propto \prod_{r=1/2}^{(4n-1)/2}\chi_r^-\overline{|\D^{\NS}\ra},\, n>0, \quad |P,n\ra\propto \prod_{r=1/2}^{(-4n-1)/2}\chi_r^+\overline{|\D^{\NS}\ra},\, n<0,
\quad |P,0\ra=\overline{|\D^{\NS}\ra},\label{hwwdec}
\end{equation}
where $\chi_r^{\mp}=f_r-i\psi_r^{\mp}$.

Let us now formulate and prove analogous decomposition for the Ramond sector.
Introduce parity operator $\mathfrak{P}:\pi^{\D^{\R}}_{\sfF\oplus\NSR}\mapsto \pi^{\D^{\R}}_{\sfF\oplus\NSR}$.
Operator $\mathfrak{P}$ is defined by the action $\mathfrak{P}\overline{|+;\D^{\R},+\ra}=\overline{|+;\D^{\R},+\ra}$ and property that $\mathfrak{P}$
commutes with even operators and anticommutes with odd operators of $\sfF\oplus\NSR$ algebra.
%Addition of such $\mathfrak{P}$ is consistent with Lie superalgebra axioms.

Evidently $\mathfrak{P}^2=1$ and the operators
$\frac{1\pm \mathfrak{P}}2$ are projectors on $\mathfrak{P}$ eigenspaces with eigenvalues $1$ and $-1$ which we will call even and odd subspaces respectively.
We will mark objects related to these subspaces by index $\epsilon$ equals $0$ for the even subspace and $1$ for the odd subspace. 
So $\pi^{\D^{\R}}_{\sfF\oplus\NSR}$ decompose into direct sum of $\pi^{\D^{\R},0}_{\sfF\oplus\NSR}$ and $\pi^{\D^{\R},1}_{\sfF\oplus\NSR}$.
 
Operator $\mathfrak{P}$ also could be defined on $\pi_{\sfF}^{\R}$ and $\pi^{\D^{\R}}_{\NSR}$ separately in an obvious way. Above defined
operator is just the tensor product of these operators. To make last definition precise we need to add condition $\mathfrak{P}|1^+\ra=|1^+\ra$.

It appears that summands $\pi^{\D^{\R}, \epsilon}_{\sfF\oplus\NSR}$ decompose into direct sum of $\Vir\oplus\Vir$ modules.
Namely we have
\begin{thm} \label{pr:FNSRR}
 For generic $P$ the space $\pi^{\D^{\R}}_{\sfF\oplus \NSR}$ is isomorphic to the sum of $\Vir\oplus \Vir$ module
\begin{equation}
\begin{aligned}
\pi^{\D^{\R}}_{\sfF \oplus \NSR}\cong \pi^{\D^{\R},0}_{\sfF \oplus \NSR}\oplus \pi^{\D^{\R},1}_{\sfF \oplus \NSR}\cong\bigoplus\limits_{2n+1/2\in\mathbb{Z}}\pi^{n,0}_{\Vir\oplus \Vir}\oplus \bigoplus\limits_{2n+1/2\in\mathbb{Z}}\pi^{n,1}_{\Vir\oplus \Vir} \label{FxNSRR}.
\end{aligned}
\end{equation}
where highest weights of modules $\pi^{n,\epsilon}_{\Vir\oplus \Vir}$ are $(\D^{\scriptscriptstyle{(1)}}_{n},\D^{\scriptscriptstyle{(2)}}_{n})$,
defined in Theorem \ref{pr:FNSR}. Superscript $\epsilon=0,1$ denotes parity. 
\end{thm}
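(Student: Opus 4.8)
The plan is to mirror the proof of the Neveu--Schwarz decomposition (Theorem \ref{pr:FNSR}, taken from \cite{BBFLT}), using the parity operator $\mathfrak{P}$ to organize the two summands. First I would record that each $L_n^{(\eta)}$ in \eqref{Vir12} is an \emph{even} operator of $\sfF\oplus\NSR$: the quadratic fermion term $\tfrac12\sum_r r:\!f_{n-r}f_r\!:$ is even, and the mixed term $\sum_r f_{n-r}G_r$ is a product of two odd generators, hence even. Therefore $\mathfrak{P}$ commutes with the whole $\Vir\oplus\Vir$ action, the module structure respects the splitting $\pi^{\D^{\R}}_{\sfF\oplus\NSR}=\pi^{\D^{\R},0}_{\sfF\oplus\NSR}\oplus\pi^{\D^{\R},1}_{\sfF\oplus\NSR}$, and any highest weight vector produced below lies automatically in a single parity sector, accounting for the superscript $\epsilon$ in \eqref{FxNSRR}.

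Next I would construct candidate highest weight vectors in analogy with \eqref{hwwdec}. Using the same combinations $\chi_r^{\mp}=f_r-i\psi_r^{\mp}$, but now with integer modes and including the zero mode $\chi_0^{\mp}$, I would write $|P,n\ra$ for $n\in\tfrac14+\tfrac12\Z$ as an ordered product of the $\chi_r^{-}$ (for $n>0$) or $\chi_r^{+}$ (for $n<0$) applied to one of the four Ramond ground vectors $\overline{|\mu;\D^{\R},\nu\ra}$. The two lowest vectors $|P,\pm 1/4\ra$ should be identified with definite-parity combinations of the ground states singled out by the action of $f_0$ and $G_0$ on the four-dimensional ground space; this is precisely the point where the half-integer shift $2n+1/2\in\Z$ (rather than $2n\in\Z$) and the splitting into two parity series emerge.

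Then I would verify the highest weight conditions $L_k^{(\eta)}|P,n\ra=0$ for $k>0$ and compute the eigenvalues of $L_0^{(1)}$ and $L_0^{(2)}$, checking that they equal $\Delta(P_n^{(\eta)},b^{(\eta)})$ with $P_n^{(\eta)}$ as in \eqref{momentum}. As in the NS case this reduces to the commutation relations of $L_k^{(\eta)}$ with the $\chi_r^{\mp}$, so it is a finite computation, the zero modes $\chi_0^{\mp}$ being the only genuinely new ingredient.

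Finally I would close the argument by a character count, exactly in the style of Proposition \ref{prop:Fockdecomp}: compute $\mathrm{ch}\,\pi^{\D^{\R}}_{\sfF\oplus\NSR}$, separate it into even and odd parts using the projectors $\tfrac12(1\pm\mathfrak{P})$, and match each part with $\sum_{2n+1/2\in\Z}\mathrm{ch}\,\pi^{n,\epsilon}_{\Vir\oplus\Vir}$ through a Jacobi--triple--product/theta identity, the Ramond analogue of the identity underlying \eqref{FxNSR}. For generic $P$ the Verma modules $\pi^n_{\Vir\oplus\Vir}$ are irreducible, so the vectors $|P,n\ra$ generate them; their orthogonality with respect to the Shapovalov form \eqref{eq:conj:operators} yields linear independence, and together with the character identity this forces the direct sum to exhaust $\pi^{\D^{\R}}_{\sfF\oplus\NSR}$. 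I expect the main obstacle to be the zero-mode sector: correctly diagonalizing $f_0,\psi_0,G_0$ on the four-dimensional ground space so as to produce highest weight vectors of definite parity and pin down the $1/4$-shift, together with establishing the matching Ramond-sector theta identity.
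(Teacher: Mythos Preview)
Your proposal is correct and follows essentially the same route as the paper's own proof: both construct the highest weight vectors $|P,n\rangle_\epsilon$ as products of $\chi_r^{\mp}$ (now with integer modes, including $\chi_0^{\mp}$) acting on the Ramond ground states, verify the $L_0^{(\eta)}$-eigenvalues via the commutation relations \eqref{chicomm}, use the evenness of $L_n^{(\eta)}$ to separate parity sectors, and finish with a Jacobi triple product character identity. The only cosmetic difference is that the paper checks the full character at once (obtaining a factor $2$ on the $\Vir\oplus\Vir$ side rather than splitting into even/odd parts) and argues linear independence via parity and distinct highest weights rather than Shapovalov orthogonality, but the substance is the same.
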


\begin{figure}[h]
\begin{center}
\begin{tikzpicture}[x=4em,y=4em]
\draw[dashed] (-3.5,0.5) -- (3.5,0.5);
\draw (-2.5,0.45) -- (-2.5,0.55) node[anchor=south] {$-1.25$};
\draw (-1.5,0.45) -- (-1.5,0.55) node[anchor=south] {$-0.75$};
\draw (-0.5,0.45) -- (-0.5,0.55) node[anchor=south] {$-0.25$};
\draw (0.5,0.45) -- (0.5,0.55) node[anchor=south] {$0.25$};
\draw (1.5,0.45) -- (1.5,0.55) node[anchor=south] {$0.75$};
\draw (2.5,0.45) -- (2.5,0.55) node[anchor=south] {$1.25$};
\draw (3.7,0.5) node[anchor=west] {$n$};
\draw[dashed] (-3.3,-3.7) -- (-3.3,0.2);
\draw (-3.3,-4.0) node[anchor=north] {$L_0+L_0^f-1/16$};
\draw (-3.25,0) -- (-3.35,0) node[anchor=east] {$\D^{\R}$};
\draw (-3.25,-0.5) -- (-3.35,-0.5) node[anchor=east] {$\D^{\R}+0.5$};
\draw (-3.25,-1) -- (-3.35,-1) node[anchor=east] {$\D^{\R}+1$};
\draw (-3.25,-1.5) -- (-3.35,-1.5) node[anchor=east] {$\D^{\R}+1.5$};
\draw (-3.25,-2) -- (-3.35,-2) node[anchor=east] {$\D^{\R}+2$};
\draw (-3.25,-2.5) -- (-3.35,-2.5) node[anchor=east] {$\D^{\R}+2.5$};
\draw (-3.25,-3) -- (-3.35,-3) node[anchor=east] {$\D^{\R}+3.0$};
\draw (-3.25,-3.5) -- (-3.35,-3.5) node[anchor=east] {$\D^{\R}+3.5$};
\draw (2.0,-4.0) -- (2.5,-3) -- (3.0,-4.0);
\draw (1.9,-2.1) -- (1.5,-1.0) -- (1.1,-2.1);
\draw (0.05,-1.7) -- (0.5,0) -- (0.95,-1.7);
\draw (-0.95,-1.7) -- (-0.5,0) -- (-0.05,-1.7);
\draw (-1.9,-2.1) -- (-1.5,-1.0) -- (-1.1,-2.1);
\draw (-2.0,-4.0) -- (-2.5,-3) -- (-3.0,-4.0);
\end{tikzpicture}
\caption{\leftskip=1in \rightskip=1in Decomposition of $\pi^{\D^{\R},\epsilon}_{\sfF\oplus \NSR}$ into direct sum of representations of the algebra $\mathsf{Vir}\oplus\mathsf{Vir}$ (for each $\epsilon=0,1$). Each interior angle corresponds to the Verma module $\pi^{n,\epsilon}_{\Vir\oplus \Vir}$.}
	\label{R-decompos-pic}
\end{center}
\end{figure}

\begin{proof}
Arguments are analogous to the proof of Theorem \ref{pr:FNSR} (see \cite{BBFLT}, \cite{KGP}). 

At first let us find highest weight vectors for $\Vir\oplus\Vir$ algebra in $\pi^{\D^{\R}}_{\sfF\oplus\NSR}$.
We will denote the highest weight vectors of $\pi^{n,\epsilon}_{\Vir\oplus\Vir}, \epsilon=0,1$
with weights $\Delta^{(\eta)}_{n}, \eta=1,2$ as $|P,n\ra_{\epsilon}$.
 
In $\pi^{\D^{\R}}_{\sfF\oplus\NSR}$ we have four highest weight
vectors $\overline{|\mu;\D^{\R},\nu\ra}$, $\mu, \nu=\pm$.
Clearly $L_k^{(\eta)}\overline{|\mu;\D^{\R},\nu\ra}=0$, $\eta=1,2,\, k>0$.
Vectors $\overline{|\mu;\D^{\R},\mu\ra}$, $\mu=\pm$ are even and vectors
$\overline{|\mu;\D^{\R},-\mu\ra}$ are odd.
The following linear combinations are eigenvectors for $L_0^{(\eta)}$, $\eta=1,2$, i.e. highest weight vectors of $\Vir\oplus\Vir$ algebra
\begin{equation}
\begin{aligned}
&|P,\pm 1/4\ra_1\propto\frac{1}{\sqrt2}(i\,\overline{|-;\D^{\R},+\ra}\pm \overline{|+;\D^{\R},-\ra})= i\chi_0^{\mp}\overline{|+;\D^{\R},+\ra}\\
&L_0^{(\eta)}|P,\pm 1/4\ra_1=\D^{(\eta)}_{\pm 1/4}|P,\pm 1/4\ra_1,\\
&|P,\pm 1/4\ra_0\propto\frac{1}{\sqrt2}(\overline{|+;\D^{\R},+\ra}\pm i \, \overline{|-;\D^{\R},-\ra})=\frac{1}{\sqrt2} \chi_0^{\mp}\chi_0^{\pm}
\overline{|+;\D^{\R},+\ra}\\
&L_0^{(\eta)}|P,\pm 1/4\ra_0=\D^{(\eta)}_{\pm 1/4}|P,\pm 1/4\ra_0,\label{starthwwR}
\end{aligned}
\end{equation}
where we used $\chi_r^{\mp}=f_r-i\psi_r^{\mp}$ as for $\BNS$ case but with integer indices.
Using \eqref{Vir12} and \eqref{eq.FFR} one can calculate commutation relations between $L_n^{(\eta)}$, $\eta=1,2$ and $\chi_r$
\begin{equation}
\begin{aligned}
{}&[L_n^{(1)}+L_n^{(2)},\chi_r]=-\left(\frac{n}2+r\right)\chi_{r+n}, \\ &{} [b L_n^{(1)}+b^{-1} L_n^{(2)},\chi_r]=-((n+r)Q\mp\hat{P})\chi_{r+n}+i\sum_{m\neq 0}c_m\chi_{r+n-m}.\label{chicomm}
\end{aligned}
\end{equation}
(cf. \cite[Eq. (3.23)]{BBFLT} for $\BNS$ sector).
Then it is easy to check that for $2n\in\mathbb{Z}, n\geq0$ vectors
\begin{equation}
\begin{aligned}
&|P,\pm(1/4+n)\ra_{2n\bmod 2} \propto\left(\prod\limits_{r=0}^{2n}\chi_{-r}^{\mp}\right)|1^+\ra\otimes|\D^{\R},+\ra\\
&|P,\pm(1/4+n)\ra_{(2n+1)\bmod 2} \propto\left(\prod\limits_{r=0}^{2n}\chi_{-r}^{\mp}\right)\chi_0^{\pm}|1^+\ra\otimes|\D^{\R},+\ra\label{hwwdecR}
\end{aligned}
\end{equation}
are the highest weight vectors for $\Vir\oplus\Vir$ algebra with appropriate highest weight. Given by \eqref{Vir12} generators of $\Vir\oplus\Vir$
algebra are even, so vectors $|P,n\ra_{\epsilon}$, $2n+1/2\in\mathbb{Z}$ generate Verma modules $\pi^{n,\epsilon}_{\Vir\oplus\Vir}$ with parity
$\epsilon$. For generic $P$ Verma modules $\pi^{n,\epsilon}_{\Vir\oplus\Vir}$ with certain parity $\epsilon$ are irreducible and
linear independent. Linear independence between $0$ and $1$ modules with the same highest weight follows from different parity of modules.

From \eqref{momentum} follows that
\begin{equation*}
\D^{(1)}_{n}+\D^{(2)}_{n}=\D^{\R}-1/16+2n^2. 
\end{equation*}

In order to finish the prove of decomposition \eqref{FxNSRR} it is sufficient to check equality of characters. Indeed
\begin{equation*}
\mathrm{ch}(\pi^{\D^{\R}}_{\sfF\oplus\NSR})=z^{\D^{\R}+1/16\;}\,\frac{\prod\limits_{k=0}^{\infty}(1+z^k)^2}{\prod\limits_{k=1}^{\infty}(1-z^k)}=
2\!\!\!\!\!\sum_{2n\in\mathbb{Z}+1/2}\!\!\!z^{\D^{\R}+2n^2-1/16}\prod\limits_{k=1}^{\infty}\frac{1}{(1-z^k)^2}=
\sum_{\substack{2n\in\mathbb{Z}+1/2\\
\epsilon=0,1}}\mathrm{ch}(\pi^{\epsilon,n}_{\Vir\oplus\Vir}).
\end{equation*}
where we used Jacobi triple product identity \eqref{JTP} in case $z\mapsto \sqrt{z}, y^2\mapsto \sqrt{z}$.
\end{proof}

\begin{Remark}\label{rem:proj}
We have isomorphism of modules $\pi^{\D^{\R},0}_{\sfF\oplus\NSR}\cong \pi^{\D^{\R},1}_{\sfF\oplus\NSR}$ so it is convenient to consider only one sector in calculations below. 
\end{Remark}

\subsection{Whittaker vector decomposition for $NS$ sector. Toda-like equations}
\label{ssec:Todaobt}
In \cite[Remark 4.1]{KGP} there were considered certain bilinear relations on conformal blocks.
In case of $c=1$ conformal blocks these relations do not provide differential equation on $\tau$ function given by \eqref{Kiev}.
But these relations provide differential-difference equation \eqref{difToda} on this $\tau$ function.
We first recall these bilinear relations, this part is a brief version of arguments in \cite[Sec. 4.2.]{KGP}. Then we 
deduce \eqref{difToda} from bilinear relations on conformal blocks, this part is not written in \cite{KGP} but is similar to arguments in loc. cit.
%But below we will see that these relations are sufficient for fact that \eqref{Kiev} is a solution of differential-difference equation
%\eqref{difToda} which was already proved in Subsection \ref{ssec:BTParam}.

The $\sfF \oplus \NSR$ Whittaker vector $|1\otimes W_{\NS}(z)\ra$ of $\BNS$ sector is defined as a tensor product of the $\sfF$ vacuum  and the $\NSR$ Whittaker vector. 
%The decomposition of the $\sfF\oplus \NSR$ representation  \eqref{FxNSR} provides a decomposition of the corresponding Whittaker vector
%\begin{equation}
%|1\otimes W_{\NS}(z)\rangle=\sum_{2n\in\mathbb{Z}}|v(z)\ra_n,
%\end{equation}
%where $|v(z)\ra_n\in\pi^n_{\Vir\oplus \Vir}$. It turns out that $|v(z)\ra_n$ is the Whittaker vector for the algebra $\Vir\oplus\Vir$
The following proposition was proven in \cite[Sec. 3.2]{KGP}.
\begin{prop}\label{prop:Whitdecomp}
The decomposition of the $\sfF\oplus \NSR$ Whittaker vector of $\BNS$ sector in terms of the subalgebra $\Vir\oplus\Vir$ has the form
\begin{equation} \label{Whitdecomp}
|1\otimes W_{\NS}(z)\ra=\sum\limits_{2n\in\mathbb{Z}}l_{n}(P,b)\,\Bigl(|W^{\scriptscriptstyle{(1)}}(\beta^{\scriptscriptstyle{(1)}}z)\ra_{n}\otimes|W^{\scriptscriptstyle{(2)}}(\beta^{\scriptscriptstyle{(2)}}z)\ra_{n}\Bigr).
\end{equation}
Here $|W^{\scriptscriptstyle{(1)}}\ra_{n}\otimes|W^{\scriptscriptstyle{(2)}}\ra_{n}$ denotes the tensor product of Whittaker vectors in $\pi^n_{\Vir\oplus\Vir}$, and the coefficients $l_n(P,b)$ do not depend on $z$. The parameters $\beta^{(\eta)}, \eta=1,2$ are defined by the formulae
\begin{equation}\label{eq:beta12}
\beta^{\scriptscriptstyle{(1)}}=\left(\frac{b^{-1}}{b^{-1}-b}\right)^2
\quad \beta^{\scriptscriptstyle{(2)}}=\left(\frac{b}{b-b^{-1}}\right)^2.
\end{equation}
\end{prop}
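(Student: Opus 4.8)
The plan is to combine the $\Vir\oplus\Vir$ decomposition of the ambient module (Theorem \ref{pr:FNSR}) with the fact that a Whittaker vector is pinned down by a system of eigenvalue and annihilation conditions. By Theorem \ref{pr:FNSR}, for generic $P$ the space $\pi^{\D^{\NS}}_{\sfF\oplus\NSR}$ splits as $\bigoplus_{2n\in\Z}\pi^n_{\Vir\oplus\Vir}$, so $|1\otimes W_{\NS}(z)\ra$ has a well-defined component in each summand. It therefore suffices to prove that $|1\otimes W_{\NS}(z)\ra$ is a simultaneous eigenvector of $L_1^{(1)}$ and $L_1^{(2)}$ and is annihilated by $L_k^{(\eta)}$ for all $k\ge 2$, $\eta=1,2$; the projection onto each $\pi^n_{\Vir\oplus\Vir}$ will then be forced to be a tensor product of Virasoro Whittaker vectors, and the remaining work is to check the rescalings $\beta^{(\eta)}$ and the $z$-independence of the coefficients.

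First I would verify these conditions by a direct computation with the explicit embedding \eqref{Vir12}. Acting on $|1\ra\otimes|W_{\NS}(z)\ra$, the three terms of $L_k^{(\eta)}$ simplify drastically. The purely fermionic Virasoro term $\tfrac12\sum_r r:f_{k-r}f_r:$ annihilates $|1\ra$ for $k\ge1$, since it would lower the $\sfF$-grading below the vacuum; and the cross term $\sum_r f_{k-r}G_r$ vanishes as well, because $f_{k-r}|1\ra\neq0$ forces $r\ge k+\tfrac12\ge\tfrac32$ while $G_r|W_{\NS}(z)\ra\neq0$ forces $r\le\tfrac12$, so the two index ranges never meet. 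Hence on this vector $L_k^{(\eta)}$ reduces to a scalar multiple of $L_k$ acting on $|W_{\NS}(z)\ra$. Using $\{G_r,G_s\}=2L_{r+s}+\dots$ from \eqref{def:FxNSR} one has $L_1=G_{1/2}^2$, so the defining relations of the NS Whittaker vector give $L_1|W_{\NS}(z)\ra=z|W_{\NS}(z)\ra$, while $L_k=\tfrac12\{G_{1/2},G_{k-1/2}\}$ kills $|W_{\NS}(z)\ra$ for $k\ge2$. Reading off the embedding coefficients from \eqref{Vir12} then shows that $L_1^{(\eta)}|1\otimes W_{\NS}(z)\ra$ equals $z$ times the corresponding coefficient, which is precisely the rescaling $\beta^{(\eta)}$ of \eqref{eq:beta12} in the normalization of the Virasoro Whittaker vectors used there, and that $L_k^{(\eta)}$ annihilates the vector for $k\ge2$.

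With these conditions established, I would project $|1\otimes W_{\NS}(z)\ra$ onto $\pi^n_{\Vir\oplus\Vir}$. A $\Vir\oplus\Vir$ vector that is an $L_1^{(1)}$- and $L_1^{(2)}$-eigenvector and is annihilated by the higher modes of both copies factorizes as a tensor product of the individual Virasoro Whittaker vectors, which for generic $P$ are unique (the Shapovalov form on $\pi^n_{\Vir\oplus\Vir}$ is nondegenerate, cf. the discussion after \eqref{eq:WhitVir}). Thus the $n$-th component equals $l_n(z)\,|W^{(1)}(\beta^{(1)}z)\ra_n\otimes|W^{(2)}(\beta^{(2)}z)\ra_n$ for some scalar $l_n(z)$. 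To see that $l_n$ is independent of $z$, I would compare lowest powers of $z$: grading by $L_0^{(1)}+L_0^{(2)}=L_0+L_0^f$ shows that the $\pi^n_{\Vir\oplus\Vir}$-component of $|1\otimes W_{\NS}(z)\ra$ begins at $z^{\D^{(1)}_n+\D^{(2)}_n}=z^{\D^{\NS}+2n^2}$ with coefficient a fixed multiple of the highest weight vector \eqref{hwwdec}, and the canonical tensor Whittaker vector begins at the same power. Since the two vectors are proportional for every $z$ and their leading coefficients are $z$-independent, $l_n(z)$ must be constant, which is exactly the assertion.

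The main obstacle is the bookkeeping in the second step: one must track the half-integer index ranges of the NS sector carefully to be sure that both the fermionic Virasoro term and the cross term $\sum_r f_{k-r}G_r$ genuinely vanish on $|1\ra\otimes|W_{\NS}(z)\ra$, and one must handle the normalization conventions relating the $L_1^{(\eta)}$-eigenvalues to the arguments $\beta^{(\eta)}z$ so that the scaling factors emerge exactly as in \eqref{eq:beta12}. Once the eigenvalue and annihilation conditions are in place, the projection step and the appeal to uniqueness of Whittaker vectors are formal.
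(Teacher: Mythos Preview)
The paper does not give its own proof of this proposition; it simply refers to \cite[Sec.~3.2]{KGP}. Your strategy---compute $L_k^{(\eta)}|1\otimes W_{\NS}(z)\ra$ directly from the embedding \eqref{Vir12}, observe that the fermionic and cross terms vanish so only the scalar multiple of $L_k$ survives, then project onto each $\pi^n_{\Vir\oplus\Vir}$ and invoke uniqueness of the Virasoro Whittaker vector---is precisely the argument used there, and it is correct.

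One caveat on the bookkeeping you already flag. With the conventions of this paper ($L_1|W(z)\ra=z|W(z)\ra$ and $G_{1/2}|W_{\NS}(z)\ra=z^{1/2}|W_{\NS}(z)\ra$), your computation actually gives
\[
L_1^{(\eta)}\,|1\otimes W_{\NS}(z)\ra=\sqrt{\beta^{(\eta)}}\,z\;|1\otimes W_{\NS}(z)\ra,
\]
not $\beta^{(\eta)}z$. The formulas \eqref{Whitdecomp}--\eqref{eq:beta12} are imported from \cite{KGP}, where the Whittaker argument is normalised differently (see the remark after \eqref{eq:Whitcom}), and in that convention the squared quantity is what appears. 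This does not affect your logic at all, and the downstream identities such as \eqref{eq:FhatHirota} are consistent either way, since $\la W(\sqrt{\beta})|W(\sqrt{\beta}\,z)\ra=\mathcal F(\Delta|\beta z)$. Just be explicit about which convention you adopt when you write up the details.
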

%\begin{proof}
%Let us act by $L^{(\eta)}_{1}, L^{(\eta)}_{2},  \eta=1,2$ on $|1\otimes W_{\NS}(z)\ra$. Using expressions \eqref{Vir12} and linear independence of vectors from different $\pi^n_{\Vir\oplus\Vir}$ we have
%\begin{equation*}
%\begin{aligned}
%L^{\scriptscriptstyle{(1)}}_{1}|v(z)\ra_n=(\beta^{\scriptscriptstyle{(1)}} z)|v(z)\ra_n, \quad L^{\scriptscriptstyle{(2)}}_{1}|v(z)\ra_n=(\beta^{\scriptscriptstyle{(2)}} z)|v(z)\ra_n, \\
%L^{\scriptscriptstyle{(1)}}_{2}|v(z)\ra_n=0, \quad
%L^{\scriptscriptstyle{(2)}}_{2}|v(z)\ra_n=0.
%\end{aligned}
%\end{equation*}
%Therefore the vector $|v(z)\ra_n$ is proportional to the tensor products of Whittaker vectors. Hence we proved ~\eqref{Whitdecomp}.
%\end{proof}

Introduce functions
\begin{equation}\label{eq:s:even:odd}
s_{\textrm{even}}(x,n)=
  \prod_{\substack{i,j\geq 0,\;i+j<2n\\ i+j\equiv0\bmod 2}}\hspace*{-10pt}(x+ib+jb^{-1}),\qquad
  s_{\textrm{odd}}(x,n)=2^{1/8}\hspace*{-10pt}
  \prod_{\substack{i,j\geq0,\;i+j<2n\\i+j\equiv1\bmod 2}}\hspace*{-10pt}(x+ib+jb^{-1}),
\end{equation}
for $n\geq 0$ and 
\begin{equation}
    s_{\textrm{even}}(x,n)=(-1)^n s_{\textrm{even}}(Q-x,-n),\quad
   s_{\textrm{odd}}(x,n)=s_{\textrm{odd}}(Q-x,-n)
\end{equation}
for $n<0$. Function $s_{\textrm{odd}}$ we will use in the Subsection \ref{ssec:Okamotoobt}.
The coefficients $l_n(P,b)$ in the formula \eqref{Whitdecomp} were calculated in \cite{HJ} and \cite[Sec. 3.3]{KGP}.
\begin{prop}
Coefficients $l_n(P,b)$ are given by 
\begin{equation}
l_n^2(P,b)=\frac{(-1)^{2n}2^{4n^2}(\beta^{\scriptscriptstyle{(1)}})^{-\D^{\scriptscriptstyle{(1)}}_n}(\beta^{\scriptscriptstyle{(2)}})^{-\D^{\scriptscriptstyle{(2)}}_n}}{s_{\textrm{even}}(2P,2n) s_{\textrm{even}}(2P+Q,2n)} \label{l_n_irr}
\end{equation}
\end{prop}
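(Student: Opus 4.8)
The plan is to pin down each coefficient $l_n(P,b)$ by pairing the decomposition \eqref{Whitdecomp} with the $\Vir\oplus\Vir$ highest weight vector $|P,n\ra$ of the summand $\pi^n_{\Vir\oplus\Vir}$ and isolating the lowest power of $z$. Since for generic $P$ the modules $\pi^m_{\Vir\oplus\Vir}$ are irreducible with distinct highest weights, they are mutually orthogonal with respect to the Shapovalov form, so only the $m=n$ term survives the pairing $\la P,n|\,\cdot\,\ra$. In each tensor factor the Whittaker vector $|W^{\scriptscriptstyle{(\eta)}}(\beta^{\scriptscriptstyle{(\eta)}}z)\ra_{n}$ begins with $(\beta^{\scriptscriptstyle{(\eta)}}z)^{\D^{\scriptscriptstyle{(\eta)}}_n}|P,n\ra_\eta$ and continues with descendants orthogonal to $|P,n\ra_\eta$; hence
\[
\la P,n|\,1\otimes W_{\NS}(z)\ra = l_n\,\la P,n|P,n\ra\,(\beta^{\scriptscriptstyle{(1)}})^{\D^{\scriptscriptstyle{(1)}}_n}(\beta^{\scriptscriptstyle{(2)}})^{\D^{\scriptscriptstyle{(2)}}_n}\,z^{\D^{\scriptscriptstyle{(1)}}_n+\D^{\scriptscriptstyle{(2)}}_n}\bigl(1+O(z)\bigr).
\]
Reading off the coefficient of the leading power $z^{\D^{\scriptscriptstyle{(1)}}_n+\D^{\scriptscriptstyle{(2)}}_n}$ on both sides expresses $l_n$ as the ratio of a computable overlap and the Shapovalov norm $\la P,n|P,n\ra$.

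Two computations then remain. First, the left-hand overlap at leading order: using the explicit fermionic expression \eqref{hwwdec} for $|P,n\ra$, a product of operators $\chi_r^{\mp}=f_r-i\psi_r^{\mp}$ acting on $\overline{|\D^{\NS}\ra}$, together with the defining relations of the $\sfF\oplus\NSR$ Whittaker vector, one commutes the annihilation content of $\la P,n|$ against $|1\otimes W_{\NS}(z)\ra$ and keeps only the contribution at the matching level $\D^{\scriptscriptstyle{(1)}}_n+\D^{\scriptscriptstyle{(2)}}_n-\D^{\NS}=2n^2$. This is where I expect the sign $(-1)^{2n}$ to appear, from the parity of the number of fermionic transpositions, together with the power $2^{4n^2}$ coming from the $\chi_0$-normalizations and the $1/\sqrt2$'s built into the free-field realization \eqref{eq.FFR}. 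Second, and this is the heart of the matter, one must evaluate the norm $\la P,n|P,n\ra$.

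For the norm I would work inside the free-field realization \eqref{eq.FFR} and push the $\chi_r^{\mp}$ through one another using commutation relations of the type \eqref{chicomm}, reducing $\la P,n|P,n\ra$ to a product over the excited fermionic and bosonic modes. The boson zero mode $\hat P$ and the shifts by $b,b^{-1}$ carried by $\chi_r^{\mp}$ make each contraction contribute a linear factor of the form $2P+ib+jb^{-1}$ or $2P+Q+ib+jb^{-1}$, and collecting these over the range dictated by \eqref{hwwdec} should reproduce exactly the products $s_{\textrm{even}}(2P,2n)\,s_{\textrm{even}}(2P+Q,2n)$ of \eqref{eq:s:even:odd}. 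I expect this norm evaluation to be the main obstacle: it requires careful bookkeeping of the interplay between the two fermions $f_r$ and $\psi_r^{\mp}$ inside $\chi_r^{\mp}$, of the anticommutation signs, and of the even/odd selection rule that singles out $s_{\textrm{even}}$ rather than $s_{\textrm{odd}}$.

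Finally I would assemble the pieces and fix the overall constant by the base case $n=0$, where $|P,0\ra=\overline{|\D^{\NS}\ra}$, the products $s_{\textrm{even}}(\cdot,0)$ are empty, and the two sides of \eqref{Whitdecomp} already agree at order $z^{\D^{\NS}}$. The powers $(\beta^{\scriptscriptstyle{(1)}})^{-\D^{\scriptscriptstyle{(1)}}_n}(\beta^{\scriptscriptstyle{(2)}})^{-\D^{\scriptscriptstyle{(2)}}_n}$ then emerge from reconciling the $z$-grading of the rescaled $\Vir$ Whittaker vectors with that of $|1\otimes W_{\NS}(z)\ra$, and match the exponents in \eqref{l_n_irr} once the normalization of the highest weight vectors is fixed. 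As an independent check one could instead derive a recursion for $l_{n+1}/l_n$ by acting with the odd generator $G_{1/2}$, which intertwines neighbouring summands $\pi^n_{\Vir\oplus\Vir}$, solve it starting from $l_0$, and verify agreement with \eqref{l_n_irr}; this route bypasses the full norm computation but shifts the difficulty to identifying the correct intertwiner and its matrix elements.
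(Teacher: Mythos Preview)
The paper itself does not prove this proposition; it simply states the formula and cites \cite{HJ} and \cite[Sec.~3.3]{KGP}. Your outline---pair the decomposition \eqref{Whitdecomp} against the normalized highest weight vector $|P,n\ra$, read off $l_n$ from the leading power of $z$, and reduce everything to the Shapovalov norm of the explicit vector \eqref{hwwdec}---is indeed the route taken in those references, and you are right that the $s_{\textrm{even}}$ factors arise from that norm, computed by free-field Wick contractions of the $\chi_r^{\mp}$'s using relations of the type \eqref{chicomm}.

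Two small corrections. First, in the $\BNS$ sector the indices $r$ are half-integers, so there is no $\chi_0$; the power $2^{4n^2}$ does not come from ``$\chi_0$-normalizations'' but rather from the count and structure of the $\chi$-contractions in the norm. Second, the ``overlap at leading order'' is less innocent than you suggest: the free-field fermion $\psi_r^{\mp}$ hidden inside $\chi_r^{\mp}$ is not an $\sfF\oplus\NSR$ generator, so its action on the abstractly defined Whittaker vector $|W_{\NS}(z)\ra$ cannot be read off from the defining relations $G_{1/2}|W_{\NS}\ra=z^{1/2}|W_{\NS}\ra$, $G_{3/2}|W_{\NS}\ra=0$ alone---one must feed in the free-field form of the relevant $\NSR$ descendants level by level. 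In \cite[Sec.~3.3]{KGP} this is circumvented precisely by the recursion you mention at the end: one relates $l_{n\pm1/2}$ to $l_n$ by adjoining a single $\chi^{\mp}$, turning the problem into a first-order difference equation whose solution is \eqref{l_n_irr}. So your ``independent check'' is in fact the primary method in the cited proof, not an afterthought.
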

Note that normalization $\langle P,n|P,n\rangle=1$ determine vectors $|P,n\rangle$ only up to sign, therefore coefficients  $l_n(P,b)$ are also determined up to sign, But  $l_n(P,b)$ will appear in bilinear relation only as $l_n^2(P,b)$. 

Let us introduce operator $H$
\begin{equation}
H=b L_{0}^{\scriptscriptstyle{(1)}}+b^{-1} L_{0}^{\scriptscriptstyle{(2)}}\label{eq:Hdefin} 
\end{equation}
and define  $\widehat{\kr{F}_k}$ by the formulae
\begin{equation}\label{eq:Fhat}
\widehat{\kr{F}_k}=\la 1\otimes W_{\NS}(1)|H^{k}|1\otimes W_{\NS}(z)\ra
\end{equation}
We could calculate $\widehat{\kr{F}_k}$ in two ways.
First calculation is in terms of $\Vir\oplus\Vir$ generators and modules using Proposition \ref{prop:Whitdecomp}.
It gives
\begin{equation}
\label{eq:FhatHirota}
\widehat{\kr{F}_k}=\sum\limits_{2n\in\mathbb{Z}}l_n^2(P,b) \cdot D^{k}_{b,b^{-1}[\log z]}(\kr{F}^{\scriptscriptstyle{(1)}}_{n},\kr{F}^{\scriptscriptstyle{(2)}}_{n}),
\end{equation}
where we used shorter notation
$\kr{F}^{\scriptscriptstyle{(\eta)}}_{n}$ for
$\mathcal{F}_{c^{\scriptscriptstyle{(\eta)}}}(\Delta_n^{\scriptscriptstyle{(\eta)}}|\beta^{\scriptscriptstyle{(\eta)}}z)$ and generalized Hirota differential operators $D^{n}_{\epsilon_1,\epsilon_2[x]}$, $\epsilon_1, \epsilon_2\in\mathbb{C}$ are defined by
\begin{equation}
f(e^{\epsilon_1\alpha}z)g(e^{\epsilon_2\alpha}z)=\sum\limits_{n=0}^{\infty}D^{n}_{\epsilon_1,\epsilon_2[\log z]}(f(z),g(z))\frac{\alpha^n}{n!},\label{genhir}
\end{equation}
where we take derivatives with respect to the logarithm of variable as before. 

In other way using formulas \eqref{Vir12} we have
\begin{equation}
\label{eq:HNSR}
H=Q\sum\limits_{r\in\mathbb{Z}+1/2}r:f_{-r}f_{r}:-\sum\limits_{r\in\mathbb{Z}+1/2}f_{-r}G_r,
\end{equation}
Then we calculate $\widehat{\mathcal{F}_k}$ in terms of $\sfF\oplus\NSR$ generators and modules. It gives for $k=0,2$
\begin{equation}
\widehat{\kr{F}_0}=\kr{F}_{\NS},\qquad \widehat{\kr{F}_2}=-z^{1/2} \kr{F}_{\NS},
\label{eq:hatF}
\end{equation}
where we used shorter notation $\mathcal{F}_{\NS}$ for$\mathcal{F}_{c_{\nsr}}(\D^{\NS}|z)$.
Eliminating $\mathcal{F}_{\NS}$ from this equation and using \eqref{eq:FhatHirota} we have
\begin{equation}
\sum\limits_{2n\in\mathbb{Z}}l_n^2(P,b) \cdot D^{2}_{b,b^{-1}[\log z]}(\kr{F}^{\scriptscriptstyle{(1)}}_{n},\kr{F}^{\scriptscriptstyle{(2)}}_{n})
=-z^{1/2} \sum\limits_{2n\in\mathbb{Z}}l_n^2(P,b) \kr{F}^{\scriptscriptstyle{(1)}}_{n}\kr{F}^{\scriptscriptstyle{(2)}}_{n},
\label{Todablock}
\end{equation}
This is just relation written in \cite[Remark 4.1]{KGP}.

Now we deduce Toda-like equation \eqref{difToda} on $\tau$ function \eqref{Kiev} from the equation \eqref{Todablock}.
Let us substitute this $\tau$ function decomposition
into \eqref{difToda} and collect terms
with the same powers of $s$. The vanishing condition of the $s^m$
coefficient has the form
\begin{multline*}
\sum_{n\in\mathbb{Z}} C(\sg+n+m)C(\sg-n) \left(1/2D^2_{[\log z]}(\mathcal{F}((\sg+n+m)^2|z),\mathcal{F}((\sg-n)^2|z)+\right.\\+\left.z^{1/2}\mathcal{F}((\sg+1/2+n+m)^2|z)\mathcal{F}((\sg-1/2-n)^2|z)\right)=0
\end{multline*}
Clearly this $s^m$ coefficient coincides with the $s^{m+2}$
coefficient after the shift $\sg\mapsto \sg+1$.  Therefore it is sufficient
to prove the vanishing of $s^0$ and $s^1$ coefficients.

To obtain these bilinear relations we set in \eqref{Todablock} $c_{\nsr}=1 \Leftrightarrow b=i,\,c^{(\eta)}=1, \eta=1,2$.
We also substitute $P=2i\sg$, $z\mapsto 4z$.
Splitting \eqref{Todablock} into 
relations with powers $z^{2\sg^2+N}, N\in\mathbb{Z}$ and $z^{2\sg^2+N}, N\in\mathbb{Z}+1/2$
and shifting $n$ and $\sg$ in appropriate way we obtain correspondingly
\begin{multline*}
\sum_{n\in\mathbb{Z}}l_n^2(\sg) D^2_{[\log z]}(\kr{F}((\sg+n)^2|z),\kr{F}((\sg-n)^2|z))=\\
=2z^{1/2}\sum_{n\in\mathbb{Z}}l_{n+1/2}^2(\sg) \kr{F}((\sg+1/2+n)^2|z)\kr{F}((\sg-1/2-n)^2|z),
\end{multline*}
\begin{multline*}
\sum_{n\in\mathbb{Z}}l_{n+1/2}^2(\sg+1/2) D^2_{[\log z]}(\kr{F}((\sg+n+1)^2|z),\kr{F}((\sg-n)^2|z))=\\=2z^{1/2}
\sum_{n\in\mathbb{Z}}l_{n+1}^2(\sg+1/2) \kr{F}((\sg+3/2+n)^2|z)\kr{F}((\sg-1/2-n)^2|z),
\end{multline*}
where we used shorter notation $l_n(\sigma)$ for $l_n(2i\sg,i)$. It remains to compare coefficients of these relations and $s^0$, $s^1$ relations.
Using \eqref{l_n_irr} and shift relation on $\G(\sg)$ we get
\begin{equation}
\frac{C(\sg+n)C(\sg-n)}{C(\sg)^2}=\frac{1}{\prod\limits_{k=1}^{2|n|-1}(k^2-4\sg^2)^{2(2|n|-k)}(4\sg^2)^{2|n|}}=4^{-\D^{\NS}}(-1)^{2n}l_n^2(\sg) ,
\end{equation}
where $2n\in\mathbb{Z}$.
Then we have for $n\in\mathbb{Z}$
\begin{align*}
\frac{l_n^2(\sg)}{l_0^2(\sg)}&=\frac{C(\sg+n)C(\sg-n)}{C(\sg)^2},&\quad \frac{l^2_{n+\frac12}(\sg)}{l_0^2(\sg)}&=-\frac{C(\sg+\frac12+n)C(\sg-n-\frac12)}{C(\sg)^2},&\\
\frac{l_{n+\frac12}^2(\sg+\frac12)}{l_0^2(\sg)}&=-\frac{C(\sg+n+1)C(\sg-n)}{C(\sg+\frac12)^2},& \quad
\frac{l_{n+1}^2(\sg+\frac12)}{l_0^2(\sg)}&=\frac{C(\sg+n+\frac32)C(\sg-n-\frac12)}{C(\sg+\frac12)^2},&
\end{align*}
which completes the proof.

%\begin{Remark}
%It is interesting to mention that vector $z^{1/4}f_{-1/2}|1\otimes W_{\NS}(z)\rangle$ have similar to \ref{Whitdecomp}
%decomposition but with other coefficients $\td{l}_n(P,b)$.
%In subsection \ref{ssec:nWW} with the help of this vector we obtain new bilinear relations on conformal blocks.
%\end{Remark} 

\begin{Remark}
From results of section \ref{sec:Painleve} and Proposition \ref{taubackl} follows that the left side of \eqref{Kiev} satisfies \eqref{difToda} and this determines the function. Here we proved that the right side of \eqref{Kiev} satisfy \eqref{difToda} i.e. as a byproduct we proved \eqref{Kiev}.  This simplifies the proof of Theorem \ref{thm:1}
in comparison with \cite{KGP} because we do not consider equation \eqref{tau3} of order $4$.

On the other hand in Section \ref{sec:Painleve} we proved that function $\tau(z)$ and $\pi(\tau(z))$  satisfy Toda-like equations \eqref{eq:Toda}. Here we proved that $\tau(\sigma,s|z)$ and $\tau(\sigma-1/2,s|z)$ satisfy these Toda-like equations. So, as a byproduct we have another  proof of Proposition \ref{taubackl}. Note that the obtained proof is rather difficult since uses Theorem~\ref{thm:1}. 
\end{Remark}

\subsection{Whittaker vector decomposition for $R$ sector. Okamoto-like equations}
\label{ssec:Okamotoobt}
In this Subsection we will analogously to the $\BNS$ sector obtain bilinear relations on conformal blocks in the $\BR$ sector.
In terms of $\tau$ functions these relations for $c=1$ will have the same form as  Okamoto-like equations \eqref{difd8eq_1}, \eqref{difd8eq_2}. But we do not give representation theoretic proof of these equation since we do not find explicit formulas for coefficients $l_n$, see Conjecture \ref{Conj:ln} below. 

We have four $\sfF\oplus\NSR$ Whittaker vectors $z^{1/16}|1^{\mu}\otimes W_{\R,\nu}\ra$, $\mu, \nu=\pm$ of $\BR$ sector, 
where $1/16$ is the conformal dimension of the fermionic vacuum. These Whittaker vectors start from highest weight vectors $\overline{|\mu;\D^{\R},\nu\ra}$ with certain
parity $\mu\nu$ (notation $\mu\nu$ means product of signs in natural sense). Moreover, we have
\begin{lemma}\label{lemma}
Whittaker vectors $|W_{\R,\pm}\ra$ have certain parity
\begin{equation*}
\mathfrak{P}|W_{\R,\pm}\ra=\pm |W_{\R,\pm}\ra.
\end{equation*}
\end{lemma}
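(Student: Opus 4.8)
The plan is to exploit the uniqueness of the Ramond Whittaker vector for generic $P$ (established above) together with the fact that $\mathfrak{P}$ is an \emph{even} operator: it commutes with all $L_n$ and anticommutes with all $G_r$. The strategy is the standard one for symmetries of Whittaker vectors: show that $\mathfrak{P}|W_{\R,\pm}(z)\ra$ again solves the defining equations \eqref{WhitaxRW}, identify its leading term, and conclude by uniqueness.

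First I would pin down the parity of the two highest weight vectors $|\D^{\R},\pm\ra$. By construction $\mathfrak{P}|\D^{\R},+\ra=|\D^{\R},+\ra$ (this is the restriction to the $\NSR$ factor of the normalization $\mathfrak{P}\overline{|+;\D^{\R},+\ra}=\overline{|+;\D^{\R},+\ra}$ together with $\mathfrak{P}|1^+\ra=|1^+\ra$). Applying $\mathfrak{P}$ to the defining relation $G_0|\D^{\R},+\ra=-\tfrac{iP}{\sqrt2}|\D^{\R},-\ra$ from \eqref{hwwR}, and using $\mathfrak{P}G_0=-G_0\mathfrak{P}$, I get $\tfrac{iP}{\sqrt2}|\D^{\R},-\ra=-\tfrac{iP}{\sqrt2}\mathfrak{P}|\D^{\R},-\ra$. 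For generic $P$ (in particular $P\neq 0$) this forces $\mathfrak{P}|\D^{\R},-\ra=-|\D^{\R},-\ra$, so that $\mathfrak{P}|\D^{\R},\pm\ra=\pm|\D^{\R},\pm\ra$.

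Next I would check that $\mathfrak{P}|W_{\R,\pm}(z)\ra$ satisfies \eqref{WhitaxRW}. Since $L_1$ is even and each $G_r$ is odd, applying $\mathfrak{P}$ to $L_1|W_{\R,\pm}\ra=\tfrac12 z|W_{\R,\pm}\ra$ gives $L_1\bigl(\mathfrak{P}|W_{\R,\pm}\ra\bigr)=\tfrac12 z\bigl(\mathfrak{P}|W_{\R,\pm}\ra\bigr)$, and applying it to $G_r|W_{\R,\pm}\ra=0$ for $r>0$ gives $G_r\bigl(\mathfrak{P}|W_{\R,\pm}\ra\bigr)=0$. Because $\mathfrak{P}$ commutes with $L_0$ it preserves the $L_0$-grading, so $\mathfrak{P}|W_{\R,\pm}(z)\ra$ is again of the form $z^{\D^{\R}}\sum_{2N\ge 0}z^{N}(\cdots)$ whose $N=0$ term is $z^{\D^{\R}}\mathfrak{P}|\D^{\R},\pm\ra=\pm\, z^{\D^{\R}}|\D^{\R},\pm\ra$. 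Thus $\pm\,\mathfrak{P}|W_{\R,\pm}(z)\ra$ solves the Whittaker equations with the correctly normalized leading term $z^{\D^{\R}}|\D^{\R},\pm\ra$, and uniqueness of the Ramond Whittaker vector for generic $P$ yields $\pm\,\mathfrak{P}|W_{\R,\pm}(z)\ra=|W_{\R,\pm}(z)\ra$, i.e. $\mathfrak{P}|W_{\R,\pm}(z)\ra=\pm|W_{\R,\pm}(z)\ra$.

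The only genuinely delicate point is the determination of the parity of $|\D^{\R},-\ra$: everything hinges on the anticommutation $\mathfrak{P}G_0=-G_0\mathfrak{P}$ and on $P\neq 0$, which is part of the genericity hypothesis. The remaining steps are the routine verification that the symmetry is compatible with the defining relations \eqref{WhitaxRW} and with the $L_0$-grading, after which the conclusion is immediate from uniqueness.
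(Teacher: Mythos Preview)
Your proof is correct and follows exactly the same approach as the paper: show that $\mathfrak{P}|W_{\R,\pm}(z)\ra$ satisfies the defining Whittaker relations \eqref{WhitaxRW}, identify its leading term as $\pm z^{\D^{\R}}|\D^{\R},\pm\ra$, and invoke uniqueness. The paper's proof is a two-line sketch of precisely this argument; you have simply made explicit the step determining $\mathfrak{P}|\D^{\R},-\ra=-|\D^{\R},-\ra$, which the paper leaves implicit.
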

\begin{proof}
Vector $\mathfrak{P}|W_{\R,\pm}\ra$ satisfies properties \eqref{WhitaxRW} and starts from $\pm z^{\D^{\R}}|\D^{\R},\pm\ra$. 
As was mentioned in Subsection \ref{ssec:FxNSR} these properties and normalization condition determine Whittaker vector.  
\end{proof}

Then, analogously to Proposition \ref{prop:Whitdecomp} we have 
\begin{prop}
The decomposition of the $\sfF\oplus \NSR$ Whittaker vector of $\BR$ sector in terms of the subalgebra $\Vir\oplus\Vir$ has the form
\begin{equation}
z^{1/16}|1^{\mu}\otimes W_{\R,\nu}(z)\ra=\sum\limits_{2n+1/2\in\mathbb{Z}}l_{n}^{\mu,\nu}(P,b)\,\Bigl(|W^{\scriptscriptstyle{(1)}}(\beta^{\scriptscriptstyle{(1)}}z)\ra_{n}\otimes|W^{\scriptscriptstyle{(2)}}(\beta^{\scriptscriptstyle{(2)}}z)\ra_{n}\Bigr), \label{WhitdecompR}
\end{equation} 
Here $|W^{\scriptscriptstyle{(1)}}\ra_{n}\otimes|W^{\scriptscriptstyle{(2)}}\ra_{n}$ denotes the tensor product of Whittaker vectors of
$\pi^{n,\mu\nu}_{\Vir\oplus\Vir}$, and the coefficients $l_n^{\mu,\nu}(P,b)$ do not depend on $z$. The parameters $\beta^{(\eta)}, 
\eta=1,2$ are defined by the formulas \eqref{eq:beta12}.\label{prop:WhitdecompR}
\end{prop}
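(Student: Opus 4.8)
The plan is to mimic the $\BNS$ argument of Proposition \ref{prop:Whitdecomp} (carried out in \cite[Sec.~3.2]{KGP}), feeding in the Ramond decomposition of Theorem \ref{pr:FNSRR} and handling the features peculiar to the $\BR$ sector. First I would pin down the parity: by Lemma \ref{lemma} the vector $|W_{\R,\nu}(z)\ra$ is a $\mf{P}$-eigenvector with eigenvalue $\nu$, and $|1^{\mu}\ra$ has $\mf{P}$-eigenvalue $\mu$ (since $\mf{P}|1^+\ra=|1^+\ra$ and $|1^-\ra\propto f_0|1^+\ra$). Hence $V:=z^{1/16}|1^{\mu}\otimes W_{\R,\nu}(z)\ra$ has definite parity $\mu\nu$, so by \eqref{FxNSRR} it lies in the completion of $\bigoplus_{2n+1/2\in\mathbb{Z}}\pi^{n,\mu\nu}_{\Vir\oplus\Vir}$ and can be written as $V=\sum_{n}v_n(z)$ with $v_n(z)\in\pi^{n,\mu\nu}_{\Vir\oplus\Vir}$.

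The central step is to show that each $v_n(z)$ is, up to a scalar independent of $z$, the tensor product $|W^{(1)}(\beta^{(1)}z)\ra_n\otimes|W^{(2)}(\beta^{(2)}z)\ra_n$ of Virasoro Whittaker vectors. Because the generators $L_k^{(\eta)}$ preserve every summand $\pi^{n,\mu\nu}_{\Vir\oplus\Vir}$, it is enough to verify the Whittaker conditions on $V$ itself and then project. To this end I would use the two combinations of \eqref{Vir12} that evaluate transparently, namely $L_k^{(1)}+L_k^{(2)}=L_k+L_k^f$ and $bL_k^{(1)}+b^{-1}L_k^{(2)}=2Q L_k^f-\sum_{r}f_{k-r}G_r$, where $L_k^f$ is the fermionic ($c=1/2$) Virasoro. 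On $V$ the $\NSR$ relations \eqref{WhitaxRW} give $L_1|W_{\R,\nu}\ra=\tfrac12 z|W_{\R,\nu}\ra$, $L_k|W_{\R,\nu}\ra=0$ for $k\geq2$ and $G_r|W_{\R,\nu}\ra=0$ for $r>0$, while the fermion vacuum obeys $f_r|1^{\mu}\ra=0$ for $r>0$ and $L_k^f|1^{\mu}\ra=0$ for $k\geq1$. Substituting these, each term of $\sum_r f_{k-r}G_r$ is killed (for $r>0$ by $G_r$, for $r\leq0$ by $f_{k-r}$ since then $k-r>0$); thus both combinations annihilate $V$ for $k\geq2$, and for $k=1$ they reduce to scalars, which I then solve to obtain the separate eigenvalues of $L_1^{(1)}$ and $L_1^{(2)}$ matching the arguments $\beta^{(1)}z,\beta^{(2)}z$ of \eqref{eq:beta12} (here the factor $\tfrac12$ of \eqref{WhitaxR} and the variable normalization of \eqref{eq:WhitVir} must be tracked). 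Uniqueness of the Whittaker vector for generic $P$ then forces $v_n(z)\propto|W^{(1)}(\beta^{(1)}z)\ra_n\otimes|W^{(2)}(\beta^{(2)}z)\ra_n$.

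Finally I would establish the $z$-independence of the coefficient $l_n^{\mu,\nu}(P,b)$. The vector $V$ is graded-homogeneous in the sense that its component of $L_0^{(1)}+L_0^{(2)}=L_0+L_0^f$ grade $g$ carries exactly the power $z^{g}$, and the same holds for the tensor product of Whittaker vectors. Comparing the lowest graded components, built on the highest weight vector \eqref{hwwdecR}, the proportionality constant is a pure number, and the Whittaker recursion then propagates it to all grades. I expect the principal difficulty to lie in the Ramond-specific bookkeeping of the verification step: keeping track of the signs generated by the anticommuting zero modes $f_0$ and $G_0$, and confirming that Lemma \ref{lemma} genuinely confines $V$ to a single parity sector so that the $f_0$- and $G_0$-terms never enter the Whittaker relations. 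Once this is in hand the rest is a routine transcription of the $\BNS$ computation.
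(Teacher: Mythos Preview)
Your proposal is correct and follows precisely the route the paper intends: the paper does not spell out a proof but simply states ``analogously to Proposition~\ref{prop:Whitdecomp}'', and your argument supplies exactly those analogous details---parity via Lemma~\ref{lemma}, the decomposition from Theorem~\ref{pr:FNSRR}, verification of the $\Vir\oplus\Vir$ Whittaker conditions through the combinations $L_k^{(1)}+L_k^{(2)}$ and $bL_k^{(1)}+b^{-1}L_k^{(2)}$, and the grading argument for $z$-independence. One small wording point: in your final paragraph the phrase ``so that the $f_0$- and $G_0$-terms never enter the Whittaker relations'' slightly mislocates the mechanism---the $r=0$ term $f_kG_0$ does appear in $\sum_r f_{k-r}G_r$ for $k\geq1$, but it is killed because $f_k|1^{\mu}\ra=0$ (as you already noted correctly in the preceding paragraph), not because of the parity confinement; the role of Lemma~\ref{lemma} is solely to place $V$ in a single summand $\pi^{\D^{\R},\mu\nu}_{\sfF\oplus\NSR}$ of \eqref{FxNSRR}.
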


As in $\BNS$ sector coefficients $l_n^{\mu,\nu}$ are determined up to the sign.
But ratio of $l_n^{\mu,\nu}$ with the same parity is well defined because decomposition \eqref{WhitdecompR} starts with the same highest weight vectors.
Below we will reach relations only for even subspaces (see Remark \ref{rem:proj}). Then
\begin{prop}\label{prop:signs}
We have relations between $l_n^{+,+}(P,b)$ and $l_n^{-,-}(P,b)$
\begin{equation}
l_n^{-,-}(P,b)=(-1)^{2n-1/2\,}{}l_n^{+,+}(P,b).
\label{lconn}
\end{equation}
\end{prop}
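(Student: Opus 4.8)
The plan is to extract each coefficient $l_n^{\mu,\nu}(P,b)$ by projecting the $\sfF\oplus\NSR$ Whittaker vector onto the highest weight vector $|P,n\ra_0$ of the corresponding $\Vir\oplus\Vir$ summand in \eqref{FxNSRR}, and then to form the ratio $l_n^{-,-}/l_n^{+,+}$. Since the summands $\pi^{n,0}_{\Vir\oplus\Vir}$ are mutually orthogonal and the leading term of the tensor Whittaker vector $|W^{\scriptscriptstyle{(1)}}(\beta^{\scriptscriptstyle{(1)}}z)\ra_n\otimes|W^{\scriptscriptstyle{(2)}}(\beta^{\scriptscriptstyle{(2)}}z)\ra_n$ is exactly $|P,n\ra_0$ times $(\beta^{\scriptscriptstyle{(1)}}z)^{\D^{\scriptscriptstyle{(1)}}_n}(\beta^{\scriptscriptstyle{(2)}}z)^{\D^{\scriptscriptstyle{(2)}}_n}$, Proposition \ref{prop:WhitdecompR} gives
\[
l_n^{\mu,\nu}(P,b)=\lim_{z\to 0}\frac{{}_0\la P,n|\bigl(z^{1/16}|1^{\mu}\otimes W_{\R,\nu}(z)\ra\bigr)}{(\beta^{\scriptscriptstyle{(1)}}z)^{\D^{\scriptscriptstyle{(1)}}_n}(\beta^{\scriptscriptstyle{(2)}}z)^{\D^{\scriptscriptstyle{(2)}}_n}}.
\]
The denominator, the parameters $\beta^{(\eta)}$ of \eqref{eq:beta12}, the weights $\D^{(\eta)}_n$ of \eqref{momentum}, and the vector $|P,n\ra_0$ itself are all independent of the labels $(\mu,\nu)$. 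Hence in the ratio $l_n^{-,-}/l_n^{+,+}$ the (conjecturally complicated, see Conjecture \ref{Conj:ln}) magnitude cancels and only a sign survives; this is precisely why \eqref{lconn} can be established without an explicit formula for the $l_n^{\mu,\nu}$.

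To compute that sign I would use the explicit form \eqref{hwwdecR} of $|P,n\ra_0$ as a monomial in the odd operators $\chi_{-r}^{\mp}=f_{-r}-i\psi_{-r}^{\mp}$ applied to $\overline{|+;\D^{\R},+\ra}$. Taking the Shapovalov conjugate by \eqref{eq:conj:operators} — where the essential input is the sign in $f_r^+=-f_{-r}$ — turns ${}_0\la P,n|$ into the same monomial (up to reordering signs) acting on $\overline{\la +;\D^{\R},+|}$. The two pairings in the numerator then differ only through (i) the starting highest weight vector, $\overline{|+;\D^{\R},+\ra}$ versus $\overline{|-;\D^{\R},-\ra}$, whose relative structure is fixed by \eqref{hwwF}, \eqref{eq:hww_conj} and the indefinite norms $\la 1^+|1^+\ra=-\la 1^-|1^-\ra=1$, and (ii) the fermionic zero modes generated along the Whittaker recursion. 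Because the recursion \eqref{WhitaxRW} defining $|W_{\R,+}\ra$ and $|W_{\R,-}\ra$ is identical in the two sectors and, by Lemma \ref{lemma}, these vectors are $\mathfrak{P}$-eigenvectors of opposite parity, the $z$-dependence and the bulk of the pairing coincide, leaving a purely combinatorial sign.

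The base of the computation is $n=\pm1/4$: expanding $\overline{|+;\D^{\R},+\ra}$ and $\overline{|-;\D^{\R},-\ra}$ in the vectors $|P,\pm1/4\ra_0$ via \eqref{starthwwR} fixes the overall phase and establishes \eqref{lconn} for $n=\pm1/4$ directly. For general $n$ each additional factor $\chi_{-r}^{\mp}$ in \eqref{hwwdecR} contributes one transposition of odd operators together with one conjugation sign, so that the accumulated sign over the $\approx 2n$ factors assembles into $(-1)^{2n-1/2}$, which is the claimed relation.

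The hard part will be the sign and phase bookkeeping in the last two steps: one must combine the minus sign in $f_r^+=-f_{-r}$, the signs from reordering the odd $\chi_{-r}^{\mp}$, the indefinite fermionic norms, and the explicit factors of $i$ appearing in $\chi_r^{\mp}$ and in \eqref{starthwwR}, and verify that they produce exactly $(-1)^{2n-1/2}$ rather than a spurious fourth root of unity. An operator-theoretic shortcut — seeking an even element of $\sfF\oplus\NSR$ carrying $|1^+\otimes W_{\R,+}(z)\ra$ to $|1^-\otimes W_{\R,-}(z)\ra$ — does not work, since the natural candidate $f_0G_0$ violates the Whittaker condition $G_1(\,\cdot\,)=0$: using $\{f_0,G_1\}=0$ and $\{G_0,G_1\}=2L_1$ one finds $G_1 f_0 G_0|1^+\otimes W_{\R,+}(z)\ra=-z\,f_0|1^+\otimes W_{\R,+}(z)\ra\neq 0$, and no polynomial correction repairs it. This confirms that the projection-and-sign-count argument, rather than a single intertwining operator, is the right route.
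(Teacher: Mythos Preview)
Your proposal is an outline rather than a proof: you correctly identify the projection formula
\[
l_n^{\mu,\nu}(P,b)\;=\;{}_0\la P,n\,|\,1^{\mu}\otimes W_{\R,\nu}(1)\ra
\]
and you argue on symmetry grounds that the ratio $l_n^{-,-}/l_n^{+,+}$ should be a pure sign, but you explicitly leave the ``hard part'' --- the actual sign bookkeeping through all the $\chi_{-r}^{\mp}$ factors, the indefinite norms, and the $i$'s in \eqref{starthwwR} --- undone. That computation is the whole content of the proposition, so as written this is not yet a proof. There is also a subtle point you slide past: the assertion that ``the $z$-dependence and the bulk of the pairing coincide'' between the $(+,+)$ and $(-,-)$ projections is exactly what needs justification, and your argument via Lemma~\ref{lemma} and the recursion \eqref{WhitaxRW} does not by itself pin down the relative phase of $|W_{\R,+}\ra$ versus $|W_{\R,-}\ra$ inside the module.

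The paper avoids your brute-force sign count by the very shortcut you dismissed --- not an inner intertwiner like $f_0G_0$, but an \emph{outer} one: the relabeling
\[
\overline{|\mu;\D^{\R},\nu\ra}'\;=\;i\nu\,\overline{|-\mu;\D^{\R},-\nu\ra}
\]
is an isomorphism of $\sfF\oplus\NSR$-modules (it is compatible with \eqref{hwwF}, \eqref{hwwR} and preserves $P$). Under this isomorphism the $(+,+)$ Whittaker vector becomes the $(-,-)$ Whittaker vector on the nose, while the highest weight vectors transform as $|P,n\ra'_0=(-1)^{2n-1/2}|P,n\ra_0$, which one reads off directly from \eqref{starthwwR} and \eqref{hwwdecR}. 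Pairing against the same Whittaker vector then gives \eqref{lconn} immediately, with no case-by-case sign tracking. Your remark that $f_0G_0$ fails the Whittaker condition is correct, but it only rules out elements of the algebra; the module automorphism above is the right replacement and collapses your ``hard part'' to a two-line check.
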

\begin{proof}
Coefficients $l_n^{\mu,\nu}(P,b)$ are given by
\begin{equation*}
l_n^{\mu,\nu}(P,b)= {}_{\mu\nu}\la P,n|1^{\mu}\otimes W_{\R,\nu}(1)\ra.
\end{equation*}

We have isomorphism between module $\pi^{\D^{\R}}_{\sfF\oplus\NSR}$ with highest weight vectors $\overline{|\mu;\D^{\R},\nu\ra}$ and
momentum $P$ and the same module with transposed highest weight vectors $\overline{|\mu;\D^{\R},\nu\ra}'=
i\nu\overline{|-\mu;\D^{\R},-\nu\ra}$ and the same momentum $P$, $\mu,\nu=\pm$ (we denote this module by $\pi'^{\D^{\R}}_{\sfF\oplus\NSR}$). This isomorphism
follows from the properties \eqref{hwwF}, \eqref{hwwR} and normalization condition.  
Consider then $\sfF\oplus\NSR$ Whittaker vector $z^{1/16}|1^{+}\otimes W_{\R,+}(z)\ra$ in module $\pi^{\D^{\R}}_{\sfF\oplus\NSR}$.
Due to properties \eqref{WhitaxRW} this vector is also Whittaker vector $z^{1/16}|1^{-}\otimes W_{\R,-}(z)\ra$ in $\pi'^{\D^{\R}}_{\sfF\oplus\NSR}$
and its normalization is standard.

We have also that $\Vir\oplus\Vir$ highest weight vectors $|P,\pm1/4\ra'_{\epsilon}$, $\epsilon=0,1$ in $\pi'^{\D^{\R}}_{\sfF\oplus\NSR}$
given by equality in \eqref{starthwwR} are connected with analogously defined $\Vir\oplus\Vir$ highest weight vectors 
in module $\pi^{\D^{\R}}_{\sfF\oplus\NSR}$ by
\begin{equation*}
|P,\pm1/4\ra'_{\epsilon}=\pm\epsilon|P,\pm 1/4\ra_{\epsilon}, \quad \epsilon=0,1.
\end{equation*}
where we used \eqref{starthwwR}.
Then using \eqref{hwwdecR} we have that
\begin{equation*}
|P,n+1/4\ra'_0=(-1)^{2n}|P,n+1/4\ra_0,
\end{equation*}
where $2n\in\mathbb{Z}$.
%and fact that $|P,n\ra_{\epsilon}=|-P,-n\ra_{\epsilon}$ which follows from construction of free-field representation \eqref{eq.FFR}.
Then
\begin{equation*}
l_{n}^{-,-}(P,b)={}'\la P,n|1^{-}\otimes W_{\R,-}(1)\ra=
(-1)^{2n-1/2\,}{}\la P,n|1^{+}\otimes W_{\R,+}(1)\ra=(-1)^{2n-1/2\,}{}l_n^{+,+}(P,b)
\end{equation*}
which completes the proof.
\end{proof}

Coefficients $l_n^{+,+}(P,b)$ possibly could be calculated in the way analogous to that in \cite[Sec.3.3.]{KGP} or in \cite{HJ}. It is natural to conjecture 
\begin{conj} \label{Conj:ln}
Coefficient $l_n^{+,+}(P,b)$ is given by expression
\begin{equation}
l_n^{+,+}(P,b)^2=\frac{2^{4n^2-1}(\beta^{\scriptscriptstyle{(1)}})^{-\D^{\scriptscriptstyle{(1)}}_n}(\beta^{\scriptscriptstyle{(2)}})^{-\D^{\scriptscriptstyle{(2)}}_n}}{s_{\textrm{odd}}(2P,2n) s_{\textrm{odd}}(2P+Q,2n)}, \label{l_n_irrR}
\end{equation} 
where function $s_{\textrm{odd}}$ was defined in \eqref{eq:s:even:odd}.
\end{conj}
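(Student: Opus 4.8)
The plan is to compute $l_n^{+,+}(P,b)$ directly from the overlap expression ${}_{\mu\nu}\la P,n|1^{\mu}\otimes W_{\R,\nu}(1)\ra$ recorded in the proof of Proposition \ref{prop:signs}, mimicking the $\BNS$-sector calculation of \cite{HJ} and \cite[Sec.~3.3]{KGP} that produced \eqref{l_n_irr}. By Proposition \ref{prop:signs} and Remark \ref{rem:proj} it suffices to work entirely in the even subspace, so one takes the even highest weight vectors $|P,n\ra_0$ from \eqref{hwwdecR}, written as a string $\prod_{r=0}^{2m}\chi_{-r}^{\mp}$ (carrying an extra $\chi_0^{\pm}$ exactly when $m$ is a half-integer) applied to $\overline{|+;\D^{\R},+\ra}$, with $n=\pm(1/4+m)$. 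Since only $(l_n^{+,+})^2$ enters the bilinear relations, one may ignore the sign ambiguity coming from the normalization $\la P,n|P,n\ra=1$ and aim for the squared coefficient alone.

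The principal step is to set up a recursion in $m$. Passing from index $m$ to $m+1$ (so the $\BR$-index $n=1/4+m$ increases by $1$, the parity of $m$ being preserved) appends the two factors $\chi_{-(2m+1)}^{\mp}\chi_{-(2m+2)}^{\mp}$ to the string in \eqref{hwwdecR}. Commuting these operators to the right through the remaining $\chi$'s and finally through the Whittaker vector, using the commutation relations \eqref{chicomm} together with the defining relations \eqref{WhitaxRW} and the zero-mode actions \eqref{hwwF}, \eqref{hwwR}, should express the ratio $(l_{m+1}^{+,+})^2/(l_m^{+,+})^2$ as an explicit product of linear factors of the form $(2P+ib+jb^{-1})$ and $(2P+Q+ib+jb^{-1})$ coming from the two new shells $i+j\in\{4m+1,4m+3\}$. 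The content to be verified is that this product matches exactly the ratio of the right-hand side of \eqref{l_n_irrR} at $m+1$ to its value at $m$; the factor $(\beta^{\scriptscriptstyle{(1)}})^{-\D^{\scriptscriptstyle{(1)}}_n}(\beta^{\scriptscriptstyle{(2)}})^{-\D^{\scriptscriptstyle{(2)}}_n}$ is forced by the rescalings $\beta^{(\eta)}$ in \eqref{WhitdecompR}, as in the $\BNS$ case, and the odd-parity restriction $i+j\equiv1\bmod 2$ in \eqref{eq:s:even:odd} must emerge from the commutator combinatorics.

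Because the recursion $m\mapsto m+1$ preserves the parity of $m$, I would anchor it with the two base cases $n=\pm1/4$ and $n=\pm3/4$, both of which have low level and are computable by hand. For $n=\pm1/4$ one uses $|P,\pm1/4\ra_0\propto \tfrac1{\sqrt2}\chi_0^{\mp}\chi_0^{\pm}\overline{|+;\D^{\R},+\ra}$ from \eqref{starthwwR}, and the overlap with the level-zero part of $|1^+\otimes W_{\R,+}(1)\ra$ reduces to manipulations of the zero modes $f_0,\psi_0,G_0$ via \eqref{hwwF}, \eqref{hwwR}; this should reproduce the normalization $2^{4n^2-1}=2^{-3/4}$ together with the empty product $s_{\textrm{odd}}(2P,1/2)\,s_{\textrm{odd}}(2P+Q,1/2)=2^{1/4}$ predicted by \eqref{l_n_irrR}. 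The case $n=\pm3/4$ provides both the second anchor and a consistency check on the recursion.

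The hard part, and the reason the statement remains a conjecture, is controlling the combinatorics of commuting these long strings of $\chi$-operators in the $\BR$ sector. Unlike the $\BNS$ computation, the integer-moded operators include the zero mode $\chi_0$, and the fermionic zero modes $f_0,\psi_0,G_0$ entangle the four highest weight vectors $\overline{|\mu;\D^{\R},\nu\ra}$ through \eqref{hwwF}, \eqref{hwwR}; consequently the bookkeeping that isolates precisely the \emph{odd}-parity product of \eqref{eq:s:even:odd} (rather than the even one occurring in \eqref{l_n_irr}) is substantially more delicate, and it is this zero-mode accounting that we have not carried through. An alternative would be to derive \eqref{l_n_irrR} from a free-field screening/contour-integral representation in the spirit of \cite{HJ}, or to match it against the $\BR$-sector instanton partition function via the AGT correspondence \cite{AGT}; each of these routes again hinges on tracking the zero-mode contributions correctly.
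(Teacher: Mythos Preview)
The paper does not prove this statement: it is labeled a Conjecture, and the only evidence offered is computer verification of \eqref{l_n_irrR} for $|n|\le 9/4$ together with a check of the first relation in \eqref{Okamotoblock} to order $z^{\Delta^{\R}+1/16+10}$. The paper explicitly says the coefficients ``possibly could be calculated in the way analogous to that in \cite[Sec.~3.3]{KGP} or in \cite{HJ}'' but does not carry this out.

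Your proposal is therefore not competing with any argument in the paper; rather, you are sketching precisely the route the authors gesture toward and then abandon. Your assessment is accurate: the recursion-in-$m$ strategy modeled on the $\BNS$ computation is the natural attack, the $\beta^{(\eta)}$ prefactors are indeed forced by the rescaling in \eqref{WhitdecompR}, and the genuine obstruction is the zero-mode sector ($f_0,\psi_0,G_0$ mixing the four vectors $\overline{|\mu;\D^{\R},\nu\ra}$), which is absent in $\BNS$ and is what should convert $s_{\textrm{even}}$ into $s_{\textrm{odd}}$. You correctly flag this as the point where the argument is incomplete, and that matches the paper's own position. In short: your write-up is an honest proof \emph{plan} with the gap clearly identified, and the paper contains nothing stronger to compare it against.
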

This was checked by computer calculations for $|n|\leq 9/4$. More precisely, we check relations first relation in \eqref{Okamotoblock} with these coefficients up to $z$ in power $\Delta_{\R}+1/16+10$. 

%In this approach relations (4.49) remain nontrivial except cases of $m=2n^2-1/8$.

Now we want to calculate $\widehat{\mathcal{F}_k}$ defined by
\begin{equation}
\widehat{\mathcal{F}_k}^{\mu,\nu}=z^{1/16}\la 1^{\mu}\otimes W_{\R,\nu}(1)|H^{k}|1^+\otimes W_{\R,+}(z)\ra, \label{eq:FhatR}
\end{equation}
where even operator $H$ is given by \eqref{eq:Hdefin}, $\mu,\nu=\pm$.
We choose ket Whittaker vector with signs $+$ because only relative signs of bra and ket Whittakers are interesting (this follows from isomorphism mentioned in proof of Proposition \ref{prop:signs} and analogous
isomorphism obtained by transposition $\overline{|\mu;\D^{\R},\nu\ra}'=\overline{|\mu;\D^{\R},-\nu\ra}$).
It follows from Lemma \ref{lemma} that only functions $\widehat{\mathcal{F}_k}^{\mu,\mu}$ are nonzero and
we will denote them simply by $\widehat{\mathcal{F}_k}^{\mu}$.

We can calculate $\widehat{\mathcal{F}_k}$ using r.h.s. of \eqref{WhitdecompR}
\begin{multline}
\sum_{k=0}^\infty \widehat{\mathcal{F}}_k\frac{\alpha^k}{k!}=z^{1/16}\la 1^{\mu}\otimes W_{\R,\mu}(1)|e^{\alpha H}|1^+\otimes W_{\R,+}(z)\ra=\\=\!\!\!\!\!\!\!\sum_{2n+1/2\in\mathbb{Z}}\!\!\!\!\!\!\!l_n^{\mu,\mu}(P,b) l_n^{+,+}(P,b)
\la W^{(1)}_n(\beta^{(1)})|e^{\alpha bL_0^{(1)}}|W^{(1)}_n(\beta^{(1)}z)\ra \la W^{(2)}_n(\beta^{(2)})|e^{\alpha b^{-1}L_0^{(2)}}|W^{(2)}_n(\beta^{(2)}z)\ra=\\=
\sum_{2n+1/2\in\mathbb{Z}}l_n^{\mu,\mu}(P,b)l_n^{+,+}(P,b)\mathcal{F}_n^{(1)}(\beta^{(1)}ze^{\alpha b})\mathcal{F}_n^{(2)}(\beta^{(2)}ze^{\alpha b^{-1}})=\\
=\sum_{k=0}^{\infty}\sum_{2n+1/2\in\mathbb{Z}}l_n^{\mu,\mu}(P,b)l_n^{+,+}(P,b)D^k_{b,b^{-1}[\log z]}(\mathcal{F}_n^{(1)}(\beta^{(1)}z),\mathcal{F}_n^{(2)}(\beta^{(2)}z))\frac{\alpha^k}{k!},
\label{FhatR}
\end{multline}
where we used definition of generalized Hirota differential \eqref{genhir}.

%Choosing which submodule is even or odd in $\pi^{\R}_{\sfF\oplus\NSR}$ is unessential. That's because
%there exist an isomorphism $\mathfrak{O}_{\NSR}$ between both submodules which interchange highest weight vectors $|\D^{\R},+\ra\leftrightarrow|\D^{\R},-\ra$ 
%and commutes with $\sfF\oplus\NSR$ algebra. 

%Analogously to Proposition \ref{Whitdecomp} we have decomposition

%where $|W^{\scriptscriptstyle{(1)}}\ra_n\otimes |W^{\scriptscriptstyle{(2)}}\ra_n \in \pi^{\epsilon,n}_{\Vir\oplus\Vir}, \quad \mu,\nu=\pm, \quad \epsilon=\frac{1-\mu\nu}{2}$ 
%with the same $\beta^{(\eta)}, \eta=1,2$.  We will see below that $l_n^{\mu,\nu}(P,b), \mu,\nu=\pm$ are simply connected. 

%Coefficients $l_n^{\mu,\nu}(P,b)$ are defined up to the sign. We will partially fix this sign by $l_n(P,b)^{\mu,\pm}=l_n(P,b)^{\mu,\mp}$. We can do this due to
%isomorphism $\mathfrak{O}_{\NSR}$. There exists other isomorphism $\mathfrak{O}_{\sfF}=\sqrt{-2} f_0$, which interchanges $|1^+\ra\leftrightarrow |1^-\ra$ and $P\rightarrow -P$, 
%from which existence it follows that
%\begin{equation}
%l_n^{+,\nu}(P,b)=l_n^{-,\nu}(-P,b)
%\end{equation}

On the other hand we could rewrite $H$ in terms of $\sfF\oplus\NSR$ generators using \eqref{Vir12}
\begin{equation}
\label{eq:HNSRR}
H=Q\left(\sum\limits_{r\in\mathbb{Z}}r:f_{-r}f_{r}:+1/8\right)-\sum\limits_{r\in\mathbb{Z}}f_{-r}G_r,
\end{equation}

%Due to above mentioned and commutativity of operator $H=bL_0^{(1)}+b^{-1}L_0^{(2)}$ with $\mathfrak{P}$ we will use below only even Whittaker vectors.
%It appears that this is only nontrivial case, because $\la W_{\R,\mp} |W_{\R,\pm} \ra \propto \la \D^{\R} \mp|\D^{\R}\pm\ra=0$.

%Let us obtain equations in the same way, as in $\BNS$ case. In one hand analogs of \eqref{eq:FhatHirota} hold.
%Operator $H$ in terms of the $\mathsf{F}\oplus\NSR$ generators

We will use calculations of $(H-Q/8)^k z^{1/16}|1^{\pm}\otimes W_{\R,\pm}(z)\ra$ for $k\leq 3$
\begin{align*}
(H-Q/8)z^{1/16}|1^{+}\ra\otimes |W_{\R,+}\ra=&-f_0 G_0 z^{1/16}|1^{+}\ra\otimes|W_{\R,+}\ra\\
(H-Q/8)^2z^{1/16}|1^{+}\ra\otimes |W_{\R,+}\ra=&-1/2(L_0-c_{\nsr}/16)z^{1/16}|1^{+}\ra\otimes |W_{\R,+}\ra
-\frac{z}{\sqrt2}f_{-1}z^{1/16}|1^{-}\ra\otimes |W_{\R,+}\ra\\
(H-Q/8)^3z^{1/16}|1^{+}\ra\otimes |W_{\R,+}\ra=&1/2f_0G_0 (L_0-c_{\nsr}/16)z^{1/16}|1^{+}\ra\otimes |W_{\R,+}\ra-
-Q\frac{z}{\sqrt2}f_{-1}z^{1/16}|1^{-}\ra\otimes |W_{\R,+}\ra-\\-\frac{z}{\sqrt2}G_{-1}z^{1/16}|1^{-}\ra\otimes |W_{\R,+}\ra+&
\frac{z}{2}f_{-1}z^{1/16}|1^{+}\ra\otimes G_0 |W_{\R,+}\ra
%\\
%(H-Q/8)^4z^{1/16}|1^{+}\ra\otimes |W_{\R,+}\ra=&1/4 (L_0-c_{\nsr}/16)^2 z^{1/16}|1^{+}\ra\otimes |W_{\R,+}\ra+
%\frac{z}{\sqrt2}f_0 G_0G_{-1}z^{1/16}|1^{-}\ra\otimes |W_{\R,+}\ra+
% \ldots,
\end{align*}
%where \ldots stays for terms, which vanish after multiplication on $\la 1^{+}|\otimes |W_{\R,+}|$. 
And after the multiplication on $\la 1^{+}|\otimes |W_{\R,+}|$ we get 
\begin{equation}
\begin{aligned}
\widehat{\mathcal{F}_0}^{'+}&=z^{1/16}\mathcal{F}_{\R},&\qquad \widehat{\mathcal{F}_1}^{'-}&=-\frac{iP}2 z^{1/16} \kr{F}_{\R},\\
\widehat{\mathcal{F}_2}^{'+}&=-1/2z^{1/16}\left(z\frac{d}{dz}-c_{\nsr}/16\right)\kr{F}_{\R},& \qquad \widehat{\mathcal{F}_3}^{'-}&=
-\frac{iP}{4}z^{1/16}\left(z\frac{d}{dz}-c_{\nsr}/16\right)\kr{F}_{\R}, 
\label{eq:hatFR}
\end{aligned}
\end{equation}
where we denote by $\widehat{\mathcal{F}_k}^{'\mu}$ modified $\widehat{\mathcal{F}_k}^{\mu}$ with $H\mapsto H-Q/8$ in definition \eqref{eq:FhatR}.
We also shorten notation $\kr{F}_{c_{\nsr}}(\Delta^{\R}|z)$ to $\kr{F}_{\R}$.

Now we are interested in two relations following from \eqref{eq:hatFR} by elimination of $\kr{F}_{\R}$
\begin{equation}
\begin{aligned}\label{Okamotoblock}
\widehat{\mathcal{F}_2}^{'+}=-\frac12\left(z\frac{d}{dz}-c_{\nsr}/16-1/16\right)\widehat{\mathcal{F}_0}^{'+} \\
\widehat{\mathcal{F}_3}^{'-}=-\frac12\left(z\frac{d}{dz}-c_{\nsr}/16-1/16\right)\widehat{\mathcal{F}_1}^{'-}.
	\end{aligned}
\end{equation}
Due to \eqref{FhatR} these relations can be viewed as a bilinear differential relations on $\Vir$ conformal blocks.

As in $\BNS$ sector we set $c_{\nsr}=1 \Leftrightarrow b=i,\,c^{(\eta)}=1,
\eta=1,2$. We also substitute $P=2i\sg$, $z\mapsto 4z$. In this specialization $\widehat{\mathcal{F}_k}^{'\mu}=\widehat{\mathcal{F}_k}^{\mu}$ and using \eqref{FhatR} the relations \eqref{Okamotoblock} 
\eqref{FhatR} turn to
\begin{multline*}
\sum_{2n+1/2\in\mathbb{Z}}l_n^{+,+}(\sg)^2D^2_{[\log z]}(\mathcal{F}((\sg+n)^2|z),\mathcal{F}((\sg-n)^2|z))=\\=-\frac12\left(z\frac{d}{dz}-\frac18\right)\sum_{2n+1/2\in\mathbb{Z}}l_n^{+,+}(\sg)^2\mathcal{F}((\sg+n)^2|z)\mathcal{F}((\sg-n)^2|z),
\end{multline*} \vspace{-0.6cm}
\begin{multline*}
\sum_{2n+1/2\in\mathbb{Z}}(-1)^{2n-1/2}l_n^{+,+}(\sg)^2D^3_{[\log z]}(\mathcal{F}((\sg+n)^2|z),\mathcal{F}((\sg-n)^2|z))=\\
=
-\frac12\left(z\frac{d}{dz}-\frac18\right)\sum_{2n+1/2\in\mathbb{Z}}(-1)^{2n-1/2}l_n^{+,+}(\sg)^2D^1_{[\log z]}(\mathcal{F}((\sg+n)^2|z),\mathcal{F}((\sg-n)^2|z)).
\end{multline*}
where we used \eqref{lconn} and shorten notation $l_n^{+,+}(2i\sg,i)$ to $l_n^{+,+}(\sg)$.
Since $l^2_{-n}(\sg)=l^2_n(\sg)$ (due to \eqref{l_n_irrR} in case $Q=0$) we can divide these sums on 2 and get 
\begin{multline}
\sum_{n\in\mathbb{Z}}l_{n+1/4}^{+,+2}(\sg)D^2_{[\log z]}(\mathcal{F}((\sg+n+1/4)^2|z),\mathcal{F}((\sg-n-1/4)^2|z))=\\=
\frac12\left(z\frac{d}{dz}-\frac18\right)\sum_{n\in\mathbb{Z}}l_{n+1/4}^{+,+}(\sg)^2\mathcal{F}((\sg+n+1/4)^2|z)\mathcal{F}((\sg-n-1/4)^2|z),\label{d8rel1}
\end{multline}\vspace{-0.6cm}
\begin{multline}
\sum_{n\in\mathbb{Z}}l_{n+1/4}^{+,+2}(\sg)D^3_{[\log z]}(\mathcal{F}((\sg+n+1/4)^2|z),\mathcal{F}((\sg-n-1/4)^2|z))=\\=
-\frac12\left(z\frac{d}{dz}-\frac18\right)\sum_{n\in\mathbb{Z}}l_{n+1/4}^{+,+2}(\sg)D^1_{[\log z]}(\mathcal{F}((\sg+n+1/4)^2|z),\mathcal{F}((\sg-n-1/4)^2|z)),\label{d8rel2}
\end{multline}

Now we deduce Okamoto-like equations \eqref{difd8eq_1}, \eqref{difd8eq_2} on $\tau$ function \eqref{Kiev}.
Let us substitute this $\tau$ function decomposition
into \eqref{difd8eq_1} and \eqref{difd8eq_2} and collect terms
with the same powers of $s$. The vanishing condition of the $s^m$
coefficient has the form
\begin{multline*}
\sum_{n\in\mathbb{Z}} C(\sg+n+m)C(\sg-1/2-n) D^2_{[\log z]}(\mathcal{F}((\sg+n+m)^2|z),\mathcal{F}((\sg-1/2-n)^2|z))=\\=
\frac12\left(z\frac{d}{dz}-\frac18\right)\sum_{n\in\mathbb{Z}} C(\sg+n+m)C(\sg-1/2-n) \mathcal{F}((\sg+n+m)^2|z)\mathcal{F}((\sg-1/2-n)^2|z),
\end{multline*}\vspace{-0.6cm}
\begin{multline*}
\sum_{n\in\mathbb{Z}} C(\sg+n+m)C(\sg-1/2-n) D^3_{[\log z]}(\mathcal{F}((\sg+n+m)^2|z),\mathcal{F}((\sg-1/2-n)^2|z))=\\
=
\frac12\left(z\frac{d}{dz}-\frac18\right)\sum_{n\in\mathbb{Z}} C(\sg+n+m)C(\sg-1/2-n) D^1_{[\log z]}(\mathcal{F}((\sg+n+m)^2|z)\mathcal{F}((\sg-1/2-n)^2|z)).
\end{multline*}
Clearly this $s^m$ coefficient coincides with the $s^{m+1}$
coefficient after the shift $\sg\mapsto \sg+1/2$.  Therefore it is sufficient
to prove the vanishing of $s^0$ coefficient. 

Let us use the substitution $\sg\mapsto \sg+1/4$ in the $s^0$ relation 
\begin{equation*}
\begin{aligned}
&\sum_{n\in\mathbb{Z}} C(\sg+n+1/4)C(\sg-1/4-n) D^2_{[\log z]}(\mathcal{F}((\sg+n+1/4)^2|z),\mathcal{F}((\sg-n-1/4)^2|z))=\\&=
\frac12\left(z\frac{d}{dz}-1/8\right)\sum_{n\in\mathbb{Z}} C(\sg+n+1/4)C(\sg-1/2-n) \mathcal{F}((\sg+n+1/4)^2|z)\mathcal{F}((\sg-n-1/4)^2|z),
\end{aligned}
\end{equation*}
\begin{equation*}
\begin{aligned}
&\sum_{n\in\mathbb{Z}} C(\sg+n+1/4)C(\sg-n-1/4) D^3_{[\log z]}(\mathcal{F}((\sg+n+1/4)^2|z),\mathcal{F}((\sg-n-1/4)^2|z))=\\&=
\frac12\left(z\frac{d}{dz}-1/8\right)\sum_{n\in\mathbb{Z}} C(\sg+n+1/4)C(\sg-n-1/4) D^1_{[\log z]}(\mathcal{F}((\sg+n+1/4)^2|z)\mathcal{F}((\sg-n-1/4)^2|z)).
\end{aligned}
\end{equation*}
These relations coincide with relations \eqref{d8rel1}, \eqref{d8rel2} up to coefficients.
Using shift relation on $\G(\sg)$ and \eqref{l_n_irrR} we obtain 
\begin{equation}
\frac{C(\sg+n+1/4)C(\sg-n-1/4)}{C(\sg+1/4)C(\sg-1/4)}=\frac{1}{\prod\limits_{k=0}^{2|n+1/4|-3/2}((1/2+k)^2-4\sg^2)^{2(2|n+1/4|-1/2-k)}}=
2^{7/8}4^{-\D^R}l^{+,+2}_{n+1/4}(\sg),
\end{equation}
which completes the proof.

%\section{Further questions}
\section{Acknowledgments}
We thank  P. Gavrylenko, A. Its, O. Lisovyy, A. Marshakov, H. Sakai and A. Sciarappa  for interest in
our work and discussions.

This work has been funded by the Russian Academic Excellence Project
’5-100’.  A.S. was also supported in part by joint NASU-CNRS project F14-2016, M.B. was also supported in part  by Young Russian Mathematics award and RFBR grant mol\_a\_ved 15-32-20974.
Study of algebraic solutions was performed under the grant of Russian Science Foundation 14-050-00150 for IITP.

\noindent \textsc{Landau Institute for Theoretical Physics, Chernogolovka, Russia,\\
	Skolkovo Institute of Science and Technology, Moscow, Russia,\\
	National Research University Higher School of Economics, Moscow, Russia,\\
	Institute for Information Transmission Problems, Moscow, Russia,\\
	Independent University of Moscow, Moscow, Russia}

\emph{E-mail}:\,\,\textbf{mbersht@gmail.com}\\

\noindent\textsc{National Research University Higher School of Economics, Moscow, Russia\\
	Skolkovo Institute of Science and Technology, Moscow, Russia,\\
	Bogolyubov Institute for Theoretical Physics, Kiev, Ukraine}

\emph{E-mail}:\,\,\textbf{shch145@gmail.com}
\end{document}